%% file: courtade-kumar.tex
\documentclass[journal]{IEEEtran}

\usepackage{amsmath,amssymb,amsthm,mathtools}
\usepackage{graphicx}
\usepackage{booktabs,array}
\usepackage{cite}
\usepackage{microtype}
\usepackage[hidelinks]{hyperref}
\usepackage{url}
\hypersetup{
  pdftitle={The Most Informative Bit and Beyond: A Proof of the Courtade--Kumar Conjecture and Multibit Extensions},
  pdfauthor={Hessam Mahdavifar and Ahmad Beirami}
}

\newtheorem{theorem}{Theorem}
\newtheorem{lemma}[theorem]{Lemma}
\newtheorem{appendixlemma}[theorem]{Lemma}
\newtheorem{proposition}[theorem]{Proposition}

\newtheorem{conjecture}[theorem]{Conjecture}
\theoremstyle{definition}

\theoremstyle{remark}
\newtheorem{remark}[theorem]{Remark}

\newcommand{\E}{\mathbb E}
\newcommand{\Ent}{\operatorname{Ent}}
\newcommand{\Inf}{\operatorname{Inf}}
\newcommand{\atanh}{\operatorname{atanh}}
\newcommand{\sech}{\operatorname{sech}}

\newcommand{\diff}{\,\mathrm d}
\newcommand{\Lapl}{\mathcal L}
\newcommand{\degree}{d}
\newcommand{\logodds}{\ell_\rho}
\newcommand{\weight}{\omega}
\newcommand{\Klsi}{K_{\mathrm{LS}}}
\newcommand{\cutoff}{v_{\mathrm{cut}}}
\newcommand{\edgewidth}{d_{\mathrm{edge}}}
\newcommand{\corr}{\operatorname{corr}}
\newcommand{\unitcost}{\operatorname{yield}}

\begin{document}

\title{The Most Informative Bit and Beyond: \\A Proof of the Courtade--Kumar Conjecture \\and Multibit Extensions}

\author{Hessam~Mahdavifar and Ahmad~Beirami
\thanks{H. Mahdavifar is with the Department of Electrical and Computer
Engineering, Northeastern University, Boston, MA 02115 USA
(e-mail: h.mahdavifar@northeastern.edu).}
\thanks{A. Beirami is with Fidian, San Francisco,
CA, USA (e-mail: ab@fidian.ai).}%
}

\maketitle

\begin{abstract}
Let \(X=(X_1,\ldots,X_n)\) be uniform on the Boolean cube, and let \(Y\)
be obtained by passing its coordinates independently through a binary
symmetric channel with crossover probability \(p\). A natural quantization
question asks how much information about \(Y\) can be retained by a
function \(Q(X)\) constrained to \(k\) output bits. For \(k\leq n\),
reporting \(k\) input coordinates retains \(k(1-h_2(p))\) bits of
information, where \(h_2\) denotes the binary entropy function. This
makes coordinate projections the natural benchmark.

The case \(k=1\) is the Courtade--Kumar conjecture, which posits that every
Boolean function \(f\) satisfies
\[
    I(f(X);Y)\leq 1-h_2(p),
\]
with equality attained by coordinate functions. We prove this conjecture
for every dimension and crossover probability, without any balance
assumption on \(f\). Our proof views the gap relative to the coordinate
benchmark as a quantity that evolves with the channel correlation. After
a monotone rearrangement, an entropy-flow identity relates its derivative
to the edge boundary of the corresponding Boolean decision set. Combining
a sharp local entropy comparison with Fourier analysis on the Boolean cube
and logarithmic Sobolev estimates for pivotal sets yields a global
differential inequality: if the gap is positive at any correlation level,
then it must grow as the noise decreases. This contradicts the noiseless
endpoint, where the gap is nonpositive.

For \(k\geq8\), however, a distinctly coding-theoretic behavior emerges.
We construct quantizers from shortened Hamming codes whose covering
structure beats the coordinate-projection benchmark for \(k=8\), and a
direct-sum argument extends this failure to every \(k\geq8\). Thus, unlike
in the single-bit case, structured coding can retain more information than
simple coordinate selection. Finally, searches within structured code
families for \(2\leq k\leq7\), including exhaustive checks in selected
families, and Monte Carlo searches over seven-bit random codebooks find
no violation below eight bits. These findings lead us to conjecture that
coordinate projections remain optimal for \(2\leq k\leq7\).
\end{abstract}

\begin{IEEEkeywords}
Binary symmetric channel, Boolean functions, Courtade--Kumar conjecture,
Fourier analysis, Hamming codes, logarithmic Sobolev inequality, mutual
information, quantization.
\end{IEEEkeywords}

\input{introduction}
\input{mechanism}
\input{multibit-extension}
\input{conclusion}
\input{ai-disclosure}

\appendices
\input{scalar}
\input{centered-support}
\input{pivotal-channel-checks}
\input{low}
\input{endpoints}

\input{verification}

\input{references}
\end{document}

%% file: introduction.tex
\section{Introduction}
\label{sec:introduction}

Let \(X=(X_1,\ldots,X_n)\) be a sequence of independent uniform binary
random variables, and let \(Y\) be obtained by passing every coordinate
through a binary symmetric channel (BSC) with crossover probability
\(p\in[0,1/2]\). A Boolean summary \(f(X)\) retains one bit from the
input. Courtade and Kumar conjectured that the mutual information between
this summary and the complete noisy observation is maximized by retaining
one of the original coordinates~\cite{CK}. Equivalently,
\begin{equation}
 I(f(X);Y)\leq 1-h_2(p),
 \label{eq:intro-ck}
\end{equation}
where \(h_2\) is the binary entropy function. Equality is attained by any
\textit{dictator} defined as \(f(X)=X_i\) for some \(i \in \{1,2,\dots,n\}\).

The problem is different from maximizing correlation or noise
stability. The observation \(Y\) is an \(n\)-dimensional random vector,
and information carried jointly by its coordinates need not decompose
into coordinatewise contributions. For example, the parity of two noisy
bits can contain information that is absent from either noisy bit alone.
Consequently, results for two Boolean outputs~\cite{AGKN,PPM} or for sums
of coordinatewise mutual informations~\cite{JW} do not directly imply
\eqref{eq:intro-ck}. The output \(f(X)\) may also be biased, which removes
several symmetries available in the balanced case.

This paper establishes \eqref{eq:intro-ck} in full generality. The proof
uses the correlation parameter \(\rho=1-2p\). After monotone sorting, the
mutual information becomes an entropy functional of the posterior mean
\(T_\rho f\). Its derivative admits an edge decomposition. The central
step is to strengthen the entropy-constrained edge envelope by retaining
the squared angular displacement lost in a Cauchy--Schwarz comparison.
The pivotal sets of the same Boolean function supply the Fourier and
logarithmic-Sobolev control needed to pay for this correction. The final
argument is dynamical: wherever a function has more information than a
dictator, its advantage must increase with \(\rho\). Such an advantage
cannot return to its nonpositive value at \(\rho=1\).

A Lean~4 formalization of our one-bit proof, together with its
definitions, pinned dependencies, and verification instructions, is
available in the project's \href{https://github.com/hessammahdavifar/most-informative-bit/tree/main/lean}{\texttt{lean} directory}.\footnote{\url{https://github.com/hessammahdavifar/most-informative-bit/tree/main/lean}}
Its exported statement uses finite-distribution mutual information in
bits and covers every crossover probability \(p\in[0,1]\).

We also study the natural vector-valued extension in which \(f(X)\)
contains \(k\) bits. The scalar theorem does not tensorize through the
chain rule because conditioning on preceding output bits destroys the
uniform input law. In fact, the coordinate benchmark is false for
sufficiently large fixed \(k\). We give explicit shortened-Hamming
counterexamples for \(k=8,9,10\), extend them to every \(k\geq8\), and
formulate a threshold conjecture for \(2\leq k\leq7\).

\subsection{Related Work}

Courtade and Kumar introduced the full-vector problem and proved several
structural reductions, including the monotone sorting operation used
below~\cite{CK}. The conjecture stimulated a sequence of Fourier,
functional-inequality, and information-geometric approaches. The
Friedgut--Kalai--Naor theorem identifies the rigidity of Boolean functions
whose Fourier mass is concentrated on the first two levels~\cite{FKN}.
Samorodnitsky established the conjecture in a dimension-free high-noise
regime and developed strengthened inequalities for the entropy of noisy
Boolean functions~\cite{Sam,SamEntropy}. Ordentlich, Shayevitz, and
Weinstein obtained sharper balanced high-noise bounds using Fourier
analysis and hypercontractivity~\cite{OSW}, while Yang and Wesel gave a
complementary calculus argument in a dimension-dependent neighborhood of
complete noise~\cite{YW}.

Yu developed a general \(\Phi\)-stability framework and subsequently
proved broad local-optimality and balanced finite-range results using
quantitative Fourier estimates~\cite{YuPhi,Yu,YuSharp}. Javanmard and
Woodruff proved the dictator bound for the sum of the individual-coordinate
mutual informations and sharpened the high-noise analysis~\cite{JW}; the
related Boolean-hull and moment relaxations are discussed in~\cite{Gemini}.
Li and M\'edard connected the problem to posterior moments and
noninteractive correlation distillation~\cite{LM}, and Barnes and
\"Ozg\"ur showed that the balanced conjecture is essentially equivalent
to a symmetrized Li--M\'edard formulation~\cite{BO}.

A second line of work seeks stronger entropy or isoperimetric statements.
Anantharam, Bogdanov, Chakrabarti, Jayram, and Nair proposed a Hellinger
inequality implying the conjecture~\cite{Hellinger}. Chen and Nair placed
this proposal in an ordered family of entropy inequalities and related
the low-noise limits to sharp cube isoperimetry~\cite{CN}. The critical
square-root isoperimetric inequality was recently established by Durcik,
Ivanisvili, Roos, and Xie~\cite{DIRX}, confirming the balanced low-noise
limit predicted by the Hellinger conjecture. This limiting result does
not by itself establish the finite-noise Shannon inequality. Other approaches use principal inertia components~\cite{PIC}, adjacency
events~\cite{HOS}, Gaussian analogues~\cite{KOW}, and stochastic
interpolation~\cite{EMR}.

The closest antecedent to the present proof is the differential-equation
framework of Chen, Gohari, and Nair~\cite{CGN}, introduced in 2025.
Their entropy-flow identity
and entropy-constrained two-point extremal formula are used here. 

Three concurrent works also establish the scalar conjecture in full
generality~\cite{CGJLMNW,KyTran,KramerSaglam}. Chen, Gohari, Javanmard, Lin, Mirrokni,
Nair, and Woodruff close the differential-equation program through an
unrestricted four-moment Bellman inequality for a hybrid potential,
using analytic reductions together with certified interval arithmetic.
Ky and Tran instead develop an entropy-production and spectral framework.
Their method is substantially closer to ours: both arguments use monotone
compression, edgewise entropy-production estimates, Fourier control of
functions with diffuse coordinate dependence, separate channel regimes,
and a differential contradiction at the noiseless endpoint. The
quantitative implementations differ. Our proof centers on an angular
edge correction and logarithmic-Sobolev estimates for pivotal sets,
whereas Ky and Tran use a profile-clock integration and selected-coordinate
spectral estimates. In another concurrent work, Kramer and
Saglam~\cite{KramerSaglam} give a substantially shorter proof based on
induction on the dimension, applied to a strengthened inequality that
retains the output mean. We have independently verified their proof in
Lean. Finally, neither of~\cite{CGJLMNW,KyTran,KramerSaglam} considers the \(k\)-bit extension
studied in the present paper.

As mentioned above, the present paper develops the vector-valued extension in
which the quantizer reports \(k\) bits. For this problem, Chandar and
Tchamkerten characterized the positive-rate asymptotics and
gave Hamming- and Golay-code counterexamples with \(k\geq11\)~\cite{CT}.
Huleihel and Ordentlich proved coordinate optimality at the complementary
endpoint \(k=n-1\)~\cite{HO}. The finite-block behavior observed in our
code search is also related qualitatively to recent results showing that
structured Reed--Solomon-based codebooks can outperform random codebooks
under average Hamming covering radius~\cite{RMcover}; the objective in
that work is different from BSC mutual information.

\subsection{Contributions and Organization}

Our first main contribution is Theorem~\ref{thm:CK}, which proves the
Courtade--Kumar conjecture in every dimension and for every crossover
probability, without any balance assumption on the Boolean function. The
proof follows a single dynamical strategy. In natural units, set
$
 I_f(\rho)\coloneqq I(f(X);Y_\rho),
 \phi(\rho)\coloneqq I(X_1;Y_\rho),
 \Delta_f(\rho)\coloneqq I_f(\rho)-\phi(\rho),
$.
We show that a positive advantage over a coordinate projection would have
to grow strictly as the channel becomes less noisy. Section~\ref{sec:mechanism}
introduces the posterior representation of \(I_f\), the required Fourier
notation, and the monotone reduction. In particular,
Lemma~\ref{lem:sorting} shows that coordinatewise sorting preserves any
advantage, allowing the remainder of the proof to exploit the
edge structure of a monotone Boolean decision set. An entropy-flow
identity expresses \(\rho I_f'(\rho)\) as an action functional of the
posterior mean \(T_\rho f\). Section~\ref{sec:edges} then decomposes this
action over edges of the discrete cube. The key estimate,
Lemma~\ref{lem:edges}, strengthens the usual two-point entropy envelope by
retaining a squared angular correction that is exact for dictators. The
rest of the proof shows that the Boolean and spectral structure of \(f\)
always provides enough reserve to pay for this correction.

Writing \(\ell=\operatorname{arctanh}\rho\), we divide the channel into
the small-correlation range \(0\leq\ell\leq31/20\), the intermediate
range \(31/20\leq\ell\leq7/2\), and the near-noiseless tail
\(\ell\geq7/2\). Equivalently, the two junctions are
\(\rho=\tanh(31/20)\) and \(\rho=\tanh(7/2)\). These cutoffs are technical
junctions at which the corresponding estimates meet, rather than
asserted phase transitions in the problem. Proposition~\ref{prop:low}
settles the small-correlation range directly, at every output bias, by
combining a Fourier energy gap with scalar entropy bounds. In the
intermediate range, the pivotal sets of \(f\) supply the needed reserve:
logarithmic Sobolev inequalities control the noisy entropies of their
indicators, and a log-sum argument combines the coordinatewise costs with
the global edge budget. This yields
Proposition~\ref{prop:middle-growth}, according to which
\(\Delta_f(\rho)>0\) implies \(\Delta_f'(\rho)>0\) throughout that range.
The bounded angular reserve is insufficient by itself near the noiseless
endpoint, so the tail argument supplements the same edgewise inequality
with degree-weighted Fourier information. Lemma~\ref{lem:deficit}
identifies the relevant coordinate deficits,
Lemma~\ref{lem:baseline} converts the information advantage into an
entropy-funded spectral baseline, and Lemma~\ref{lem:budget} turns the
resulting budget comparison into strict information growth.
Proposition~\ref{prop:local} separately rules out a counterexample having
a sufficiently influential coordinate. Section~\ref{sec:coverage}
assembles these complementary estimates in
Lemma~\ref{lem:coverage}, which gives the global implication
\[
 \Delta_f(\rho)>0\quad\Longrightarrow\quad\Delta_f'(\rho)>0.
\]
Since \(\Delta_f(1)=H(f(X))-\log 2\leq0\), a positive advantage cannot
persist to the noiseless endpoint; this contradiction completes the proof
of Theorem~\ref{thm:CK}.

Our second contribution, developed in
Section~\ref{sec:multibit-extension}, concerns the natural extension in
which the quantizer reports \(k\) bits and the coordinate benchmark is
\(k(1-h_2(p))\). Proposition~\ref{prop:affine-multibit} shows that affine
linear maps cannot violate this benchmark. Nonlinear nearest-codeword
quantizers built from linear codebooks behave differently: an exact
high-noise expansion, expressed through the coset-leader distribution,
produces shortened-Hamming counterexamples for \(k=8,9,10\); Proposition~\ref{prop:k8-k10-counterexamples}
extends the failure to every \(k\geq8\) by a direct-sum construction. Below eight bits, structured-code searches and seven-bit random-codebook
experiments find no violation. The seven-bit examples also show an
advantage of the tested structured codebooks over the sampled random
codebooks. These findings motivate
Conjecture~\ref{conj:seven-bit-threshold}, which predicts that coordinate
projections remain optimal for \(2\leq k\leq7\). The appendices establish
the scalar and Fourier bounds, endpoint reductions, and channel
comparisons needed for the proof.

%% file: mechanism.tex
\section{Preliminaries}
\label{sec:mechanism}

We fix the entropy and Fourier normalizations used in the one-bit proof.
The posterior representation turns mutual information into a function of
the channel correlation, and monotone reduction permits the edge and
pivotal-set estimates that follow.

\subsection{Notation}

Let \(X\) be uniform on \(\{-1,1\}^n\), and let \(Y\) be its
observation through a BSC with crossover probability \((1-\rho)/2\),
where \(\rho\in[0,1]\). Expectations are uniform on the appropriate
cube unless otherwise specified. Throughout the one-bit proof and its
appendices, logarithms are natural and information is measured in nats.
For \(r\in[-1,1]\), define
\begin{align*}
 \phi(r)&=\frac{1+r}{2}\log(1+r)
          +\frac{1-r}{2}\log(1-r),\\
 L&=\log2,\qquad h(r)=L-\phi(r).
\end{align*}
Thus \(h(r)\) is binary entropy in the mean parametrization.
We use the convention \(0\log0=0\).
Write \(g(r)=\phi'(r)=\atanh r\) for \(|r|<1\), and
\(\logodds=\atanh\rho\) for \(0<\rho<1\).

For \(f:\{-1,1\}^n\to\{-1,1\}\), let
\[
 m=\E f,\qquad u(y)=T_\rho f(y)=\E[f(X)\mid Y=y].
\]
Then
\begin{equation}
 \begin{aligned}
 I_f(\rho)&:=I(f(X);Y)\\
 &=h(m)-\E h(u)=\E\phi(u)-\phi(m).
 \end{aligned}
 \label{eq:posterior-information}
\end{equation}
A signed dictator \(f(x)=\pm x_i\) attains \(I_f(\rho)=\phi(\rho)\).
Write \(\Delta_f(\rho)=I_f(\rho)-\phi(\rho)\) for the information gap.
Constant functions have zero mutual information; in the one-bit argument, assume
\(f\) is nonconstant, so \(|u(y)|<1\) for \(0<\rho<1\).

\subsection{Fourier Analysis and Entropy Flow}

We use the standard Boolean Fourier conventions~\cite{ODonnell}.
For \(S\subseteq[n]\), let \(\chi_S(x)=\prod_{i\in S}x_i\) and
\(\widehat v(S)=\E[v\chi_S]\). For any real-valued cube function \(v\),
\begin{align*}
 v&=\sum_{S\subseteq[n]}\widehat v(S)\chi_S,
 &\E v^2&=\sum_S\widehat v(S)^2,\\
 T_\rho v&=\sum_S\rho^{|S|}\widehat v(S)\chi_S.
\end{align*}
The cube Laplacian is normalized by
\begin{align*}
 \Lapl\chi_S&=|S|\chi_S,\\
 \Lapl v(x)&=\frac12\sum_{i=1}^n
       \bigl(v(x)-v(x^{\oplus i})\bigr),
\end{align*}
where \(x^{\oplus i}\) denotes \(x\) with coordinate \(i\) flipped.
For fixed \(f\) and \(0<\rho<1\),
\(\rho\,\partial_\rho u=\Lapl u\), and the entropy-flow identity
\cite[Lemma~1]{CGN} gives
\begin{equation}
 A(u):=\E[g(u)\Lapl u]=\rho I_f'(\rho).
 \label{eq:action}
\end{equation}

\subsection{Monotone Reduction}

We use the following consequence of coordinate compression
\cite[Lemma~2 and Remark~2]{CK}.
\begin{lemma}[Monotone reduction]
\label{lem:sorting}
For every Boolean \(f\), there is a coordinatewise nondecreasing
Boolean \(\widetilde f\) in the same dimension such that
\(\E\widetilde f=\E f\) and
\(I_{\widetilde f}(\rho)\ge I_f(\rho)\) for all \(\rho\in[0,1]\).
\end{lemma}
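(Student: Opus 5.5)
The plan is to build \(\widetilde f\) by the standard coordinate compression (sorting) of Courtade and Kumar, applied one coordinate at a time, and to show that a single compression step preserves the mean and does not decrease \(I_f(\rho)\).

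\textbf{The compression step.} For a coordinate \(i\), define \(S_i f\) as follows. For each fixed \(x_{-i}\), the pair \((f(x_{-i},-1),f(x_{-i},1))\) is replaced by its sorted version, with the smaller value placed at \(x_i=-1\). This operation permutes values within each \(i\)-edge, so the number of \(+1\)'s is unchanged and \(\E S_i f=\E f\) holds immediately.

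\textbf{Monotonicity of the information.} By \eqref{eq:posterior-information}, \(I_f(\rho)=\E\phi(T_\rho f)-\phi(m)\), and \(m\) is unchanged. It therefore suffices to show \(\E\phi(T_\rho S_i f)\ge\E\phi(T_\rho f)\).

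First condition on the other coordinates of the noise. Write \(T_\rho=T^{(i)}_\rho T^{(-i)}_\rho\). Fix \(y_{-i}\) and consider the two-valued functions \(a(x_i)\) obtained by averaging over \(x_{-i}\). Comparing edges separately is not directly sufficient. The cleaner route is the rearrangement viewpoint: \(f=1_A-1_{A^c}\), where \(A\) is a set, and \(S_i\) is the two-point (Steiner-type) symmetrization of \(A\) along direction \(i\).

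For each \(y\), the posterior \(u(y)=\E[f(X)\mid Y=y]\) is a linear function of the indicator \(1_A\) with kernel \(K(x,y)=\prod_j(1+\rho x_jy_j)/2\). The goal is then to show that the multiset \(\{u(y_{-i},-1),u(y_{-i},1)\}\) is replaced by one that majorizes it, with the same sum. Summing over \(y_i\) gives \(\E[f\mid Y_{-i}=y_{-i}]\), which is unchanged because averaging over \(x_i\) is symmetric. Convexity of \(\phi\) then yields the inequality.

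The majorization reduces to the following claim: \(|u(y_{-i},1)-u(y_{-i},-1)|\) does not decrease. This difference equals
\[
\rho\,\E_{x_{-i}}\bigl[K_{-i}(x_{-i},y_{-i})\,(f(x_{-i},1)-f(x_{-i},-1))\bigr].
\]
After sorting, every term \(f(x_{-i},1)-f(x_{-i},-1)\) is nonnegative and has the same absolute value as before. The triangle inequality therefore gives the claim.

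\textbf{Iterating to a monotone function.} Apply \(S_1,\dots,S_n\) repeatedly. The main obstacle is to check that this process terminates at a function monotone in every coordinate. The standard fact is that \(S_j\) preserves monotonicity in a coordinate \(i\) already sorted. This can be verified on the \(2\times2\) squares in coordinates \(i,j\), where sorting each row and each column of a \(0/1\) matrix keeps the other direction sorted. Alternatively, the process must stop because \(\sum_x 1_A(x)\sum_j x_j\) strictly increases with each nontrivial step and the cube is finite.

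The resulting \(\widetilde f\) is coordinatewise nondecreasing, has \(\E\widetilde f=\E f\), and satisfies \(I_{\widetilde f}(\rho)\ge I_f(\rho)\) for every \(\rho\in[0,1]\). The endpoints \(\rho=0,1\) follow trivially or by continuity. This is exactly \cite[Lemma~2 and Remark~2]{CK}.
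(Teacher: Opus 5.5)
Your proposal is correct and follows the approach the paper relies on. The paper gives no proof of its own and cites the coordinate-compression result \cite[Lemma~2 and Remark~2]{CK}; your reconstruction is that standard compression argument. Each sorting step keeps the pair sum $u(y_{-i},1)+u(y_{-i},-1)$ fixed. Because the kernel is nonnegative for $\rho\in[0,1]$, the step also does not decrease $|u(y_{-i},1)-u(y_{-i},-1)|$, so convexity of $\phi$ raises $\E\phi(u)$ at every $\rho$ simultaneously. Both of your termination arguments are valid; the $2\times2$ check even shows that the single pass $S_n\cdots S_1$ already produces a monotone function.
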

It therefore suffices to consider a fixed nonconstant monotone
function throughout the one-bit proof.

\section{Edge Geometry and the Pivotal Decomposition}
\label{sec:edges}

Fix a nonconstant monotone \(f\), let \(0<\rho<1\), and write
\(u=T_\rho f\). We bound the information growth \(A(u)\) using
the geometry of the pivotal sets. The key estimate combines the
Dirichlet energy of \(\Theta=\arcsin u\) with an entropy-dependent
correction. The energy term will be controlled by Fourier analysis.

\begin{figure*}[t]
\centering
\includegraphics[width=.76\textwidth]{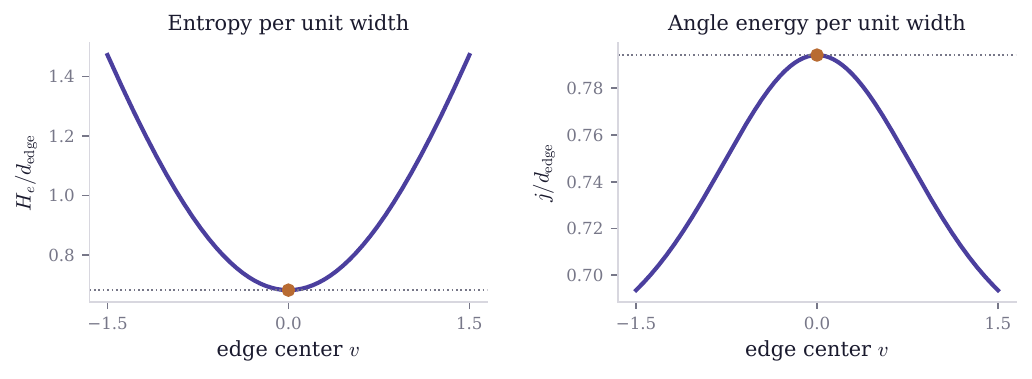}
\caption{Centered-edge comparison for \(p=\tanh(v+k)\),
\(q=\tanh(v-k)\), and \(k=0.8\). Here
\(\edgewidth=|p-q|/2\), \(H_e=[h(p)+h(q)]/2\), and
\(j=(\arcsin p-\arcsin q)^2/4\).
Centering at \(v=0\) minimizes \(H_e/\edgewidth\) and maximizes
\(j/\edgewidth\), yielding the scalar bounds used in
Lemma~\ref{lem:edges}.}
\label{fig:centered-edge}
\end{figure*}

Let \(f_{i,\pm}\) denote the sections of \(f\) with coordinate \(i\)
fixed to \(\pm1\). Define the pivotal indicator and its mean by
\begin{align*}
 q_i^f&=\frac{f_{i,+}-f_{i,-}}2\in\{0,1\},\\
 a_i&=\E q_i^f=\widehat f(\{i\})=\Inf_i(f).
\end{align*}
The indicators \(q_i^f\) will also enter the logarithmic Sobolev
estimates below. All coordinate sums in this section omit indices
with \(a_i=0\).

For \(v>0\), define
\[
 F(v)=\frac{h(\tanh v)}{\tanh v},\qquad
 c(v)=\frac{(\arcsin(\tanh v))^2}{\tanh v}.
\]
The function \(F\) is strictly decreasing from infinity to zero.
A common entropy budget determines the comparison heights in the
following two bounds. Figure~\ref{fig:centered-edge} illustrates the
underlying centered-edge comparison.

\begin{lemma}[The edge budget]
\label{lem:edges}
Let \(H_{\rm budget}\ge\E h(u)>0\). For each \(i\) with \(a_i>0\),
let \(v_i>0\) be the unique solution of
\[
 F(v_i)=\frac{H_{\rm budget}}{\rho a_i}.
\]
Then
\begin{align}
 A(u)&\ge\rho\sum_i a_i v_i,
 \label{eq:edge-budget}\\
 A(u)&\ge\E[\Theta\Lapl\Theta]
       +\rho\sum_i a_i\bigl(v_i-c(v_i)\bigr).
 \label{eq:correction}
\end{align}
\end{lemma}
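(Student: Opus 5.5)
The plan is to decompose the action \(A(u)\) over the edges of the cube, prove a scalar comparison edge by edge, and then aggregate within each coordinate direction using the entropy budget and a Jensen-type argument.

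\emph{Edge decomposition.} By symmetrizing \(\E[g(u)\Lapl u]\) over each direction \(i\),
\[
 A(u)=\tfrac14\sum_i\E\bigl[(g(u)-g(u^{\oplus i}))(u-u^{\oplus i})\bigr].
\]
Fix \(i\) and group the points into edges \(e\) in direction \(i\). Let \(p=u(x_i{=}{+}1)\) and \(q=u(x_i{=}{-}1)\) be the values at the two endpoints of \(e\). Since \(f\) is monotone, \(p-q=2\rho\,T_\rho q_i^f\ge0\). Write \(p=\tanh(v+k)\) and \(q=\tanh(v-k)\) with \(k\ge0\). Then \(g(p)-g(q)=2k\), and the edge contributes \(k_e\,d_e\) to \(A(u)\), where \(d_e=(p-q)/2\). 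As a check, a dictator gives \(A=\rho\atanh\rho\). The edge widths satisfy \(\E_e d_e=\rho\,\E T_\rho q_i^f=\rho a_i\). Each point lies on exactly one edge in direction \(i\), so for every \(i\) we also have \(\E_e H_e=\E h(u)\le H_{\rm budget}\).

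Similarly, for \(\Theta=\arcsin u\) we have \(\E[\Theta\Lapl\Theta]=\tfrac14\sum_i\E(\Theta-\Theta^{\oplus i})^2=\sum_i\E_e j_e\), with \(j_e=(\arcsin p-\arcsin q)^2/4\).

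\emph{Scalar centered-edge comparison.} For fixed half-gap \(k\), I would show that the ratio \(H_e/d_e\) is minimized at \(v=0\), and that \(j_e/d_e\) is maximized at \(v=0\). This is the content of Figure~\ref{fig:centered-edge}. It gives
\[
 H_e\ge d_e\,F(k_e),\qquad j_e\le d_e\,c(k_e).
\]
Hence \(k_ed_e\ge d_ek_e\) trivially, and \(k_ed_e-j_e\ge d_e\bigl(k_e-c(k_e)\bigr)\). I expect to prove both extremality claims in the scalar appendix by differentiating in \(v\), using the parity of the expressions in \(v\), and checking the sign of the derivative.

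\emph{Aggregation per direction.} Put probability weights \(w_e=d_e/(\rho a_i)\) on the edges of direction \(i\), and set \(s_e=F(k_e)\). The budget gives
\[
 \E_w s_e\le \frac{\E h(u)}{\rho a_i}\le\frac{H_{\rm budget}}{\rho a_i}=F(v_i).
\]
Let \(\psi_1=F^{-1}\) and \(\psi_2=(\mathrm{id}-c)\circ F^{-1}\). The contribution of direction \(i\) is then at least \(\rho a_i\,\E_w\psi_1(s_e)\) for \eqref{eq:edge-budget}, and at least \(\rho a_i\,\E_w\psi_2(s_e)\) after subtracting \(\sum_e j_e\) for \eqref{eq:correction}. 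If \(\psi\) is convex and nonincreasing, Jensen gives \(\E_w\psi(s_e)\ge\psi(\E_w s_e)\ge\psi(F(v_i))\), which equals \(v_i\), resp.\ \(v_i-c(v_i)\). Summing over \(i\) then yields both inequalities. Edges with \(d_e=0\) carry zero weight and contribute nothing, so they cause no trouble.

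\emph{Main obstacle.} The hard step is the scalar analysis: the centered extremality in \(v\), and the convexity and monotonicity of \(\psi_1\) and \(\psi_2\) in the entropy-level variable \(s\). For \(\psi_1\) it suffices that \(F\) is decreasing and convex, which should follow from direct series or derivative estimates for \(h(\tanh v)/\tanh v\). For \(\psi_2\) the subtraction of \(c\) complicates things. I would first try to verify \((1-c'(k))/F'(k)\) is monotone in the right direction, which is equivalent to convexity of \(\psi_2\) in \(s\). If a clean analytic proof fails, I would fall back on rigorous interval checks on a compact range combined with asymptotics as \(k\to0\) and \(k\to\infty\).
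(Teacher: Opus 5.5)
Your proposal is correct and is essentially the paper's proof: it uses the same edge decomposition with \(a=\edgewidth\,k\), the same centered-edge bounds \(H_e/\edgewidth\ge F(k)\) and \(j/\edgewidth\le c(k)\), and Jensen for the decreasing convex functions \(F^{-1}\) and \(\corr=(\mathrm{id}-c)\circ F^{-1}\), where your \(d_e\)-weighted Jensen is just the paper's convexity of the perspective \(\edgewidth\,\psi(H_e/\edgewidth)\). The scalar facts you flag as the main obstacle are settled analytically in the paper: convexity of \(F^{-1}\) follows from \(2L_k>\tanh^2 k\), and convexity of \(\corr\) follows from the increase of \((\sinh k-\theta_{\rm edge}(k))^2/L_k\), so no interval-arithmetic fallback is needed.
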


The first inequality is the entropy-constrained edge bound
\cite[Theorem~4 and Appendix~C]{CGN}. Both bounds follow by averaging
convex scalar edge inequalities; the full proof is given in
Appendix~\ref{app:scalar}.

Whenever \(\E h(u)\le h(\rho)\), the choice
\(H_{\rm budget}=h(\rho)\) defines heights \(\ell_i\) satisfying
\begin{equation}
 F(\ell_i)=\frac{F(\logodds)}{a_i},\qquad
 0<\ell_i\le\logodds.
 \label{eq:heights}
\end{equation}
The first bound gives
\begin{equation}
 W:=\sum_i a_i\ell_i\le\frac{A(u)}\rho.
 \label{eq:W}
\end{equation}
The second bound will be combined with a spectral estimate for
\(\E[\Theta\Lapl\Theta]\), retaining the correction term.
The general budget formulation also permits the sharper entropy
bound used in the intermediate-correlation argument.

\section{Information Growth Across Channel Regimes}
\label{sec:pivotal-middle}

This section supplies the differential step of the proof: a positive
information gap forces strictly positive growth. In the intermediate
range, pivotal entropy controls the angular energy in the edge budget.
Near the noiseless endpoint, coordinatewise spectral deficit bounds
replace that estimate; both arguments retain the same angular comparison.
The small-correlation bound and the final contradiction are assembled in
Section~\ref{sec:coverage}.

\subsection{Intermediate-Correlation Growth}

Write \(r=\rho\), \(\ell=\atanh r\), and
\(\Delta_f(r)=I_f(r)-\phi(r)\). We combine an angular-energy bound
with logarithmic Sobolev control of pivotal sets, then close the edge
budget using the log-sum inequality.

\begin{proposition}[Intermediate-correlation growth]
\label{prop:middle-growth}
For every monotone Boolean \(f\) and \(31/20\le\ell\le7/2\),
\[
 \Delta_f(r)>0\quad\Longrightarrow\quad I_f'(r)>\phi'(r).
\]
\end{proposition}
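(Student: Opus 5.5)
We argue by contradiction at a fixed \(r\) with \(31/20\le\ell\le7/2\). Assume \(\Delta_f(r)>0\) and \(I_f'(r)\le\phi'(r)\). Since \(\rho I_f'(\rho)=A(u)\) by \eqref{eq:action}, and the dictator satisfies \(r\phi'(r)=r\,\ell\), the goal is to show \(A(u)>r\ell\).

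\emph{Entropy budget.} The hypothesis \(\Delta_f(r)>0\) reads \(h(m)-\E h(u)>L-h(r)\). Hence
\[
 \E h(u)<h(r)-(L-h(m))\le h(r).
\]
So we may take \(H_{\rm budget}=h(r)-(L-h(m))\), which is sharper than \(h(r)\) when \(f\) is biased. Lemma~\ref{lem:edges} then gives heights \(v_i\le\ell_i\le\ell\) defined by \(F(v_i)=H_{\rm budget}/(ra_i)\).

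\emph{Using the two edge bounds.} Bound \eqref{eq:edge-budget} alone yields \(A(u)\ge r\sum_i a_iv_i\). This is exact for dictators, where \(a_1=1\) and \(v_1=\ell\), but is lossy when the influences are spread out. To recover the loss we use \eqref{eq:correction} and lower-bound the angular energy \(\E[\Theta\Lapl\Theta]\). Write
\[
 \E[\Theta\Lapl\Theta]=\tfrac14\sum_i\E\bigl[(\Theta(y)-\Theta(y^{\oplus i}))^2\bigr].
\]
By monotonicity, \(u(y^{+i})-u(y^{-i})=rT_rq_i^f\ge0\), so each edge difference of \(\Theta\) is at least \(rT_rq_i^f\), using \(|\arcsin'|\ge1\). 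The coordinate-\(i\) contribution is therefore at least \(r^2\E[(T_rq_i^f)^2]\).

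\emph{Controlling the pivotal terms.} Each pivotal term \(\E[(T_rq_i^f)^2]\) is controlled through the noisy entropy of the indicator \(q_i^f\). Since \(q_i^f\) is \(\{0,1\}\)-valued with mean \(a_i\), the logarithmic Sobolev inequality (via the hypercontractive dual form) gives \(\E[(T_rq_i^f)^2]\ge\Psi_r(a_i)\), where \(\Psi_r\) is an explicit convex-type lower bound that is close to \(a_i^{2}\) when \(a_i\) is small and equals \(a_i\) at \(a_i=1\).

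\emph{Closing the budget.} Combining gives a lower bound of the form
\[
 A(u)/r\ge\sum_i\Bigl[a_i v_i-a_ic(v_i)+\tfrac r4\Psi_r(a_i)\Bigr]=:\sum_i G(a_i).
\]
The coordinate costs are coupled only through \(\sum_i a_i\) and the common budget, which fixes the \(v_i\) as functions of \(a_i\). Apply the log-sum inequality to the defining relation \(F(v_i)=H/(ra_i)\), rewritten as \(a_ih(\tanh v_i)/\tanh v_i=H/r\), to show that
\[
 \sum_i G(a_i)\ge\ell\Bigl(1+\eta\,\bigl(1-\max_i a_i\bigr)\Bigr)
\]
for some \(\eta>0\). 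Equivalently, the per-unit yield \(G(a)/a\) is minimized at \(a=1\) and exceeds \(\ell\) elsewhere on \([31/20,7/2]\). The case \(\max_ia_i=1\) forces a dictator, which has \(\Delta_f=0\) and so is excluded. This gives \(A(u)>r\ell\) and the contradiction.

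\emph{Main obstacle.} The hard step is this final scalar comparison: verifying that the corrected yield beats \(\ell\) uniformly in \(a\in(0,1)\) and \(\ell\in[31/20,7/2]\), with the bias term \(L-h(m)\) interacting correctly. I would reduce it to a one-dimensional monotonicity statement in \(a\) using convexity of \(a\mapsto av(a)\). The endpoints \(\ell=31/20\) and \(\ell=7/2\) would then be checked with explicit interval bounds. I expect the margin to vanish near \(\ell=7/2\), where the bounded angular reserve \(c(v)\) becomes nearly \(v\)-linear, which explains the cutoff.
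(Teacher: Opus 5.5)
There is a genuine gap in your treatment of the angular energy. Summing your edgewise bounds gives only \(\E[\Theta\Lapl\Theta]\ge\E[u\Lapl u]=\sum_S|S|r^{2|S|}\widehat f(S)^2\). (Incidentally, the edge difference is \(2rT_rq_i^f\), so the factor \(\tfrac r4\) in \(G\) should be \(r\).) This bound discards exactly the reserve that is needed to pay back the correction subtracted in \eqref{eq:correction}.

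The dictator shows the loss. There \(\E[\Theta\Lapl\Theta]=\theta^2\), but your bound gives only \(r^2\), while \eqref{eq:correction} subtracts \(r\,c(\ell)=\theta^2\). Your chain therefore yields \(A(u)\ge r\ell-(\theta^2-r^2)\), strictly below the true value \(r\ell\). Correspondingly \(G(1)=\ell-(\theta^2-r^2)/r\), which is about \(\ell-0.54\) at \(\ell=31/20\). So your claim that \(G(a)/a\) is minimized at \(a=1\) with value \(\ell\) is false. By continuity the inequality \(\sum_iG(a_i)\ge\ell\) also fails for influence profiles near \((1,0,\dots)\), so strictness cannot come from \(\max_ia_i<1\).

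The claim fails at the other end too. As \(a\to0\) you have \(v(a)\to0\) and \(\Psi_r(a)\approx a^2\), so \(G(a)/a\to0\). Hypercontractivity bounds pivotal noise stability from above, not below; from \(a_i\) alone you essentially get only Jensen's \(a_i^2\).

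In fact, with \(\Psi_r(a)=a^2\) you have \(ra_i^2\le a_ic(v_i)\) for every \(i\). This follows from \(c(v)F(v)\ge h(\tanh v)\ge h(r)\) for \(v\le\ell\), together with \(v_i\ge\ell_i\). (With your smaller budget, \(v_i\ge\ell_i\) holds, not \(v_i\le\ell_i\) as you wrote.) So your combined bound is never stronger than \eqref{eq:edge-budget} alone. For balanced functions with small influences, that bound is about \((Lr/h(r))\sum_ia_i^2\), far below \(\ell\) for tribes-like profiles. Since your final inequality sees \(f\) only through \(m\) and \((a_i)\), nothing in it can exclude such profiles.

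The missing idea is to use \(\Delta_f(r)>0\) inside the angular energy, not only as an entropy budget. The paper averages the tangent inequality \eqref{eq:centered-support} at \(x=u\); this inequality is tight at \(x=\pm r\). It then uses \(\E\phi(u)=\phi(r)+\phi(m)+\Delta\) and completes the square in \(\Lapl+\gamma\). The result is the global bound \eqref{eq:centered-resolvent}, which has three key features:
\begin{itemize}
\item it is exact for the dictator;
\item it is roughly \(2\theta^2\) when the spectrum is diffuse;
\item it carries the strict term \(2\gamma\Delta\).
\end{itemize}

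Gross's inequality then enters not as a lower bound on \(\E[(T_rq_i^f)^2]\), but in the form \(\E[w_i\Lapl w_i]\ge\tfrac12 s_i\log(s_i/a_i^2)\). This produces the pivotal entropy \(E_{\rm piv}\) in \eqref{eq:pivotal-baseline}, which rewards pivotal sets that are more stable than \(a_i^2\).

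Finally, the paper couples all coordinates at once through the log-sum inequality with \(b_i=a_i^2e^{-\beta H(v_i)}\), using two further inputs:
\begin{itemize}
\item the profile bound \eqref{eq:middle-profile};
\item the constraint \(\sum_ia_iv_i\le\ell\), which follows from the contradiction hypothesis \(A(u)\le r\ell\) via \eqref{eq:edge-budget}.
\end{itemize}
Your plan sets up that contradiction hypothesis but never uses it.
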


\subsubsection{Angular-energy bound}

Set \(J_{\rm ang}=\E[\Theta\Lapl\Theta]\), where \(\Theta=\arcsin u\),
and define
\begin{align*}
 \theta&=\arcsin r,\qquad \lambda=\theta/r,\\
 \alpha&=\frac{\theta+\sinh\ell}{\ell},\qquad
 \gamma=\lambda\alpha,\\
 \eta&=\frac{\theta^2\gamma}{\gamma+1}.
\end{align*}
For \(\ell\ge31/20\) and \(x\in[-1,1]\),
Appendix~\ref{app:centered-support} gives
\begin{equation}
 \begin{split}
 \frac{\alpha}{2}(\arcsin x-\lambda x)^2
 &\le x\arcsin x-r\theta\\
 &\quad-\alpha[\phi(x)-\phi(r)].
 \end{split}
 \label{eq:centered-support}
\end{equation}
Averaging at \(x=u\), multiplying by \(2\lambda\), and completing
the square in \(\Lapl+\gamma\) yields
\begin{equation}
 \begin{split}
 J_{\rm ang}\ge{}&2\theta^2+2\gamma[\phi(m)+\Delta_f(r)]\\
 &+\lambda^2\E\left[u\left\{\gamma-
       (\gamma+1)^2(\Lapl+\gamma)^{-1}\right\}u\right].
 \end{split}
 \label{eq:centered-resolvent}
\end{equation}

\subsubsection{Pivotal entropy}

For \(a_i>0\) and \(0\le t\le r^2\), let
\[
 w_i(t)=T_{\sqrt t}q_i^f,\qquad s_i(t)=\E w_i(t)^2,
\]
with noise acting on the other coordinates. All sums below run over
\(a_i>0\). Define
\[
 z_i=\frac{\gamma+1}{\gamma r^2}
       \int_0^{r^2}[1-(t/r^2)^\gamma]s_i(t)\,dt,
 \qquad Z=\sum_i z_i,
\]
and put \(V_0=1-m^2\), \(U=\operatorname{Var}(u)/r^2\).
The integral is a probability average, so \(a_i^2\le z_i\le a_i\).
Fourier expansion gives
\[
 Z=(\gamma+1)\sum_{k\ge1}
       \frac{r^{2(k-1)}}{\gamma+k}
       \sum_{|S|=k}\widehat f(S)^2,
 \qquad 0<Z\le U\le V_0.
\]

Gross's logarithmic Sobolev inequality~\cite{Gross,ODonnell},
together with Jensen's inequality, gives
\[
 \E[w_i\Lapl w_i]\ge
       \frac12 s_i\log\frac{s_i}{a_i^2}.
\]
Averaging with the kernel defining \(z_i\), applying convexity of
\(s\log(s/a_i^2)\), and expanding in Fourier degrees yields
\begin{equation}
 (\gamma+1)(U-Z)\ge\frac12 E_{\rm piv},\qquad
 E_{\rm piv}:=\sum_i z_i\log\frac{z_i}{a_i^2}.
 \label{eq:pivotal-entropy}
\end{equation}
With
\[
 M(m)=2\gamma\phi(m)
       -[\lambda^2(2-r^2)+\lambda/\alpha]m^2,
\]
combining \eqref{eq:pivotal-entropy} with the Fourier expansion of
\eqref{eq:centered-resolvent} gives
\begin{equation}
 \begin{split}
 J_{\rm ang}\ge{}&\theta^2(1+V_0-Z)+\frac\eta2 E_{\rm piv}\\
 &+M(m)+2\gamma\Delta_f(r).
 \end{split}
 \label{eq:pivotal-baseline}
\end{equation}

\subsubsection{Closing the edge budget}

Assume \(31/20\le\ell\le7/2\), \(\Delta=\Delta_f(r)>0\), and
\(A(u)\le r\ell\). Monotonicity gives \(a_i\le1-|m|\le V_0\).
Since \(\phi(m)\ge m^2/2\) and \(h(r)<1/2\) in this range,
\[
 \E h(u)=h(r)-\phi(m)-\Delta\le V_0h(r).
\]
Apply Lemma~\ref{lem:edges} with \(H_{\rm budget}=V_0h(r)\).
Its heights satisfy
\[
 F(v_i)=\frac{V_0F(\ell)}{a_i},\qquad 0<v_i\le\ell.
\]
Writing \(H(v)=v-c(v)\ge0\), we obtain
\begin{equation}
 \sum_i a_iv_i\le\ell,\qquad
 A(u)\ge J_{\rm ang}+r\sum_i a_iH(v_i).
 \label{eq:adjusted-edge}
\end{equation}

Put \(\beta=2r/\eta\). Lemma~\ref{lem:middle-channel-checks} and
\(2\phi(m)\ge m^2\) imply
\begin{align}
 \log\frac{vF(v)}{\ell F(\ell)}
 &\ge\beta[H(\ell)-H(v)],\quad 0<v\le\ell,
 \label{eq:middle-profile}\\
 \theta^2-rH(\ell)-\eta/2&\ge0,
 \label{eq:middle-energy}\\
 M(m)-rH(\ell)m^2&\ge0.
 \label{eq:middle-mean}
\end{align}
Set \(b_i=a_i^2e^{-\beta H(v_i)}\) and
\(B_{\rm piv}=\sum_i b_i\). By \eqref{eq:middle-profile} and
\eqref{eq:adjusted-edge},
\[
 B_{\rm piv}\le
 \frac{V_0}{\ell}e^{-\beta H(\ell)}\sum_i a_iv_i
 \le V_0e^{-\beta H(\ell)}.
\]
Using \(z_i\le a_i\) and \(\eta\beta/2=r\), the log-sum inequality
therefore gives
\begin{align*}
 \frac\eta2 E_{\rm piv}+r\sum_i a_iH(v_i)
 &\ge\frac\eta2\sum_i z_i\log\frac{z_i}{b_i}\\
 &\ge\frac\eta2 Z\log\frac Z{B_{\rm piv}}\\
 &\ge\frac\eta2 Z\log\frac Z{V_0}+rZH(\ell).
\end{align*}
Combining this with \eqref{eq:pivotal-baseline} and
\eqref{eq:adjusted-edge}, then using
\(r\ell=\theta^2+rH(\ell)\) and
\(Z\log(Z/V_0)\ge Z-V_0\), yields
\[
 \begin{aligned}
 A(u)-r\ell\ge{}&
 [\theta^2-rH(\ell)-\eta/2](V_0-Z)\\
 &+M(m)-rH(\ell)m^2+2\gamma\Delta>0,
 \end{aligned}
\]
where the strict inequality follows from
\eqref{eq:middle-energy}--\eqref{eq:middle-mean} and \(\Delta>0\).
This contradicts \(A(u)\le r\ell\) and proves the proposition by
\eqref{eq:action}.

\subsection{High-Correlation Spectral Bounds}
\label{sec:coordinate}

For \(\ell=\atanh\rho\ge7/2\), we estimate the spectral term in
\eqref{eq:centered-resolvent} using a reference Fourier degree.
Retain \(\gamma=\lambda\alpha\) and let \(\degree\ge2\) be real;
the tail argument uses \(\degree=\ell/2+5/4\). Define
\begin{align*}
 \weight(s)&=\frac{\gamma s}{\gamma+s},\\
 \Klsi&=\frac{\gamma^2e^{2\degree-3}}
              {2(\gamma+\degree)(\gamma+1)},\\
 d_j&=[\weight(\degree)-\weight(j)]\rho^{2j},
       \qquad j=1,2.
\end{align*}

\begin{lemma}[Coordinate deficit bounds]\label{lem:deficit}
Let \(u=T_\rho f\), where \(f\) is monotone Boolean and \(0<\rho<1\).
Let \(\gamma>0\), \(\degree\ge2\), and set
\begin{align*}
 D_i&=\sum_{S\ni i}
       \frac{\weight(\degree)-\weight(|S|)}{|S|}\widehat u(S)^2,\\
 D(a)&=\min\left\{\Klsi\rho^2a^2,\,
       \frac{d_2}{2}a+\left(d_1-\frac{d_2}{2}\right)a^2\right\}.
\end{align*}
Then \(D_i\le D(a_i)\), and
\begin{equation}
 \sum_{S\ne\varnothing}\weight(|S|)\widehat u(S)^2
 \ge \weight(\degree)\operatorname{Var}(u)-\sum_iD(a_i).
 \label{eq:spectral}
\end{equation}
\end{lemma}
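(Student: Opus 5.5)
The plan has two parts. First I reduce \eqref{eq:spectral} to the coordinatewise bound \(D_i\le D(a_i)\) by redistributing Fourier weight over coordinates. Then I prove that bound by two separate estimates, one for each branch of the minimum defining \(D(a)\).

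\textbf{Reduction.} For every \(S\ne\varnothing\) we have
\(\weight(|S|)=\weight(\degree)-[\weight(\degree)-\weight(|S|)]\). Summing against \(\widehat u(S)^2\) and using \(\operatorname{Var}(u)=\sum_{S\ne\varnothing}\widehat u(S)^2\) gives
\[
 \sum_{S\ne\varnothing}\weight(|S|)\widehat u(S)^2
 =\weight(\degree)\operatorname{Var}(u)-\sum_{S\ne\varnothing}[\weight(\degree)-\weight(|S|)]\widehat u(S)^2 .
\]
Splitting each set \(S\) evenly among its \(|S|\) elements rewrites the last sum as exactly \(\sum_i D_i\). So \eqref{eq:spectral} follows once \(D_i\le D(a_i)\) holds for every \(i\).

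Indices with \(a_i=0\) are harmless. For monotone \(f\), \(a_i=\Inf_i(f)=0\) means \(f\) does not depend on \(x_i\), so \(D_i=0=D(0)\).

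Write \(W_k^{(i)}=\sum_{S\ni i,\,|S|=k}\widehat f(S)^2\). Since \(\widehat u(S)=\rho^{|S|}\widehat f(S)\), we get
\[
 D_i=\sum_{k\ge1} c_k W_k^{(i)},\qquad c_k=\frac{[\weight(\degree)-\weight(k)]\rho^{2k}}{k}.
\]
Two facts about the pivotal indicator are used throughout. First, \(q_i^f\in\{0,1\}\) has mean \(a_i\), so \(\sum_k W_k^{(i)}=\E(q_i^f)^2=a_i\). Second, \(W_1^{(i)}=\widehat f(\{i\})^2=a_i^2\).

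\textbf{Linear--quadratic branch.} Because \(\weight\) is increasing, \(c_k\le0\) for \(k\ge\degree\), so those terms can be dropped. For \(1\le k<\degree\), I check that \(c_k\) is nonincreasing in \(k\) for \(k\ge2\). The factor \([\weight(\degree)-\weight(k)]/k\) decreases and \(\rho^{2k}\) decreases, so this holds; it also covers real \(\degree\in[2,3)\). Hence \(c_k\le c_2=d_2/2\) for all \(k\ge2\), while \(c_1=d_1\). This gives
\[
 D_i\le d_1a_i^2+\frac{d_2}{2}\bigl(a_i-a_i^2\bigr),
\]
which is the second expression in \(D(a)\).

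\textbf{Log-Sobolev branch (main obstacle).} The harder step is \(D_i\le\Klsi\rho^2a_i^2\), a bound quadratic in \(a_i\) with no logarithmic loss. I would use the generating function
\[
 s_i(t)=\E\bigl(T_{\sqrt t}q_i^f\bigr)^2=\sum_k W_k^{(i)}t^{k-1},
\]
with noise applied on the coordinates other than \(i\), as in the pivotal-entropy argument.

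First, write \(D_i\) as an integral of \(s_i\) against a kernel on \([0,\rho^2]\). This should be possible because
\[
 \weight(\degree)-\weight(k)=\frac{\gamma^2(\degree-k)}{(\gamma+\degree)(\gamma+k)},
\]
and \(1/(\gamma+k)\) is a Mellin-type integral of \(t^{k-1}\). The factor \(\gamma^2/((\gamma+\degree)(\gamma+1))\) in \(\Klsi\) is consistent with this representation.

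Second, control \(s_i(t)\) relative to its endpoints \(s_i(0)=a_i^2\) and \(s_i(1)=a_i\). Use the Gross inequality \(t\,s_i'(t)\ge\frac12 s_i\log(s_i/a_i^2)\), or equivalently hypercontractivity \(s_i(t)\le a_i^{2/(1+t)}\). The degree-\(k\) terms are only paid while \(k<\degree\), and they carry the damping factor \(t^{k-1}\le\rho^{2(k-1)}\). I would first try bounding \(W_k^{(i)}t^{k-1}\) through hypercontractivity at a noise level chosen in terms of \(\degree\), then optimizing a term of the form \(x^{-(k-1)}\). That optimization is the natural source of the factor \(e^{2\degree-3}\).

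I am least sure that this removes every \(\log(1/a_i)\) factor uniformly in \(a_i\). If it fails, I would use the linear branch whenever \(a_i\) is not small and apply the log-Sobolev estimate only in the remaining range, since only the minimum \(D(a)\) needs to be bounded.
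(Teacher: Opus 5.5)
Your reduction of \eqref{eq:spectral} to $D_i\le D(a_i)$, your handling of $a_i=0$, and your linear--quadratic branch are correct and essentially match the paper. The gap is in the log-Sobolev branch, and the route you sketch there cannot work.

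\textbf{Why the per-level route fails.} Estimating the levels $2\le k<d$ one at a time (by hypercontractivity or level-$k$ inequalities) while discarding the levels $k\ge d$ necessarily loses a factor $\log(1/a_i)$. Take the monotone $f$ that equals $1$ iff $x_i=x_1=\cdots=x_m=1$, and $-1$ otherwise. Then $a_i=2^{-m}$ and $W_2^{(i)}=m4^{-m}=a_i^2\log_2(1/a_i)$. So no bound on $W_2^{(i)}$ in terms of $a_i$ alone can beat $a_i^2\log_2(1/a_i)$. For $d>2$ the part you retain satisfies $\sum_{k<d}c_kW_k^{(i)}\ge c_2W_2^{(i)}=\frac{d_2}{2}a_i^2\log_2(1/a_i)$. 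The estimate $D_i\le\Klsi\rho^2a_i^2$ holds only because the negative coefficients at degrees $k>d$ cancel this.

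Your fallback does not help. For $d>2$, the branch $\Klsi\rho^2a^2$ is the smaller entry of $D(a)$ exactly when $a$ is small, since the other entry is of order $\frac{d_2}{2}a$. That is where the loss is worst. In any case, $D_i\le D(a_i)$ requires both bounds.

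\textbf{The missing idea.} Apply Gross's inequality to the signed combination $(d-1)s_i-ts_i'$, not to $s_i$ alone. Using $\int_0^1(1-\tau^\gamma)\tau^{k-1}\,d\tau=\gamma/[k(\gamma+k)]$ and the substitution $t=\rho^2\tau$,
\[
 D_i=\frac{\gamma}{\gamma+d}\int_0^{\rho^2}\Bigl[1-\Bigl(\frac{t}{\rho^2}\Bigr)^{\gamma}\Bigr]\bigl[(d-1)s_i(t)-ts_i'(t)\bigr]\,dt .
\]
This is a nonnegative kernel against $\E[w(d-1-\Lapl_{-i})w]$, where $w=T_{\sqrt t}q_i^f\ge0$, and this form keeps every degree. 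Your inequality $ts_i'\ge\frac12 s_i\log(s_i/a_i^2)$ then gives, pointwise in $t$,
\[
 (d-1)s_i-ts_i'\le\max_{s>0}\Bigl[(d-1)s-\frac{s}{2}\log\frac{s}{a_i^2}\Bigr]=\frac12e^{2d-3}a_i^2 .
\]
The maximum is attained at $s=e^{2d-3}a_i^2$. This maximization, not an optimization of $x^{-(k-1)}$, is the source of $e^{2d-3}$. Integrating with $\int_0^{\rho^2}[1-(t/\rho^2)^\gamma]\,dt=\rho^2\gamma/(\gamma+1)$ yields exactly $\Klsi\rho^2a_i^2$. This is the paper's proof, written there with $w_t=T_{\sqrt t}p_i$, $p_i=(u_{i,+}-u_{i,-})/2$, and $\E w_t=\rho a_i$.
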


\begin{proof}
The case \(a_i=0\) is immediate. For \(w\ge0\), write
\(b=\E w>0\) and \(s=\E w^2\). The cube logarithmic Sobolev
inequality~\cite{Gross,ODonnell} and Jensen's inequality give
\[
 \begin{aligned}
 \E[w(\Lapl+1-\degree)w]
 &\ge \frac{s}{2}\log\frac{s}{b^2}+(1-\degree)s\\
 &\ge -\frac12e^{2\degree-3}b^2,
 \end{aligned}
\]
where the last step minimizes over \(s>0\).

Set \(p_i=(u_{i,+}-u_{i,-})/2\) and \(w_t=T_{\sqrt t}p_i\),
with noise and \(\Lapl_{-i}\) acting on the remaining coordinates.
Then \(w_t\ge0\) and \(\E w_t=\rho a_i\).
Fourier expansion gives
\[
 D_i=\frac{\gamma}{\gamma+\degree}
 \int_0^1(1-t^\gamma)
 \E[w_t(\degree-1-\Lapl_{-i})w_t]\,\diff t,
\]
using \(\int_0^1(1-t^\gamma)t^{s-1}\,\diff t
=\gamma/[s(\gamma+s)]\).
The preceding estimate and
\(\int_0^1(1-t^\gamma)\,\diff t=\gamma/(\gamma+1)\) yield
\(D_i\le\Klsi\rho^2a_i^2\).

For the second bound, Parseval applied to the pivotal indicator
\(q_i^f\) gives
\(\sum_{S\ni i}\widehat f(S)^2=a_i\), with singleton contribution
\(a_i^2\). For \(s\ge2\),
\([\weight(\degree)-\weight(s)]\rho^{2s}\le d_2\):
both nonnegative factors decrease on \(2\le s\le\degree\),
and the product is nonpositive for \(s>\degree\). Hence
\[
 D_i\le d_1a_i^2+\frac{d_2}{2}(a_i-a_i^2),
\]
which proves the second branch of \(D(a_i)\).
Finally,
\(\sum_iD_i=\sum_{S\ne\varnothing}
[\weight(\degree)-\weight(|S|)]\widehat u(S)^2\);
summing the coordinate bounds proves \eqref{eq:spectral}.
\end{proof}

\subsection{Spectral Baseline}
\label{sec:baseline}

We now combine the angular comparison with the coordinate deficit bounds.
The information gap supplies a lower bound on the noisy second moment,
leaving only scalar channel conditions to control the output bias.

Set
\begin{align*}
 p&=(\lambda+1/\alpha)^2,\qquad
 b=\lambda^2(2+1/\gamma),\\
 C_d&=p\weight(\degree)-b,\\
 B&=2\theta^2+C_d\frac{\phi(\rho)}L,\qquad
 k_\Delta=2\gamma+\frac{C_d}L.
\end{align*}

\begin{lemma}[Spectral baseline]\label{lem:baseline}
Let \(u=T_\rho f\) for monotone Boolean \(f\), with
\(0<\rho<1\), \(\ell=\atanh\rho\ge31/20\),
\(\gamma=\lambda\alpha\), and \(\degree\ge2\).
If
\begin{equation}
 C_d\ge0,\qquad
 \gamma-b+C_d\left(\frac1{2L}-1\right)\ge0,
 \label{eq:canonical-baseline-conditions}
\end{equation}
then
\begin{equation}
 \E[\Theta\Lapl\Theta]\ge B-p\sum_iD(a_i)
                         +k_\Delta\Delta_f(\rho).
 \label{eq:baseline}
\end{equation}
\end{lemma}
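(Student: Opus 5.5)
Start from \eqref{eq:centered-resolvent}, which holds for \(\ell\ge31/20\), and expand its spectral term in Fourier degrees. Writing \(\widehat u(S)=\rho^{|S|}\widehat f(S)\) and \(\widehat u(\varnothing)=m\), the operator \(\gamma-(\gamma+1)^2(\Lapl+\gamma)^{-1}\) acts on \(\chi_S\) with eigenvalue
\[
 \gamma-\frac{(\gamma+1)^2}{\gamma+s},\qquad s=|S|.
\]
This equals \(\gamma-(\gamma+1)^2/\gamma\) at \(s=0\). For \(s\ge1\) I will rewrite it in terms of \(\weight(s)=\gamma s/(\gamma+s)\), which gives a linear combination \(c_1\weight(s)-c_0\). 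The coefficients should be \(c_1=(\gamma+1)^2/\gamma^2\) and \(c_0=(\gamma+1)^2/\gamma-\gamma\), since \((\gamma+1)^2/(\gamma+s)=(\gamma+1)^2/\gamma-(\gamma+1)^2\weight(s)/\gamma^2\). Multiplying by \(\lambda^2\) and using \(\gamma=\lambda\alpha\), we get \(\lambda^2(\gamma+1)^2/\gamma^2=(\lambda+1/\alpha)^2=p\) and \(\lambda^2[(\gamma+1)^2/\gamma-\gamma]=\lambda^2(2+1/\gamma)=b\). So the spectral term equals
\[
 p\sum_{S\ne\varnothing}\weight(|S|)\widehat u(S)^2-b\operatorname{Var}(u)-b\,m^2,
\]
with the constant mode contributing \(-bm^2\). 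This identification of \(p\) and \(b\) is the first check I would do.

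Next apply \eqref{eq:spectral} of Lemma~\ref{lem:deficit}. It bounds the weighted sum below by \(\weight(\degree)\operatorname{Var}(u)-\sum_iD(a_i)\), which gives
\[
 J_{\rm ang}\ge2\theta^2+2\gamma\phi(m)+2\gamma\Delta_f-b m^2+C_d\operatorname{Var}(u)-p\sum_iD(a_i).
\]
Since \(C_d\ge0\), it remains to bound \(\operatorname{Var}(u)=\E u^2-m^2\) from below using the information gap. I plan to use the scalar inequality \(\phi(x)\le Lx^2\) on \([-1,1]\), which holds because \(\phi(x)/x^2\) is increasing in \(|x|\) with value \(L\) at \(x=\pm1\). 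Then \(\E u^2\ge\E\phi(u)/L=(\phi(m)+\phi(\rho)+\Delta_f)/L\) by \eqref{eq:posterior-information}. This yields \(C_d\operatorname{Var}(u)\ge C_d\phi(\rho)/L+(C_d/L)\Delta_f+C_d[\phi(m)/L-m^2]\). These terms produce exactly \(B\) and \(k_\Delta\Delta_f\).

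The leftover mean terms are \(\phi(m)(2\gamma+C_d/L)-m^2(b+C_d)\), and the main step is showing they are nonnegative. Using \(\phi(m)\ge m^2/2\) turns them into \(m^2[\gamma+C_d/(2L)-b-C_d]\). This is nonnegative by the second condition in \eqref{eq:canonical-baseline-conditions}; the condition was clearly designed for this, so the step should close once the coefficient bookkeeping is right. The only real risk is the sign handling: one must use \(C_d\ge0\) when invoking the variance lower bound, and the coefficient of \(\phi(m)\), namely \(2\gamma+C_d/L\), is positive, so applying \(\phi(m)\ge m^2/2\) is legitimate. Assembling the pieces gives \eqref{eq:baseline}.
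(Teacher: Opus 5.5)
Your proof is correct and follows the paper's argument essentially step for step. Both use the eigenvalue identity \(\lambda^2[\gamma-(\gamma+1)^2/(\gamma+s)]=p\weight(s)-b\), the deficit bound \eqref{eq:spectral}, the estimate \(\E\phi(u)\le L\E u^2\) applied with \(C_d\ge0\), and finally \(\phi(m)\ge m^2/2\) together with the second condition in \eqref{eq:canonical-baseline-conditions} to absorb the leftover \(k_\Delta\phi(m)-(C_d+b)m^2\).
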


\begin{proof}
Since \(\gamma=\lambda\alpha\),
\[
 \lambda^2\left[\gamma-\frac{(\gamma+1)^2}{\gamma+s}\right]
 =p\weight(s)-b.
\]
Substitute this into \eqref{eq:centered-resolvent} and apply
Lemma~\ref{lem:deficit}. Using \(C_d\ge0\) and
\[
 \E\phi(u)=\phi(\rho)+\phi(m)+\Delta_f(\rho)\le L\E u^2
\]
gives
\[
 \begin{aligned}
 \E[\Theta\Lapl\Theta]\ge{}&
 B-p\sum_iD(a_i)+k_\Delta\Delta_f(\rho)\\
 &+k_\Delta\phi(m)-(C_d+b)m^2.
 \end{aligned}
\]
The last two terms have nonnegative sum by
\(\phi(m)\ge m^2/2\) and
\eqref{eq:canonical-baseline-conditions}, proving the claim.
\end{proof}

\subsection{Yield Comparison and Growth}
\label{sec:contradiction}

The two edge bounds now close the argument. We express each coordinate's
angular cost and spectral loss per unit of edge weight, then compare this
quantity with the common baseline.

For \(0<v\le\ell=\atanh\rho\), define
\[
 \unitcost(v)=\frac{\rho c(v)}v+\frac{pD(a)}{av},
 \qquad a=\frac{F(\ell)}{F(v)}.
\]

\begin{lemma}[Information growth]\label{lem:budget}
Under the hypotheses of Lemma~\ref{lem:baseline}, suppose
\(\Delta_f(\rho)\ge0\) and
\begin{equation}
 \frac B\ell\ge\rho,\qquad
 \frac B\ell\ge\unitcost(\ell_i)
 \quad\text{for all }i\text{ with }a_i>0.
 \label{eq:scalar-test}
\end{equation}
Then
\begin{equation}
 \Delta_f'(\rho)\ge
 \frac{k_\Delta\ell}{B}\,\Delta_f(\rho).
 \label{eq:final-budget}
\end{equation}
In particular, \(\Delta_f(\rho)>0\) implies \(\Delta_f'(\rho)>0\).
\end{lemma}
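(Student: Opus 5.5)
The plan is to combine the two bounds of Lemma~\ref{lem:edges}, both taken at the common budget \(H_{\rm budget}=h(\rho)\), with the spectral baseline of Lemma~\ref{lem:baseline}. The result will be a linear inequality in the action \(A(u)\) that can be solved for \(A(u)\). First I check that this budget is admissible. By \eqref{eq:posterior-information},
\[
 \E h(u)=h(\rho)-\phi(m)-\Delta_f(\rho)\le h(\rho),
\]
because \(\phi(m)\ge0\) and \(\Delta_f(\rho)\ge0\). Also \(\E h(u)>0\) for nonconstant \(f\). Hence the heights \(\ell_i\) of \eqref{eq:heights} are defined, \eqref{eq:W} gives \(W=\sum_i a_i\ell_i\le A(u)/\rho\), and \eqref{eq:correction} gives
\[
 A(u)\ge\E[\Theta\Lapl\Theta]+\rho\sum_i a_i\bigl(\ell_i-c(\ell_i)\bigr).
\]

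Next I insert \eqref{eq:baseline} for \(\E[\Theta\Lapl\Theta]\). The key bookkeeping observation is that \eqref{eq:heights} says exactly \(a_i=F(\ell)/F(\ell_i)\). So the auxiliary variable \(a\) in the definition of \(\unitcost(\ell_i)\) equals \(a_i\), which gives
\[
 \rho a_i c(\ell_i)+pD(a_i)=a_i\ell_i\,\unitcost(\ell_i).
\]
Therefore
\[
 A(u)\ge B+k_\Delta\Delta_f(\rho)+\sum_i a_i\ell_i\bigl(\rho-\unitcost(\ell_i)\bigr).
\]
By the second condition in \eqref{eq:scalar-test}, \(\unitcost(\ell_i)\le B/\ell\), so the sum is at least \((\rho-B/\ell)W\). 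The first condition in \eqref{eq:scalar-test} makes \(B/\ell-\rho\ge0\), so I may bound \(W\) above by \(A(u)/\rho\) without reversing the inequality. This yields
\[
 A(u)\ge B+k_\Delta\Delta_f(\rho)-\Bigl(\frac{B}{\ell}-\rho\Bigr)\frac{A(u)}{\rho}.
\]
Rearranging gives \(A(u)\,B/(\rho\ell)\ge B+k_\Delta\Delta_f(\rho)\).

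Finally I translate back to the gap. Note that \(B\ge\rho\ell>0\), \(\phi'(\rho)=\atanh\rho=\ell\), and \(A(u)=\rho I_f'(\rho)\) by \eqref{eq:action}. Dividing by \(B/(\rho\ell)\) gives
\[
 \rho I_f'(\rho)\ge\rho\ell+\frac{k_\Delta\rho\ell}{B}\Delta_f(\rho).
\]
Dividing by \(\rho\) and subtracting \(\phi'(\rho)=\ell\) then gives \eqref{eq:final-budget}. For the final claim, \(k_\Delta=2\gamma+C_d/L>0\) because \(\gamma>0\) and \(C_d\ge0\) by \eqref{eq:canonical-baseline-conditions}. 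Hence \(\Delta_f(\rho)>0\) forces \(\Delta_f'(\rho)>0\).

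The only delicate step is the sign bookkeeping when \(W\) is replaced by its upper bound \(A(u)/\rho\). This works only because the coefficient \(\rho-B/\ell\) is nonpositive. This is precisely why the hypothesis \(B/\ell\ge\rho\) appears alongside the per-coordinate yield test.
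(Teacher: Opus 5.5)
Your proposal is correct and follows essentially the same route as the paper. You take \(H_{\rm budget}=h(\rho)\), which is admissible since \(\Delta_f(\rho)\ge0\); you combine the angular edge bound with the spectral baseline through the identity \(\rho a_i c(\ell_i)+pD(a_i)=a_i\ell_i\,\unitcost(\ell_i)\); and you use \(\rho-B/\ell\le0\) to replace \(W\) by its upper bound \(A(u)/\rho\). You also spell out why \(k_\Delta>0\), which the paper leaves implicit in its ``in particular'' clause.
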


\begin{proof}
Since \(\Delta_f(\rho)\ge0\),
\(\E h(u)=h(\rho)-\phi(m)-\Delta_f(\rho)\le h(\rho)\).
Lemmas~\ref{lem:edges} and~\ref{lem:baseline}, together with
\eqref{eq:scalar-test}, therefore give
\begin{align*}
 A(u)&\ge B+k_\Delta\Delta_f(\rho)
       +\sum_i a_i\ell_i[\rho-\unitcost(\ell_i)]\\
 &\ge B+k_\Delta\Delta_f(\rho)
       +\left(\rho-\frac B\ell\right)W.
\end{align*}
Using \(W\le A(u)/\rho\) and \(\rho-B/\ell\le0\), we obtain
\[
 \frac{B}{\rho\ell}A(u)\ge B+k_\Delta\Delta_f(\rho).
\]
Now \(B/\ell\ge\rho>0\) and
\(\Delta_f'(\rho)=A(u)/\rho-\ell\) yield
\eqref{eq:final-budget}.
\end{proof}

\section{Proof of the Main Theorem}
\label{sec:coverage}

We combine the direct small-correlation and influential-coordinate bounds
with the growth estimates of Section~\ref{sec:pivotal-middle}. Together
they show that a positive gap would have positive derivative at its
interior maximum, completing the contradiction.

\begin{theorem}[Main Theorem]
\label{thm:CK}
For every \(n\ge1\), every Boolean
\(f:\{-1,1\}^n\to\{-1,1\}\), and every \(0\le\rho\le1\),
\begin{equation}
 I(f(X);Y)\le\phi(\rho).
 \label{eq:main-ck-sign}
\end{equation}
\end{theorem}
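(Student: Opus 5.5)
The plan is to reduce to monotone, nonconstant \(f\) and argue by contradiction through the gap \(\Delta_f\). First the trivial cases. Constant \(f\) has \(I_f\equiv0\le\phi(\rho)\). At \(\rho=0\) both sides vanish. At \(\rho=1\) we have \(I_f(1)=h(m)\le L=\phi(1)\). For \(0<\rho<1\), Lemma~\ref{lem:sorting} replaces \(f\) by a monotone \(\widetilde f\) with the same mean and at least as much information. It therefore suffices to show \(\Delta_f(\rho)\le0\) on \((0,1)\) for every nonconstant monotone \(f\). Note that \(\Delta_f\) is continuous on \([0,1]\) and smooth on \((0,1)\), since \(u=T_\rho f\) is a polynomial in \(\rho\) with \(|u|<1\). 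Moreover \(\Delta_f(1)=h(m)-L\le0\).

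Suppose \(\Delta_f(\rho_0)>0\) for some \(\rho_0\in(0,1)\). Let \(\rho^*=\sup\{\rho\in[\rho_0,1]:\Delta_f>0\text{ on }[\rho_0,\rho]\}\). By continuity and \(\Delta_f(1)\le0\), we get \(\Delta_f(\rho^*)=0\) and \(\rho^*>\rho_0\), with \(\Delta_f>0\) on \([\rho_0,\rho^*)\). If \(\Delta_f>0\) implies \(\Delta_f'>0\) pointwise on \((0,1)\), then \(\Delta_f\) is strictly increasing on \([\rho_0,\rho^*)\). This forces \(\Delta_f(\rho^*)\ge\Delta_f(\rho_0)>0\), a contradiction. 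So the theorem reduces to the global growth implication \(\Delta_f(\rho)>0\Rightarrow\Delta_f'(\rho)>0\) for every \(\rho\in(0,1)\), which is the content of Lemma~\ref{lem:coverage}.

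I would prove that implication by splitting on \(\ell=\atanh\rho\).

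\textbf{Small correlation, \(0<\ell\le31/20\).} I would invoke Proposition~\ref{prop:low}. I expect it either shows \(\Delta_f\le0\) outright, in which case the hypothesis is vacuous, or gives the growth directly.

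\textbf{Intermediate range, \(31/20\le\ell\le7/2\).} This is exactly Proposition~\ref{prop:middle-growth}, because \(I_f'>\phi'\) is the same as \(\Delta_f'>0\).

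\textbf{Tail, \(\ell\ge7/2\).} Here I would apply Lemma~\ref{lem:budget} with \(\degree=\ell/2+5/4\). Its conclusion \eqref{eq:final-budget} gives \(\Delta_f'>0\) whenever \(\Delta_f>0\), provided two sets of scalar conditions hold: the baseline conditions \eqref{eq:canonical-baseline-conditions} and the yield test \eqref{eq:scalar-test}. The baseline conditions involve only \(\rho\), so I would verify them as one-variable inequalities in \(\ell\), handling \(\ell\to\infty\) asymptotically through \(\theta\to\pi/2\), \(\alpha\sim e^\ell/(2\ell)\) and \(\gamma\to\infty\).

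The yield test \(B/\ell\ge\unitcost(\ell_i)\) is the step I expect to be the main obstacle. It must hold for every height \(v=\ell_i\in(0,\ell]\), and it can fail when some \(a_i\) is close to \(1-|m|\), that is, when one coordinate is highly influential. I would split on \(a_i\). When \(a_i\) is small, the \(\Klsi\rho^2a^2\) branch of \(D(a)\) makes \(pD(a)/(av)\) small, and \(\rho c(v)/v\) stays bounded. When \(a_i\) is large, I would abandon the yield test and instead invoke Proposition~\ref{prop:local}, which rules out any counterexample with a sufficiently influential coordinate.

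The real work is choosing the threshold on \(a_i\) so that the two regimes overlap uniformly in \(\ell\ge7/2\). That should come down to a scalar check in \((\ell,v)\), handled by monotonicity in \(v\) together with an asymptotic analysis as \(\ell\to\infty\). Assembling the three ranges gives Lemma~\ref{lem:coverage}, and the contradiction above then proves \eqref{eq:main-ck-sign}.
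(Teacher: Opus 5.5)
\textbf{Gap in the tail yield test.} Your reduction to nonconstant monotone \(f\), the continuity argument, and the small and intermediate ranges all match the paper. Your supremum argument is equivalent to the paper's interior-maximum argument, and Proposition~\ref{prop:low} does give \(\Delta_f\le0\) outright there. The gap is in the tail. Your plan for the yield test \(B/\ell\ge\unitcost(\ell_i)\) uses only two mechanisms: the LSI branch \(D(a)\le\Klsi\rho^2a^2\) for small \(a_i\), and Proposition~\ref{prop:local} for large \(a_i\). No choice of threshold makes these overlap.

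Lemma~\ref{lem:budget} also needs \(B/\ell\ge\rho\). The exact identity \(C_d=\lambda^2[d-2-(d-1)^2/(\gamma+d)]\) gives \(B\le2\theta^2+\lambda^2(d-2)\). So \(B\ge\rho\ell\) forces \(d>\rho\ell/\lambda^2>(4/\pi^2)\rho\ell\). Then \(\Klsi\approx e^{2d-3}/2\) is exponentially large in \(\ell\), while \(B/\ell=O(1)\). The LSI cost \(p\Klsi\rho^2a/v\), with \(a=F(\ell)/F(v)\), stays below \(B/\ell\) only while \(a\) is exponentially small. In the paper this is the range \(v\le\ell/2\), where \(a<2e^{-\ell}\).

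Conversely, \eqref{eq:local} holds only when \(1-a=O(e^{-\ell})\). For example, at \(a=1/2\) its left side is about \(4h(1/2)e^{-2\ell}\approx2.25e^{-2\ell}\). Its right side is about \((\ell+\frac12)e^{-2\ell}\ge4e^{-2\ell}\) for \(\ell\ge7/2\). So a coordinate with \(a_i=1/2\) is covered by neither of your routes, at every \(\ell\ge7/2\).

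\textbf{Missing ingredient.} What handles moderate and large influences in the paper is the second, Parseval branch of Lemma~\ref{lem:deficit}:
\[
D(a)\le\frac{d_2}{2}a+\Bigl(d_1-\frac{d_2}{2}\Bigr)a^2 .
\]
It comes from \(\sum_{S\ni i}\widehat f(S)^2=a_i\), with singleton mass \(a_i^2\). It gives \(pD(a)/a\le pd_1\), which is of the same order as \(B\) rather than exponentially larger, uniformly up to \(a=1\). Even with this branch, the resulting yield
\[
Y_P(v)=\frac{rc(v)+A_0+A_1/F(v)}{v}
\]
must be controlled on the whole height interval \([\ell/2,V]\). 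Here \(F(V)=F(\ell)/a_{\rm cut}\), and \(a_{\rm cut}=1/(1+\frac83e^{-\ell})\) is the threshold above which Proposition~\ref{prop:local} applies.

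The paper does this with an endpoint principle. A curvature inequality for \(c\) and \(1/F\) shows that \(Y_P\) has no interior maximum once \(pd_2/(2r)\ge5/32\). This reduces everything to explicit margins at the half-height join \(v=\ell/2\) and at the cutoff \(v=V\). Your plan therefore needs a three-way split of heights: LSI below \(\ell/2\), Parseval with the endpoint principle on \([\ell/2,V]\), and the local criterion above \(V\). A two-way split in \(a_i\) does not suffice. Also, these scalar checks must be proved uniformly for all \(\ell\ge7/2\); an asymptotic analysis as \(\ell\to\infty\) alone would not close them.
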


Dividing by \(\log2\) and setting \(\rho=1-2p\) gives
\eqref{eq:intro-ck} for \(0\le p\le1/2\). Complementing every bit of
\(Y\) preserves mutual information and replaces \(p\) by \(1-p\),
so the same bound holds for \(1/2<p\le1\).

\subsection{Direct Bounds}

Set \(\rho_*=\tanh(31/20)\).

\begin{proposition}[Small correlation]\label{prop:low}
For every Boolean \(f\) and \(0\le\rho\le\rho_*\),
\[
 I_f(\rho)\le\phi(\rho).
\]
\end{proposition}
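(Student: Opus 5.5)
We prove Proposition~\ref{prop:low} by a direct comparison rather than the differential argument. The plan is to bound \(I_f(\rho)=\E\phi(u)-\phi(m)\) from above by a Fourier quadratic form in \(u=T_\rho f\). We then use the Boolean constraint on \(f\) to show that this form never exceeds the dictator value. By Lemma~\ref{lem:sorting} we may take \(f\) monotone and nonconstant. The case \(\rho=0\) is trivial.

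\textbf{Step 1 (scalar upper bound).} Expand \(\phi\) in even powers,
\[
 \phi(x)=\sum_{k\ge1}\frac{x^{2k}}{2k(2k-1)}.
\]
We seek a scalar inequality with a shift at the mean \(m\) of the form
\[
 \phi(x)-\phi(m)-g(m)(x-m)\le\kappa(m,\rho)\,(x-m)^2,\qquad |x-m|\le\rho\bigl(1-|m|\bigr)\ \text{or}\ |x|\le\rho.
\]
The range restriction is legitimate because \(|u|\le\rho\) when \(m=0\), and more generally \(u\) lies between \(\pm\rho\) and the extreme values of \(T_\rho f\). Averaging the inequality gives
\[
 I_f(\rho)\le\kappa\operatorname{Var}(u)=\kappa\sum_{S\ne\varnothing}\rho^{2|S|}\widehat f(S)^2.
\]

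\textbf{Step 2 (Fourier energy gap).} Write \(W_1=\sum_i\widehat f(\{i\})^2\). Then
\[
 \operatorname{Var}(u)\le\rho^2W_1+\rho^4\bigl(1-m^2-W_1\bigr).
\]
Level-one inequalities for biased Boolean functions control \(W_1\): we have \(W_1\le 1-m^2\), and the sharper bound \(W_1\lesssim (1-m^2)^2\log\frac{2}{1-m^2}\) holds for strongly biased \(f\) (the Chang/level-one inequality). When \(W_1\) is close to \(1-m^2\), the FKN theorem forces \(f\) to be close to a dictator, and hence \(m\) close to \(0\). The target is then
\[
 \kappa(m,\rho)\bigl[\rho^2W_1+\rho^4(1-m^2-W_1)\bigr]\le\phi(\rho).
\]
For \(m=0\) and \(W_1=1\), this holds with near equality because \(\phi(\rho)\approx\rho^2/2\,(1+\rho^2/6+\dots)\). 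So \(\kappa\) must be taken as the exact secant constant \(\phi(\rho)/\rho^2\) at \(m=0\), rather than \(1/2\) plus slack.

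\textbf{Step 3 (bias cases).} We split on \(|m|\). For \(|m|\) near \(1\), use \(I_f\le h(m)\) directly. That bound falls below \(\phi(\rho)\) once \(h(m)\le\phi(\rho_*)\), and \(\phi(\rho_*)\) is a sizable constant since \(\rho_*=\tanh(1.55)\approx0.914\). For intermediate \(|m|\), combine the scalar constant with \(\operatorname{Var}(u)\le\rho^2(1-m^2)\). Near \(m=0\), exploit the energy gap: mass above level one is damped by \(\rho^4\), and the convexity of \(\phi(x)/x^2\) in \(|x|\) gives the inequality at the dictator with equality, so the extra level-\(\ge2\) mass only helps.

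\textbf{Main obstacle.} The hardest step is the scalar inequality in Step~1 for \(\rho\) as large as \(\rho_*\approx0.914\). Here \(\phi(x)/x^2\) grows noticeably as \(|x|\to1\), so a pure quadratic bound with constant \(\phi(\rho)/\rho^2\) fails once \(u\) exceeds \(\rho\) in magnitude, which happens for biased \(f\). I would first try replacing the quadratic with a weighted form \(\sum_S\omega(|S|)\widehat u(S)^2\). The weights come from a Hermite-type comparison of \(\phi\) with \(\rho^{-2}\phi(\rho)x^2\) on \([-1,1]\), in the spirit of the centered support inequality~\eqref{eq:centered-support}. The remaining one-parameter scalar inequalities in \((m,\rho)\) would then be verified by monotonicity arguments, or by certified numerics on the compact range \(0\le\rho\le\rho_*\), as is done for the other channel checks in the appendices.
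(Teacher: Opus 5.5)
There is a genuine gap in Step~1, and it is structural rather than technical. The range claim is false: \(|u|\le\rho\) fails for balanced \(f\). For \(f=\mathrm{Maj}_3\) one has \(T_\rho f(1,1,1)=\tfrac32\rho-\tfrac12\rho^3=\rho+\tfrac12\rho(1-\rho^2)>\rho\). For majorities on \(n\) bits, \(\max T_\rho f\to1\) as \(n\to\infty\).

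Because \(\phi(x)/x^2\) is strictly increasing in \(|x|\), a pointwise quadratic bound covering the actual range of \(u\) forces \(\kappa>\phi(\rho)/\rho^2\). The only other information such a bound keeps is \(\operatorname{Var}(u)\), and at the dictator the bound is already tight. So near-dictators whose range exceeds \(\rho\) are out of reach.

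This is not hypothetical. Take the monotone balanced threshold function \(\operatorname{sgn}((k-\tfrac32)x_1+x_2+\cdots+x_k)\). Its weight above level one is \(2^{3-k}-k\,2^{4-2k}\), of order \(2^{-k}\), whereas \(\max u-\rho\) is of order \(((1+\rho)/2)^{k}\). For large \(k\), the increase in \(\kappa\) needed to cover the range costs more than the energy gap repays, and \(\kappa\operatorname{Var}(u)>\phi(\rho)\).

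There is also a warning sign in the structure of your plan. If Step~1 held with \(\kappa=\phi(\rho)/\rho^2\), your Step~2 target would reduce to the triviality \(W_1+\rho^2(1-m^2-W_1)\le1\). Your proposed repair, a Fourier-weighted form \(\sum_S\omega(|S|)\widehat u(S)^2\), does not follow from any pointwise scalar comparison. As an inequality for all Boolean \(f\), it would be a \(\Phi\)-stability statement of the same strength as the proposition, and you give no mechanism for proving it.

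The paper's argument supplies two ingredients your plan lacks.

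\textbf{A minorant with a linear term.} The paper uses \(h(x)\ge c_{\mathrm{ent}}(r)(1-|x|)(1+|x|-r^2)\) on all of \([-1,1]\), obtained as a convex combination of two tangent quadratics. It pairs this with \(\E|T_rf|\ge\E[fT_rf]=\sum_kr^kW_k\). This stability term is where Booleanity enters: every level \(k\ge2\) then carries the nonnegative coefficient \(r^{k+2}-r^{2k}\), leaving a deficit only in \(m^2\) and \(W_1\). The bound is lossy at the dictator, so it closes only under the explicit gap \(m^2+W_1\le(7/8)^2\). The paper proves this gap for \(\max_i|\widehat f(\{i\})|\le5/8\) and \(|m|\le5/8\), using a sign-sum moment bound and a balanced lift. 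FKN and the level-one inequality carry unspecified constants and cannot supply it.

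\textbf{A sharp local criterion.} The near-dictators with \(\max_i|\widehat f(\{i\})|\ge5/8\), where the inequality is tight, are handled by Proposition~\ref{prop:local} (Yu's local comparison, propagated to smaller \(\rho\)). Knowing from FKN that \(f\) is close to a dictator does not prove \(I_f\le\phi(\rho)\), and nothing in your plan plays this role.

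Finally, your large-bias step is wrong as written. The condition \(h(m)\le\phi(\rho_*)\) does not give \(h(m)\le\phi(\rho)\) for \(\rho<\rho_*\), since \(\phi(\rho)\to0\). You need the contraction \(I_f(\rho)\le\rho^2h(m)\) from \eqref{eq:scalar-S6a}; then \(h(m)<1/2\le\phi(\rho)/\rho^2\) for \(|m|\ge5/8\) suffices. The same contraction is what carries the paper from \(r_0=5/8\) down to all smaller \(\rho\).
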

\begin{proof}
See Appendix~\ref{app:low}.
\end{proof}

\begin{proposition}[Influential coordinate]\label{prop:local}
Let \(f\) be Boolean, \(0<\rho<1\), and
\(a=|\widehat f(\{i\})|\) for some coordinate \(i\).
If
\begin{equation}
 (1-\rho^2)h(\rho a)\ge(1-\rho^2a)h(\rho),
 \label{eq:local}
\end{equation}
then \(I_f(\rho)\le\phi(\rho)\).
\end{proposition}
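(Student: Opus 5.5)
The plan is to avoid the flow machinery entirely. Instead, I would split the information along coordinate \(i\) via the chain rule and apply the strong data-processing inequality of the BSC to the remaining coordinates. Write \(Y=(Y_i,Y_{-i})\), so that
\[
 I(f(X);Y)=I(f(X);Y_i)+I(f(X);Y_{-i}\mid Y_i).
\]
Conditionally on \(Y_i\), the vector \(X_{-i}\) is still uniform and independent of \((X_i,Y_i)\), and \(Y_{-i}\) is its coordinatewise BSC output. Moreover \(f(X)\to(X_{-i},Y_i)\to Y_{-i}\) is Markov. The tensorized SDPI for the BSC, with contraction coefficient \(\rho^2\) and uniform input, then gives
\[
 I(f(X);Y_{-i}\mid Y_i)\le\rho^2 I(f(X);X_{-i}\mid Y_i)
 =\rho^2\bigl[H(f\mid Y_i)-H(f\mid X_{-i},Y_i)\bigr].
\]
I would justify this carefully as a conditional version: apply the SDPI for each fixed value of \(Y_i\), using the conditional law of \(f\) given \((X_{-i},Y_i)\).

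Next I would evaluate the two conditional entropies explicitly.

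\emph{First entropy.} Given \(X_{-i}\), the function \(f\) is either constant or equal to \(\pm X_i\). In the latter (pivotal) case its conditional entropy given \(Y_i\) is exactly \(h(\rho)\). Hence
\[
 H(f\mid X_{-i},Y_i)=\Inf_i(f)\,h(\rho)\ge a\,h(\rho),
\]
since \(\Inf_i(f)\ge|\widehat f(\{i\})|=a\). In particular no monotone reduction is needed.

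\emph{Second entropy.} Since \(\E[f\mid Y_i]=m+\rho\widehat f(\{i\})Y_i\), we have
\[
 H(f\mid Y_i)=\tfrac12[h(m+\rho a)+h(m-\rho a)]=:G(m).
\]

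Combining these with \(H(f)=h(m)\) yields the single-coordinate bound
\[
 I_f(\rho)\le h(m)-(1-\rho^2)G(m)-\rho^2a\,h(\rho).
\]
It then suffices to show that the right-hand side is at most \(L-h(\rho)=\phi(\rho)\) for every admissible mean \(|m|\le1-a\).

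At \(m=0\) the requirement reads \(L-(1-\rho^2)h(\rho a)-\rho^2ah(\rho)\le L-h(\rho)\). This is exactly hypothesis \eqref{eq:local}. So the remaining task is the scalar claim that
\[
 \Psi(m)=h(m)-(1-\rho^2)G(m)
\]
is maximized over \(|m|\le1-a\) at \(m=0\).

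I expect this scalar claim to be the main obstacle. A naive bound \(I(f;Y_i)\le L-h(\rho a)\) is false at biased \(m\), so the \(h(m)\) term and the \(G(m)\) term must be handled jointly. \(\Psi\) is even, so I would first try to show \(\Psi'(m)\le0\) for \(m\in[0,1-a]\), that is,
\[
 2\atanh m\ \ge\ (1-\rho^2)\bigl[\atanh(m+\rho a)+\atanh(m-\rho a)\bigr].
\]
This would follow if \(x\mapsto\atanh(m+x)+\atanh(m-x)\) were controlled by its value at \(x=0\) times a factor of at most \(1/(1-\rho^2)\) at \(x=\rho a\). Since \(\atanh\) is convex on \([0,1)\) this is not automatic, so I would exploit the constraint \(m+\rho a\le 1-a+\rho a<1\).

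If the derivative route fails near \(m=1-a\), I would fall back on two other tools. One is concavity: \(\Psi\) is a difference of concave functions, so one can instead compare \(\Psi''\), using \(h''(x)=-1/(1-x^2)\). The other is to note that at the endpoint \(m=1-a\) the entropy \(h(m)\) is itself small, and to use the crude bound \(I_f\le h(m)\) there. If neither closes, hypothesis \eqref{eq:local} may need to be read as already encoding the worst case, which would require checking that the proposition is applied only where the symmetric bound suffices.
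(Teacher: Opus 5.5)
\textbf{Verdict: the proposal has a gap.} Your reduction matches the paper's, but the step that carries the real content, the scalar claim $\Psi(m)\le\Psi(0)$, is left unproven.

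Your chain-rule/SDPI step is sound. It is the paper's argument in information-theoretic form. For each fixed $Y_i=y_i$, the conditional SDPI is just the contraction \eqref{eq:scalar-S6a} applied to $v=\E[f\mid X_{-i},Y_i=y_i]$ on the remaining coordinates; this is the paper's ``noise coordinate $i$ first, then contract the rest.'' Your bound $I_f(\rho)\le h(m)-(1-\rho^2)G(m)-\rho^2a\,h(\rho)$ is exactly \eqref{eq:scalar-S6}, and your reduction at $m=0$ to \eqref{eq:local} is correct.

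The gap is the claim $\Psi(m)\le\Psi(0)$ for $|m|\le1-a$, which is the paper's mean correction \eqref{eq:scalar-S7}. You leave it open, and your closing hedge (rereading the hypothesis as encoding the worst case) shows you were unsure it holds. Without it, the argument covers only balanced $f$. Your endpoint fallback does not work. The hypothesis \eqref{eq:local} fails at $a=0$ and holds for $a$ near $1$. There $m=1-a$ is near $0$, so $h(m)$ is near $L>\phi(\rho)$, not small. A direct comparison of $\atanh$ values to get $\Psi'\le0$ is the hard way around.

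The missing step is the one you listed among your fallbacks: concavity of $\Psi$ on the admissible interval. Put $b=\rho a$ and use $h''(x)=-1/(1-x^2)$, together with
\[
 [1-(m+b)^2]+[1-(m-b)^2]=2(1-m^2-b^2),\qquad [1-(m+b)^2][1-(m-b)^2]=(1-m^2-b^2)^2-4m^2b^2 .
\]
After multiplying by the positive denominators, $-\Psi''(m)$ has the sign of
\[
 \rho^2\Bigl\{[1-(m+a)^2][1-(m-a)^2]+(1-\rho^2)a^2(1-m^2-a^2)\Bigr\}.
\]
Every factor here is nonnegative because $|m|+a\le1$. That constraint holds for Boolean $f$ since $m\pm\widehat f(\{i\})=\E f_{i,\pm}\in[-1,1]$. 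You were right to restrict to $|m|\le1-a$. The weaker condition $|m|+\rho a<1$ does not suffice, because the bracket turns negative once $|m|+a>1$.

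Hence $\Psi$ is even and concave on $[-(1-a),1-a]$. This gives $\Psi'(0)=0$ and $\Psi(m)\le\Psi(0)$, and your derivative inequality $\Psi'\le0$ on $[0,1-a]$ follows for free. This convexity computation is exactly how the paper proves \eqref{eq:scalar-S7}, and it completes your argument.
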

\begin{proof}
Appendix~\ref{app:scalar} derives a mean-dependent form of Yu's local
comparison~\cite[Theorem~4.6 and Remark~4.7]{Yu}.
\end{proof}

\subsection{Global Growth Comparison}

\begin{lemma}[Growth of a positive gap]\label{lem:coverage}
For every monotone Boolean \(f\) and \(0<\rho<1\),
\[
 I_f(\rho)>\phi(\rho)
 \quad\Longrightarrow\quad I_f'(\rho)>\phi'(\rho).
\]
\end{lemma}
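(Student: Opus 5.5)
The plan is a case split on \(\ell=\atanh\rho\), with each range handled by the estimate built for it.

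\emph{Small correlation, \(0<\ell\le31/20\).} Proposition~\ref{prop:low} gives \(I_f(\rho)\le\phi(\rho)\), so the hypothesis \(I_f(\rho)>\phi(\rho)\) cannot hold and the implication is vacuous. Here \(\rho\le\rho_*\).

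\emph{Intermediate range, \(31/20\le\ell\le7/2\).} This is exactly Proposition~\ref{prop:middle-growth}, applied to the monotone \(f\).

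\emph{Tail, \(\ell\ge7/2\).} Here I would use Lemma~\ref{lem:budget} with the reference degree \(\degree=\ell/2+5/4\), which satisfies \(\degree\ge3\ge2\). Since \(\Delta_f(\rho)>0\), the lemma's conclusion \(\Delta_f'(\rho)\ge(k_\Delta\ell/B)\Delta_f(\rho)>0\) gives the claim, provided \(B>0\) and \(k_\Delta>0\). Both hold once \(C_d\ge0\), since \(\theta>0\) and \(\gamma>0\). Three kinds of hypotheses must be verified.
\begin{itemize}
\item The baseline conditions \eqref{eq:canonical-baseline-conditions}, namely \(C_d\ge0\) and \(\gamma-b+C_d(1/(2L)-1)\ge0\). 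These are purely scalar functions of \(\ell\).
\item The condition \(B/\ell\ge\rho\). This is also scalar.
\item The yield test \(B/\ell\ge\unitcost(\ell_i)\) for every \(i\) with \(a_i>0\). This is the only part that involves \(f\), and only through \(a_i=F(\ell)/F(\ell_i)\in(0,1]\).
\end{itemize}

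Since the height \(\ell_i\in(0,\ell]\) determines \(a_i\), the yield test reduces to a one-variable inequality \(\sup_{0<v\le\ell}\unitcost(v)\le B/\ell\). This inequality is not true on the whole interval. As \(v\to\ell\) we get \(a\to1\), the dictator-like regime, where the angular cost \(\rho c(v)/v\) approaches \(\rho\,\theta^2/(\rho\ell)\). There the deficit term \(pD(a)/(av)\) with \(D(1)\approx d_1+\dots\) may push the yield above the baseline.

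Large influences are exactly what Proposition~\ref{prop:local} handles. For a fixed \(\rho\), the condition \((1-\rho^2)h(\rho a)\ge(1-\rho^2a)h(\rho)\) holds for \(a\) in some interval \([a_*(\rho),1]\). If some \(a_i\ge a_*(\rho)\), then \(I_f(\rho)\le\phi(\rho)\), again contradicting the hypothesis. Otherwise every \(a_i<a_*(\rho)\), so every \(\ell_i<v_*(\rho)\) with \(F(v_*)=F(\ell)/a_*\). It then suffices to check \(\unitcost(v)\le B/\ell\) on \(0<v\le v_*(\rho)\).

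For small \(v\) the term \(c(v)/v\to1\), so the angular part tends to \(\rho\). Since \(\ell\ge7/2\), \(B/\ell\ge\rho\) leaves room only if \(B/\ell\) strictly exceeds \(\rho\) by the deficit allowance. The branch \(D(a)\le\Klsi\rho^2a^2\) makes \(pD(a)/(av)\) of order \(a/v\), which is tiny when \(a\) is small.

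The main obstacle is this tail verification. It requires showing, uniformly in \(\ell\in[7/2,\infty)\), a family of explicit but delicate scalar inequalities among \(B\), \(\unitcost\), and \(a_*(\rho)\). As \(\ell\to\infty\), \(\theta\to\pi/2\), and the quantities involve \(e^{2\degree}\sim e^{\ell}\) competing with \(F(\ell)\sim e^{-2\ell}\)-type decay. My first step would be to take asymptotic expansions in \(\ell\) to confirm the leading-order margins; the choice \(\degree=\ell/2+5/4\) is presumably tuned so that \(\Klsi F(\ell)\) stays bounded. Then I would treat a compact range of \(\ell\) by monotonicity or interval arithmetic, with the two branches of \(D(a)\) covering small and moderate \(a\) respectively, in the manner of the appendix channel checks.

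Once the three ranges are covered, the lemma follows. Monotonicity of \(f\) is used only to invoke Propositions~\ref{prop:middle-growth} and Lemma~\ref{lem:budget}, together with the identity \(\Delta_f'=A(u)/\rho-\ell\) from \eqref{eq:action}.
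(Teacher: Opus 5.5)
Your skeleton matches the paper's proof. The range \(\ell\le31/20\) is vacuous by Proposition~\ref{prop:low}, and the middle range is Proposition~\ref{prop:middle-growth}. For \(\ell\ge7/2\), one applies Lemma~\ref{lem:budget} with \(\degree=\ell/2+5/4\) after Proposition~\ref{prop:local} removes large influences.

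\textbf{The gap.} The tail verification is left open, and it is the only substantive content beyond citing those results. The plan you sketch for it would not close it as stated. You keep \(a_*(\rho)\) implicit and propose leading-order asymptotics plus interval arithmetic on a compact range. But the comparison runs over an unbounded \(\ell\)-range and a continuum of heights, and near the top of the height range its sign is decided only at scale \(e^{-\ell}\).

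Dictators are the equality case. The exact identity \(B-pd_1-\rho c(\ell)=-C_dE/L\), with \(E=h(\rho)-L(1-\rho^2)>0\), shows that the yield test fails at \(v=\ell\). At the cutoff height \(V\) one only has
\[
M_{\rm cut}\ge-C_dE/L+p(d_1-d_2/2)(1-a_{\rm cut})-(B/\ell)(\ell-V),
\]
and all three terms are exponentially small. Positivity therefore needs an explicit reach of the local criterion of the right exponential order. The paper shows that \eqref{eq:local} holds on \([a_{\rm cut},1]\) with \(a_{\rm cut}=1/(1+\frac83e^{-\ell})\) (Section~\ref{app:manual-local}). It also uses \(\ell-V<\frac53e^{-\ell}\), which follows from the profile decay \eqref{eq:analytic-profile-speed}.

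\textbf{The small-height heuristic.} This part of your plan is also off. With \(a=F(\ell)/F(v)\), the ratio \(a/v=F(\ell)/(vF(v))\) does not vanish as \(a\to0\). It increases in \(v\) and tends to \(F(\ell)/L\) as \(v\to0\), whereas \(\Klsi\) is of order \(e^{2\degree-3}=e^{\ell-1/2}\). Hence the LSI cost \(p\Klsi\rho^2a/v\) is of order \(\theta^2/\ell\) at \(v=\ell/2\). That is comparable to the margin \(B/\ell-\rho\) at \(\ell=7/2\), and the cost grows exponentially in \(2v-\ell\) beyond \(\ell/2\). This balance is what \(\degree=\ell/2+5/4\) is tuned for, not boundedness of \(\Klsi F(\ell)\), which tends to zero.

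\textbf{The missing reductions.}
(i) Monotonicity of \(vF(v)\) and \(c(v)\le v\) turn all heights \(0<v\le\ell/2\) into the single inequality \(M_{\rm small}>0\) via \eqref{eq:scalar-S5}. This also gives \(B/\ell>\rho\).
(ii) Above \(\ell/2\) one must switch to the Parseval branch, whose yield is \([\rho c(v)+A_0+A_1/F(v)]/v\) with \(A_0=pd_2/2\) and \(A_1=p(d_1-d_2/2)F(\ell)\). This yield has no interior maximum, by the curvature bound \eqref{eq:original-shape} and \(A_0/\rho>5/32\).
(iii) Hence only two endpoint margins remain: \(M_{\rm join}>0\) at \(\ell/2\) and \(M_{\rm cut}>0\) at \(V\). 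These, together with the baseline conditions (which follow from \(\gamma>\ell+7/2\) and the exact multiplier identities), are proved analytically for all \(\ell\ge7/2\) in Appendix~\ref{app:verification}.

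Without these ingredients, your argument establishes the lemma only for \(\ell\le7/2\).
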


\begin{proof}
Assume \(I_f(\rho)>\phi(\rho)\) and write \(\ell=\atanh\rho\).
Proposition~\ref{prop:low} excludes \(\ell\le31/20\), and
Proposition~\ref{prop:middle-growth} covers
\(31/20\le\ell\le7/2\).

For \(\ell\ge7/2\), Proposition~\ref{prop:local} implies that
\eqref{eq:local} fails for every coordinate.
Appendix~\ref{app:endpoints} reduces the height comparison to
the half-height join and cutoff. With \(\gamma=\lambda\alpha\)
and \(\degree=\ell/2+5/4\), Appendix~\ref{app:verification}
establishes \eqref{eq:canonical-baseline-conditions} and
\[
 \frac B\ell\ge\max\{\rho,\unitcost(\ell_i)\}
 \qquad\text{for every }i\text{ with }a_i>0.
\]
Lemma~\ref{lem:budget} therefore gives \(\Delta_f'(\rho)>0\).
\end{proof}

\begin{proof}[Proof of Theorem~\ref{thm:CK}]
Constants are immediate. By Lemma~\ref{lem:sorting}, it suffices
to consider nonconstant monotone \(f\).
The function \(\Delta_f=I_f-\phi\) is continuous on \([0,1]\),
differentiable on \((0,1)\), and satisfies
\[
 \Delta_f(0)=0,\qquad \Delta_f(1)=-\phi(m)\le0.
\]
If it were positive somewhere, it would attain a positive
maximum at an interior point where \(\Delta_f'=0\),
contradicting Lemma~\ref{lem:coverage}.
\end{proof}

%% file: multibit-extension.tex
\section{Vector-Valued Quantization: Counterexamples from Eight Bits}
\label{sec:multibit-extension}

Having established the one-bit bound in Theorem~\ref{thm:CK}, we ask
whether a \(k\)-bit summary is optimized by retaining \(k\) input
coordinates. Failure for large \(k\) is known~\cite{CT}; here we give
explicit counterexamples beginning at eight bits and examine the
smaller-output regime.

Shortened Hamming quantizers violate
the coordinate benchmark for \(k=8,9,10\), and hence, by adding independent
coordinate outputs, for every \(k\geq 8\). The same construction misses the
benchmark at \(k=7\) by only \(1/64\) in the high-noise coefficient. A broad
collection of searches below eight bits also failed to produce a violation.
This leads us to conjecture that the coordinate projection is optimal for
\(2\leq k\leq7\).

Throughout this section, entropies and mutual informations are measured in
bits. Let
\begin{align*}
 X&\sim\operatorname{Unif}(\mathbb F_2^n),\qquad Y=X\oplus Z,\\
 Z_i&\stackrel{\mathrm{iid}}{\sim}\operatorname{Bernoulli}(p),
 \qquad 0\leq p\leq\frac12.
\end{align*}
with \(Z\) independent of \(X\), and write
\[
 h_2(p)=-p\log_2p-(1-p)\log_2(1-p),
 \qquad C(p)=1-h_2(p).
\]

\subsection{Problem Formulation and Known Results}

For \(n\geq k\), define
\begin{equation}
 M_k(n,p):=\max_{f:\mathbb F_2^n\to\mathbb F_2^k}I(f(X);Y).
 \label{eq:multibit-Mkn}
\end{equation}
The coordinate projection \(D_k(x)=(x_1,\ldots,x_k)\) attains
\(kC(p)\). The vector extension asks whether
\begin{equation}
 M_k(n,p)\leq kC(p)\qquad\text{for every }n\geq k.
 \label{eq:vector-CK}
\end{equation}

Besides \(k=1\), coordinate optimality is known for \(k=n-1\), including
binary-input memoryless output-symmetric channels~\cite{HO}.
Chandar and Tchamkerten characterized the positive-rate asymptotics and
gave counterexamples from the \([15,11,3]\) Hamming and \([23,12,7]\)
Golay codes, asking for constructions with \(k<11\)~\cite{CT}.

\begin{proposition}[Affine maps obey the coordinate bound]
\label{prop:affine-multibit}
If \(f(x)=Ax+b\) has rank \(r\leq k\), then
\[
 I(f(X);Y)\leq rC(p)\leq kC(p).
\]
\end{proposition}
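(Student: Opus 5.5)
The plan is to reduce to a change of basis in which the affine map becomes a coordinate projection. We may take \(A\) to be a \(k\times n\) matrix over \(\mathbb F_2\) of rank \(r\). Since \(f(X)=AX+b\) and the shift by \(b\) is a bijection of the output alphabet, \(I(f(X);Y)=I(AX;Y)\). Moreover \(AX\) is in bijection with \(RX\), where \(R\) is any \(r\times n\) matrix whose rows form a basis of the row space of \(A\). We therefore reduce to a full-row-rank \(R\) with \(r\le\min\{k,n\}\).

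Let \(W=RX\) and \(V=RY=W\oplus RZ\). The chain \(W\to Y\to\) is not directly useful, so we compute \(I(W;Y)\) by exploiting linearity. Extend \(R\) to an invertible \(n\times n\) matrix \(G\) whose first \(r\) rows are \(R\). Then \(X'=GX\) is uniform, and \(Y'=GY=X'\oplus GZ\) is a bijective function of \(Y\), so \(I(W;Y)=I(X'_{1..r};Y')\).

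Write \(Y'=(Y'_1,Y'_2)\). The component \(Y'_2=X'_2\oplus (GZ)_2\) involves \(X'_2\), which is uniform and independent of \((X'_1,Z)\). Hence \(Y'_2\) is independent of \((X'_1,Y'_1)\) jointly; more precisely, \((X'_1,Y'_1)\) is a function of \((X'_1,Z)\), and \(X'_2\) is uniform and independent of these, so \(Y'_2\) is uniform and independent of \((X'_1,Y'_1)\). This gives
\[
I(X'_1;Y')=I(X'_1;Y'_1)=I(RX;RX\oplus RZ).
\]
We now have an additive channel on \(\mathbb F_2^r\) with uniform input and noise \(N=RZ\), so
\[
I=r-H(N)\quad\text{bits}.
\]
It remains to show \(H(RZ)\ge r\,h_2(p)\) for full-rank \(R\). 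Choose \(r\) columns of \(R\) forming an invertible submatrix, indexed by a set \(S\). Then \(RZ=R_SZ_S+R_{S^c}Z_{S^c}\), and conditioning reduces entropy:
\[
H(RZ)\ge H(RZ\mid Z_{S^c})=H(R_SZ_S)=H(Z_S)=r\,h_2(p).
\]
This yields \(I\le r(1-h_2(p))=rC(p)\le kC(p)\).

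The only delicate step is the independence claim used to discard \(Y'_2\), which needs the extension \(G\) to be invertible so that \(X'_2\) is uniform and independent of \(X'_1\). Everything else is routine entropy bookkeeping.
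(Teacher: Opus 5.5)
Your proof is correct and has the same two-step structure as the paper's. First establish \(I(f(X);Y)=r-H(AZ)\), where \(H(AZ)=H(RZ)\); then bound \(H(RZ)\ge H(RZ\mid Z_{S^c})=r\,h_2(p)\) using \(r\) independent columns.

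The only difference is how you obtain the identity. The paper notes that a uniform input makes \(Y\) independent of \(Z\), so \(H(f(X)\mid Y)=H(AY\oplus AZ\oplus b\mid Y)=H(AZ)\) in one line. Your basis extension \(G\) and the argument discarding \(Y'_2\) reach the same identity, validly but with more work.
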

\begin{proof}
Uniformity makes \(Z\) independent of \(Y\) and gives
\(I(f(X);Y)=r-H(AZ)\).
For \(r\) independent columns of \(A\), indexed by \(J\),
\[
 H(AZ)\geq H(AZ\mid Z_{J^c})=r h_2(p).
\]
\end{proof}

\subsection{Coset Quantizers and the High-Noise Expansion}
\label{subsec:coset-diagnostic}

Let \(H\in\mathbb F_2^{(n-k)\times n}\) have full row rank and
\(C=\ker H\) have dimension \(k\).
Choose a minimum-weight leader \(e_a\) for each syndrome, fixing ties,
and define the coset quantizer
\begin{equation}
 Q(x)=x\oplus e_{Hx}\in C.
 \label{eq:coset-quantizer}
\end{equation}
Identify \(C\) with \(\mathbb F_2^k\). For uniform \(X\),
\(E=e_{HX}\) is uniform over the leaders and independent of \(Q(X)\);
hence
\begin{equation}
 I(Q(X);Y)=n-H(E\oplus Z).
 \label{eq:coset-information}
\end{equation}

For any fixed quantizer \(Q:\mathbb F_2^n\to\mathbb F_2^k\),
write \(U=Q(X)\) and set
\begin{equation}
 B(Q)=\sum_{j=1}^n
 \mathbb E\!\left[\left(\mathbb E[(-1)^{X_j}\mid U]\right)^2\right].
 \label{eq:general-B}
\end{equation}
With \(\rho=1-2p\), as \(\rho\to0\) the high-noise expansion is
\begin{equation}
 I(U;Y)-kC(p)
 =\frac{\rho^2}{2\ln2}\bigl(B(Q)-k\bigr)+O(\rho^4).
 \label{eq:general-high-noise}
\end{equation}
To see this, for each output of positive probability put
\(m_j(u)=\mathbb E[(-1)^{X_j}\mid U=u]\). The density of
\(Y\mid U=u\) relative to the uniform measure is
\[
 1+\rho\sum_jm_j(u)(-1)^{y_j}+O(\rho^2).
\]
Taylor expansion of relative entropy and Walsh orthogonality give
the coefficient \(B(Q)/(2\ln2)\). The expansion is analytic and
even in \(\rho\): changing its sign complements every output
coordinate without changing mutual information. Subtracting the
expansion of \(kC(p)\) proves \eqref{eq:general-high-noise}.
For coset quantizers, translation within each fiber gives
\[
 B(Q)=B(E):=\sum_{j=1}^n\left(\mathbb E[(-1)^{E_j}]\right)^2.
\]
Thus \(B(E)>k\) gives a counterexample for all \(p<1/2\)
sufficiently close to \(1/2\).

\subsection{Shortened-Hamming Counterexamples}
\label{subsec:shortened-hamming}

Identify nonzero vectors in \(\mathbb F_2^4\) with \(1,\ldots,15\)
through their binary expansions. For \(s=1,\ldots,6\), let \(H_s\)
have columns \(s+1,\ldots,15\), so
\[
C_s=\ker H_s=[15-s,11-s,3],\quad n=15-s,\quad k=11-s.
\]
Indeed, columns \(8,9,10,12\) give rank four, while the columns
are distinct and nonzero and \(7\oplus8\oplus15=0\).
Writing \(\mathbf e_a\) for the unit vector indexed by column \(a\),
choose the minimum-weight leaders
\begin{equation}
 e_a=
 \begin{cases}
 0, & a=0,\\
 \mathbf e_a, & a\in\{s+1,\ldots,15\},\\
 \mathbf e_8\oplus\mathbf e_{8\oplus a}, & a\in\{1,\ldots,s\}.
 \end{cases}
 \label{eq:star-leaders}
\end{equation}
Both columns in the last line are retained, so these are valid
minimum-weight leaders. Among the \(16\) leaders, coordinate \(8\)
occurs \(s+1\) times, its \(s\) partners twice each, and each
other coordinate once. Thus
\[
 B_s:=B(E)=\frac{(7-s)^2+36s+49(14-2s)}{64},
\]
and consequently
\begin{equation}
 B_s-k=\frac{s^2-12s+31}{64},
 \label{eq:star-B-minus-k}
\end{equation}
which is positive for \(s=1,2,3\).

\begin{proposition}[Failure for every \(k\geq8\)]
\label{prop:k8-k10-counterexamples}
\label{cor:failure-all-k-ge-8}
For every integer \(k\geq8\), there is a quantizer
\(f:\mathbb F_2^{k+4}\to\mathbb F_2^k\) such that
\[
 I(f(X);Y)>kC(p)
\]
for all \(p<1/2\) sufficiently close to \(1/2\).
\end{proposition}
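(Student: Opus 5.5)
The plan is to combine the shortened-Hamming quantizers for \(s=1,2,3\) with independent coordinate outputs, and to read off the sign of the gap from the high-noise expansion \eqref{eq:general-high-noise}. For \(k\in\{8,9,10\}\) we take \(s=11-k\in\{3,2,1\}\). The coset quantizer \(Q_s\) of \eqref{eq:coset-quantizer} with leaders \eqref{eq:star-leaders} maps \(\mathbb F_2^{15-s}=\mathbb F_2^{k+4}\) to \(C_s\cong\mathbb F_2^{k}\). By \eqref{eq:star-B-minus-k}, \(B(Q_s)-k=(s^2-12s+31)/64\) equals \(20/64\), \(11/64\), \(4/64\) for \(s=1,2,3\), which is strictly positive. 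Then \eqref{eq:general-high-noise} gives
\[
I(Q_s(X);Y)-kC(p)=\frac{\rho^2}{2\ln 2}\bigl(B(Q_s)-k\bigr)+O(\rho^4).
\]
The leading term dominates as \(\rho\to0\), so the gap is positive for all \(p<1/2\) close enough to \(1/2\).

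For general \(k\ge8\), we write \(k=8+t\) with \(t\ge0\). We use \(f(x,x')=(Q_3(x),x')\), where \(x\in\mathbb F_2^{12}\) and \(x'\in\mathbb F_2^{t}\). The input then has dimension \(12+t=k+4\) and the output has \(8+t=k\) bits, as required. The blocks \((X,Y)\) and \((X',Y')\) are independent, because \(X\) is uniform with i.i.d. noise on every coordinate. Hence
\[
I(f(X,X');(Y,Y'))=I(Q_3(X);Y)+I(X';Y')=I(Q_3(X);Y)+tC(p),
\]
and the gap over \(kC(p)\) is exactly the gap of \(Q_3\) over \(8C(p)\), which is positive near \(p=1/2\) by the first step. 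This also covers \(k=9,10\), so the separate cases \(s=1,2\) are not needed.

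The main thing to check is that \eqref{eq:general-high-noise} applies with an honest \(O(\rho^4)\) remainder for this fixed \(Q_3\). That holds because the mutual information is an analytic, even function of \(\rho\) for fixed finite \(n\), as noted after the expansion. One should also confirm that the leaders in \eqref{eq:star-leaders} are valid for \(s=3\), and that the value of \(B_s\) correctly counts how often each coordinate is used by the leaders. Both facts are stated just before the proposition. As an alternative, one can apply \eqref{eq:general-high-noise} directly to \(f\): the identity columns contribute exactly \(t\) to \(B(f)\), so \(B(f)-k=B(Q_3)-8>0\).
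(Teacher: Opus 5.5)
Your proposal is correct and follows essentially the paper's proof. Positivity of \(B_s-k\) from \eqref{eq:star-B-minus-k}, together with the high-noise expansion \eqref{eq:general-high-noise}, handles the base case. Appending \(k-8\) independent coordinates to the eight-bit quantizer then adds exactly \((k-8)C(p)\), which gives every \(k\geq8\) with input dimension \(k+4\); as you note, this padding step already covers \(k=9,10\), so the separate \(s=1,2\) constructions are not logically needed.
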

\begin{proof}
For \(k=10,9,8\), respectively, \eqref{eq:star-B-minus-k} gives
\(B_s-k=5/16,11/64,1/16>0\); apply
\eqref{eq:general-high-noise}.
For \(k>8\), append \(k-8\) independent coordinates to the
eight-bit quantizer. Their information adds \((k-8)C(p)\),
preserving the strict excess.
\end{proof}

Table~\ref{tab:shortened-hamming} summarizes numerical evaluations
of \(\Delta_k(p)=I(Q(X);Y)-kC(p)\). The displayed interval
endpoints and maxima are numerical estimates; the preceding proof
does not rely on them.

\begin{table*}[t]
\caption{Shortened-Hamming quantizers: exact high-noise excesses
and numerically estimated positive intervals and maxima.}
\label{tab:shortened-hamming}
\centering
{\footnotesize\setlength{\tabcolsep}{7pt}\renewcommand{\arraystretch}{1.12}
\begin{tabular}{@{}ccccc@{}}
\toprule
\(k\) & code & \(B_s-k\) & numerical \(p\)-interval
& \(\max_p\Delta_k(p)/k\) \\ \midrule
10 & \([14,10,3]\) & \(5/16\)
& \((0.088658,1/2)\) & \(4.48440\!\times\!10^{-3}\) \\
9 & \([13,9,3]\) & \(11/64\)
& \((0.152276,1/2)\) & \(1.83320\!\times\!10^{-3}\) \\
8 & \([12,8,3]\) & \(1/16\)
& \((0.265563,1/2)\) & \(3.21308\!\times\!10^{-4}\) \\
\bottomrule
\end{tabular}}
\end{table*}

\subsection{Searches Below Eight Bits}
\label{subsec:below-eight-searches}

For \(k=7,6,5\), the same construction gives
\(B_s-k=-1/64,-1/16,-5/64\), respectively.
At \(k=7\), exhaustive enumeration of all \(1365\) Hamming shortenings
and \(451{,}920\) minimum-weight leader tables gave the maximum
\(B=447/64=6.984375\). Numerical checks of the evaluated candidates
across noise levels found no violation.

Additional searches covered \(1941\) puncturings of the \([15,7,5]\)
BCH code; all \(651\) seven-dimensional codes between
\(\mathrm{RM}(1,4)\) and \(\mathrm{RM}(2,4)\);
\(203\) sampled Golay shortenings;
\(345{,}000\) distinct-column parity-check searches with redundancies
\(5,\ldots,8\); and all \(255\) seven-dimensional quotients of the
eight-bit Hamming quantizer, up to output relabeling. We also tested
repeated columns and
nonminimum coset representatives, with tie optimization where applicable.
No violation was found at the tested crossover probabilities.

For \(128\) uniformly sampled codebooks of \(2^7\) words in
\(\mathbb F_2^{15}\), using fixed random nearest-neighbor tie assignments,
the mean \(B(Q)\) was \(6.379431\), with range
\([6.342219,6.423068]\). No sampled quantizer violated the benchmark
at the tested probabilities. Heuristic tie optimization on \(64\)
additional codebooks raised the mean coefficient to \(6.6954\),
with maximum \(6.71365\), still below the shortened-Hamming value.
These searches provide evidence, rather than an optimality proof.

\subsection{A Seven-Bit Threshold Conjecture}
\label{subsec:seven-bit-conjecture}

Appending independent coordinate outputs propagates any counterexample
to every larger output size. Together with the constructions and
searches above, this motivates the following conjecture.

\begin{conjecture}[Seven-bit threshold]
\label{conj:seven-bit-threshold}
For every \(2\leq k\leq7\), every \(n\geq k\), every
\(f:\mathbb F_2^n\to\mathbb F_2^k\), and every \(0\leq p\leq1/2\),
\begin{equation}
 I(f(X);Y)\leq kC(p).
 \label{eq:seven-bit-conjecture}
\end{equation}
Equality is attained by a coordinate projection.
\end{conjecture}

By the same padding argument, proving the conjecture for \(k=7\)
would settle the entire range. A first target is \(B(Q)\leq7\)
for every seven-bit quantizer, which controls the leading high-noise
coefficient. The scalar theorem cannot simply be iterated: conditioning
on earlier output bits generally destroys the uniform product input
distribution required for the conditional information bounds.

%% file: conclusion.tex
\section{Conclusion}
\label{sec:conclusion}

We proved that a coordinate is the most informative Boolean function of
a uniform binary vector observed through a BSC. The proof combines
monotone sorting, entropy flow, an entropy-constrained edge envelope, and
spectral information carried by pivotal sets. Its key global feature is
the growth implication: a positive advantage over a dictator would have
to increase with channel correlation and therefore cannot be reconciled
with the noiseless endpoint.

The vector-valued problem exhibits a different behavior. Shortened
Hamming quantizers violate the coordinate benchmark for every fixed
\(k\geq8\), whereas the structured and random searches, concentrated on
\(k=7\), found no counterexample below eight bits. This leaves a concrete
threshold problem. A
particularly useful first step would be to prove the sharp high-noise
bound \(B(Q)\leq7\) for every seven-bit quantizer and characterize its
equality cases. Possible approaches to a complete result include a
conditional form of the scalar theorem or a multiway entropy-flow
inequality, since the ordinary chain rule does not preserve
the uniform cube measure after conditioning.

%% file: ai-disclosure.tex
\section*{Generative AI Disclosure}

The proof of the single-bit result, including its detailed mathematical derivations, was generated by generative AI systems; the authors' role was high-level direction, feedback, and final review. The multibit extension was entirely human-directed: the authors formulated the problem, selected the code-based constructions and computational questions, and interpreted the evidence and conjectures. Generative AI systems assisted with developing and checking proofs, implementing and verifying computations, and refining the exposition. The authors reviewed the complete manuscript and accept full responsibility for its mathematical claims, citations, and final text.

%% file: scalar.tex
\section{Scalar Edge and Entropy Inequalities}\label{app:scalar}\label{ck-the-scalar-and-edge-lemmas}

This appendix proves the scalar comparisons behind
Lemma~\ref{lem:edges} and Proposition~\ref{prop:local}.
Centering and convexity give the common entropy-budget bounds, and
profile estimates control their angular correction.
Entropy contraction and a mean correction give the
influential-coordinate criterion.

\subsection{Convex Lower Bound for Centered Edges}\label{centered-edges-give-a-convex-lower-bound}

For edge values \(p,q\in(-1,1)\), define
\begin{align*}
 \edgewidth&=\frac{|p-q|}{2},& H_e&=\frac{h(p)+h(q)}2,\\
 a(p,q)&=\frac{(p-q)(g(p)-g(q))}{4},\\
 j(p,q)&=\frac{(\arcsin p-\arcsin q)^2}{4}.
\end{align*}
Write \(a=a(p,q)\), \(j=j(p,q)\), and, for \(k>0\), define
\[
 \theta_{\rm edge}(k)=\arcsin(\tanh k),\qquad
 \corr(F(k))=k-\frac{\theta_{\rm edge}(k)^2}{\tanh k}.
\]
We prove
\begin{align}
 a(p,q)&\ge \edgewidth F^{-1}(H_e/\edgewidth),\nonumber\\
 a(p,q)-j(p,q)&\ge \edgewidth\corr(H_e/\edgewidth).
 \tag{S2}\label{eq:scalar-S2}
\end{align}
For \(\edgewidth,H_e>0\), both right-hand sides are convex in
\((\edgewidth,H_e)\) and decrease with \(H_e\). At
\(\edgewidth=0\), with \(H_e>0\), they extend continuously by zero.

\subsubsection{Centering property}\label{why-centering-the-edge-helps}

Assume \(p>q\), and write \(p=\tanh(v+k)\), \(q=\tanh(v-k)\),
where \(k>0\). Then \(a=\edgewidth k\). With
\(M=\cosh(2v)\) and \(C_k=\cosh(2k)\),
\begin{align*}
 \edgewidth&=\frac{\sinh(2k)}{M+C_k},\\
 H_e&=\frac12\log[2(M+C_k)]
       -\frac{v\sinh(2v)+k\sinh(2k)}{M+C_k}.
\end{align*}
For \(v>0\), the derivative of \(H_e/\edgewidth\) has the sign of
\(\log[2(\cosh x+C_k)]-x\coth x\), where \(x=2v\). It is nonnegative:
\begin{align*}
 \log[2(\cosh x+C_k)]
 &\ge x+2\log(1+e^{-x})\\
 &\ge x+\frac2{e^x+1}\\
 &\ge x+\frac{2x}{e^{2x}-1}=x\coth x.
\end{align*}
Here we used \(\log(1+y)\ge y/(1+y)\) and \(e^x-1\ge x\).
Evenness therefore gives
\[
                         H_e/\edgewidth \ge F(k).               \tag{S3}\label{eq:scalar-S3}
\]

For \(b_{\rm edge}=\sinh k\) and \(x=b_{\rm edge}/\cosh v\),
the half-angle identity gives
\begin{align*}
 \arcsin p-\arcsin q&=2\arctan x,\\
 \frac{j}{\edgewidth}
 &=\frac{b_{\rm edge}}{\sqrt{1+b_{\rm edge}^2}}(1+x^2)
   \left(\frac{\arctan x}{x}\right)^2.
\end{align*}
The logarithmic derivative of
\((1+x^2)(\arctan x/x)^2\) is
\(2(x-\arctan x)/[x(1+x^2)\arctan x]>0\). Thus \(x\le b_{\rm edge}\) gives
\[
                         j/\edgewidth \le\theta_{\rm edge}(k)^2/\tanh k.    \tag{S4}\label{eq:scalar-S4}
\]

\subsubsection{Averaging by convexity}\label{why-the-bounds-can-be-averaged}

Put \(L_k=\log(2\cosh k)\). Then
\[
 F'(k)=-\frac{L_k}{\sinh^2 k},\qquad
 (F^{-1})'(F(k))=-\frac{\sinh^2 k}{L_k}.
\]
The positive ratio increases with \(k\), since \(2L_k>\tanh^2 k\).
Hence \(F^{-1}\) is decreasing and convex.

Similarly,
\begin{align*}
 \frac d{dk}\left(k-\frac{\theta_{\rm edge}(k)^2}{\tanh k}\right)
 &=\left(1-\frac{\theta_{\rm edge}(k)}{\sinh k}\right)^2,\\
 \corr'(F(k))&=-\frac{(\sinh k-\theta_{\rm edge}(k))^2}{L_k}.
\end{align*}
The latter positive ratio increases: its derivative has the sign of
\(2L_k\sinh k-(\sinh k-\theta_{\rm edge}(k))>0\), since \(2L_k>1\).
Thus \(\corr\) is decreasing and convex, and is nonnegative by its
zero limit as \(k\downarrow0\).

Combining these facts with \eqref{eq:scalar-S3}--\eqref{eq:scalar-S4}
proves \eqref{eq:scalar-S2}.
For \(\psi\in\{F^{-1},\corr\}\), the perspectives
\(\edgewidth\psi(H_e/\edgewidth)\) are convex and decrease with \(H_e\),
so Jensen applies to \eqref{eq:scalar-S2}.

Let \(u=T_\rho f\), with \(f\) nonconstant monotone Boolean and
\(0<\rho<1\). Each active coordinate has \(\edgewidth>0\),
\(\E\edgewidth=\rho a_i\), and \(\E H_e=\E h(u)\).
For any upper bound \(H_{\rm budget}\ge\E h(u)>0\), Jensen and
decrease in entropy give
\begin{align*}
 \E a&\ge\rho a_i v_i,\\
 \E(a-j)&\ge\rho a_i[v_i-c(v_i)],\\
 F(v_i)&=H_{\rm budget}/(\rho a_i).
\end{align*}
Summing proves Lemma~\ref{lem:edges}; inactive coordinates contribute zero.

\subsection{Monotonicity of the Entropy Profile}
\label{a-short-proof-that-vfv-decreases}

\begin{appendixlemma}[Profile monotonicity]
\label{lem:global-profile}
The function \(vF(v)\) strictly decreases for \(v>0\).
\end{appendixlemma}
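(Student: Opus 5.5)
The plan is to pass to the log-odds parametrization, in which \(vF(v)\) becomes an explicit elementary function. Everything then reduces to a one-variable inequality, which I will handle with the elementary bound \(\log(1+y)\le y\) and a repeated-differentiation argument.

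\emph{Step 1: closed form.} With \(t=\tanh v\) we have \(\phi(t)=v\tanh v-\log\cosh v\). Writing \(L_v=\log(2\cosh v)\) as in the convexity argument above, this gives
\[
 h(\tanh v)=L_v-v\tanh v,\qquad
 vF(v)=v\coth v\,L_v-v^2 .
\]

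\emph{Step 2: derivative.} Using \(L_v'=\tanh v\), differentiation gives
\[
 \frac{d}{dv}\bigl(vF(v)\bigr)=\Bigl(\coth v-\frac{v}{\sinh^2v}\Bigr)L_v-v .
\]
Multiplying by \(2\sinh^2v>0\) and using the double-angle identities shows that strict decrease is equivalent to
\[
 L_v\,(\sinh 2v-2v)<v(\cosh 2v-1)\qquad(v>0).
\]
Since \(\sinh 2v>2v\), the left side is increasing in \(L_v\). So it suffices to replace \(L_v=v+\log(1+e^{-2v})\) by the strictly larger quantity \(v+e^{-2v}\), using \(\log(1+y)<y\) for \(y>0\). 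This substitution is where strictness enters.

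\emph{Step 3: reduce to one function.} Substitute \(\cosh 2v-\sinh 2v=e^{-2v}\), set \(x=2v\), and multiply by \(2\). The sufficient inequality becomes \(G(x)\ge0\) for \(x>0\), where
\[
 G(x)=x^2-x-1+3xe^{-x}+e^{-2x}.
\]

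\emph{Step 4: sign of \(G\), which is the main obstacle.} This step is delicate because \(G\) vanishes to third order at \(x=0\): its Taylor expansion is \(x^3/6+\cdots\). The inequality is also nearly tight as \(v\to\infty\), where both sides are \(\sim ve^{2v}/2\). Crude bounds will therefore fail, and I will instead differentiate repeatedly. One checks \(G(0)=G'(0)=G''(0)=0\). Moreover
\[
 e^{2x}G''(x)=2e^{2x}-6e^{x}+3xe^{x}+4=:k(x),
\]
with \(k(0)=0\) and \(k'(x)=e^{x}(4e^{x}-3+3x)>0\) for \(x\ge0\). Hence \(G''>0\) on \((0,\infty)\), so \(G'>0\) and \(G>0\) there.

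Combining Steps 2–4, the derivative of \(vF(v)\) is strictly negative for every \(v>0\), which proves strict decrease.
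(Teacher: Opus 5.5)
Your proof is correct. The closed form $vF(v)=v\coth v\,L_v-v^2$, the derivative, the reduction to $L_v(\sinh 2v-2v)<v(\cosh 2v-1)$, the expression for $G$, and the sign argument via $k$ all check out.

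The skeleton is the same as the paper's: differentiate, bound the entropy term from above, and sign the remainder. The key upper bound differs. The paper uses $h(r)<(1-r^2)g(r)/r$, which holds because $(1-r^2)g(r)-rh(r)$ vanishes at $r=0,1$ and has second derivative $-r/(1-r^2)<0$. With $r=\tanh v$, this is exactly $L_v<v\coth v$. That bound is actually weaker than yours, since $1-e^{-2v}<2v$ gives $v+e^{-2v}<v\coth v$. It is, however, the one for which the remainder factors. Using $\coth v\sinh 2v=1+\cosh 2v$, your left side minus right side is then bounded by $2v(1-v\coth v)$, which is negative simply because $\tanh v<v$. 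Equivalently, the paper obtains $\frac{d}{dv}[vF(v)]<\frac{v(1-r^2)}{r^3}(r-v)$.

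So the third-order vanishing at the origin is intrinsic to the inequality. The three-fold differentiation of $G$, however, comes only from choosing the sharper estimate $\log(1+e^{-2v})<e^{-2v}$, whose residual does not factor. Your route gives a fully explicit computation in the log-odds variable, using only a standard logarithm bound and no auxiliary concave function. The paper's route gives a one-line finish.
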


The function \((1-r^2)g(r)-r h(r)\) vanishes at both endpoints of
\([0,1]\) and has second derivative \(-r/(1-r^2)<0\). Hence
\[
                   h(r)<\frac{(1-r^2)g(r)}r,\qquad 0<r<1.
\]
With \(r=\tanh v\), we have \(r-v(1-r^2)>0\), since
\(\sinh(2v)>2v\). Substitution gives
\begin{align*}
 \frac{d}{dv}[vF(v)]
 &=\frac{h(r)[r-v(1-r^2)]}{r^2}-\frac{v^2(1-r^2)}r\\
 &<\frac{v(1-r^2)}{r^3}(r-v)<0,
\end{align*}
since \(r<v\).

Choose an upper height \(0<\cutoff\le\logodds\). For \(0<v\le\cutoff\) and
\(a=F(\logodds)/F(v)\), monotonicity gives
\(a/v\le F(\logodds)/[\cutoff F(\cutoff)]\).
Also \(c(v)\le v\), since \(\corr\ge0\).
The LSI branch of Lemma~\ref{lem:deficit} now gives
\[
 \unitcost(v)\le\rho+p\Klsi\rho^2\frac{F(\logodds )}{\cutoff F(\cutoff)}
                   \qquad(0<v\le\cutoff).                 \tag{S5}\label{eq:scalar-S5}
\]

\input{profile-rates}

\input{elementary-inputs}

\subsection{The Influential-Coordinate Criterion}\label{a-large-singleton-coefficient-already-proves-ck}

For Boolean \(f\) and \(0<\rho<1\), fix a coordinate and write
\(a=|\widehat f(\{i\})|\), \(m=\E f\), \(\delta=1-\rho^2\).
Then \(|m|+a\le1\).

For a cube field \(v\) taking values in \([-1,1]\), define
\(\operatorname{Ent}_{\phi}(v)=\E\phi(v)-\phi(\E v)\).
The standard entropy contraction following from cube
LSI~\cite{Gross,ODonnell} is
\[
 \operatorname{Ent}_{\phi}(T_\rho v)
                 \le\rho^2\operatorname{Ent}_{\phi}(v).       \tag{S6a}\label{eq:scalar-S6a}
\]
Here \(\Ent(w)=\E[w\log w]-(\E w)\log(\E w)\), with \(0\log0=0\).
Apply \(\Ent(T_\rho w)\le\rho^2\Ent(w)\) to \(w=1\pm v\)
and use
\(2\operatorname{Ent}_{\phi}(v)=\Ent(1+v)+\Ent(1-v)\).

Noise the chosen bit first, then apply \eqref{eq:scalar-S6a} to the
remaining coordinates. The intermediate values are \(\pm\rho\) on
pivotal edges and \(\pm1\) otherwise; pivotal probability is at least
\(a\), and the section means are \(m\pm\rho a\). Thus
\[
 \E h(T_\rho f)\ge\rho^2a\,h(\rho)
                      +\frac{\delta}{2}[h(m+\rho a)+h(m-\rho a)]. \tag{S6}\label{eq:scalar-S6}
\]

The required mean correction is
\[
 \phi(m)+\frac{\delta}{2}[h(m+\rho a)+h(m-\rho a)]
                              \ge\delta h(\rho a).    \tag{S7}\label{eq:scalar-S7}
\]
The difference vanishes with its first derivative at \(m=0\).
Its second derivative has the sign of
\[
 [1-(m+a)^2][1-(m-a)^2]
          +(1-\rho^2)a^2(1-m^2-a^2)\ge0
\]
on \(|m|+a<1\). Convexity proves \eqref{eq:scalar-S7} there,
and continuity covers \(|m|+a=1\).

Combining \eqref{eq:scalar-S6}--\eqref{eq:scalar-S7} proves CK whenever \[
                 (1-\rho^2)h(\rho a)
                    -(1-\rho^2a)h(\rho)\ge0.          \tag{S8}\label{eq:scalar-S8}
\]
Section~\ref{app:local-propagation} extends a successful cutoff to
larger \(a\) and smaller correlations.

%% file: profile-rates.tex
\subsection{Relative Rates of the Entropy Profile and Angle Correction}
\label{app:profile-angle-rates}

\begin{appendixlemma}[The profile pays for the angle correction]
\label{lem:profile-angle-rates}
Define the logarithmic decay rate \(Q(v)=-F'(v)/F(v)\). Then
\begin{equation}
 \begin{aligned}
 Q(v)-\frac1v&>\frac32[1-c'(v)]&& (v>0),\\
 Q(v)-\frac1v&>2[1-c'(v)]&& (0<v\le2).
 \end{aligned}
 \label{eq:middle-profile-angle-rates}
\end{equation}
\end{appendixlemma}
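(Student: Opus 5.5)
The plan is to make both sides of \eqref{eq:middle-profile-angle-rates} explicit in \(v\) and then compare them regime by regime.

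\emph{Closed forms.} Put \(r=\tanh v\) and \(L_v=\log(2\cosh v)\). Then \(h(\tanh v)=L_v-v\tanh v\). The formula \(F'(v)=-L_v/\sinh^2 v\) from the averaging-by-convexity argument gives
\[
 Q(v)=\frac{L_v\tanh v}{\sinh^2 v\,(L_v-v\tanh v)} .
\]
For the angle term, \(c(v)=v-\corr(F(v))\), and the derivative identity proved in the same argument gives
\[
 1-c'(v)=\Bigl(1-\frac{\theta_{\rm edge}(v)}{\sinh v}\Bigr)^2,\qquad
 \theta_{\rm edge}(v)=\arcsin(\tanh v)=\arctan(\sinh v).
\]
So \(\theta_{\rm edge}\) is the Gudermannian function, and each claim becomes
\[
 G(v):=\frac{vL_v\tanh v-\sinh^2 v\,(L_v-v\tanh v)}{v\sinh^2 v\,(L_v-v\tanh v)}
 >\kappa\Bigl(1-\frac{\arctan\sinh v}{\sinh v}\Bigr)^2,
\]
with \(\kappa=3/2\) on \((0,\infty)\) and \(\kappa=2\) on \((0,2]\). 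Lemma~\ref{lem:global-profile} already gives \(G>0\), since \(G=-(vF)'/(vF)\). What remains is quantitative.

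\emph{Step 1: small \(v\).} For \(v\le v_0\), with \(v_0\) around \(1/2\), I will use Taylor bounds with explicit remainders. Since \(\arctan(\sinh v)/\sinh v=1-v^2/3+O(v^4)\), the right-hand side is \(O(v^4)\), about \(\kappa v^4/9\). The left-hand side, by \(vF(v)=\log 2-(1/2+\tfrac{1}{2}\log 2\cdot{}\ldots)v^2+\cdots\), is of order \(v\) with a positive leading constant. So a crude two-sided remainder estimate suffices near \(0\), and this step is routine.

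\emph{Step 2: large \(v\), needed only for \(\kappa=3/2\).} Write \(L_v=v+\log(1+e^{-2v})\) and \(L_v-v\tanh v=\log(1+e^{-2v})+2ve^{-2v}/(1+e^{-2v})\). This gives \(Q(v)=2-O(ve^{-2v})\), hence \(G(v)\ge 2-1/v-O(ve^{-2v})\). The right-hand side is at most \(3/2\) for every \(v\). So for \(v\ge v_1\), with \(v_1\) around \(4\), elementary exponential bounds finish the case. This same asymptotic shows \(\kappa=2\) fails as \(v\to\infty\), which explains the restriction \(v\le 2\) in the second inequality.

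\emph{Step 3: the compact middle ranges.} These are \([v_0,v_1]\) for \(\kappa=3/2\) and \([v_0,2]\) for \(\kappa=2\). I expect this to be the main obstacle, because no clean analytic inequality covers it. I would first check the margins numerically. For example, at \(v=2\) the left-hand side is about \(1.14\) and the right-hand side about \(0.82\). If the margin stays comfortably positive throughout, I would cover the interval by a finite grid. On each cell \([v_-,v_+]\), I would bound \(G\) from below and the right-hand side from above using monotone pieces: \(L_v\) is increasing, \(\tanh v\) is increasing, \(\sinh v\) is increasing, and \(\arctan(\sinh v)/\sinh v\) is decreasing. Pairing numerators at one endpoint with denominators at the other gives certified cellwise inequalities, in the spirit of the paper's appendix verification.

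If the margin near \(v=2\) for \(\kappa=2\) turns out tight, I would refine the grid only there. Alternatively, I would use a derivative bound on the difference to reduce the work to finitely many evaluations with a Lipschitz correction.
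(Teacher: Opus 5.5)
Your closed forms for \(Q\), \(1-c'\), and \(G=-(vF)'/(vF)\) are correct, and the three-regime plan can be made to work. However, Step~2 rests on a false expansion. With \(q=e^{-2v}\), your own formulas give
\[
 Q(v)=\frac{4q\,[v+\log(1+q)]}{(1-q)\,[(1+q)\log(1+q)+2vq]}
 =\frac{4v}{2v+1}+O(e^{-2v}).
\]
The term \(\log(1+e^{-2v})\approx e^{-2v}\) in \(L_v-v\tanh v\) is not negligible against \(2ve^{-2v}\). After multiplying by \(\sinh^2v\approx e^{2v}/4\), the denominator is about \((2v+1)/4\), not \(v/2\).

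Hence \(Q=2-2/(2v+1)+O(e^{-2v})\) and \(G\approx 2-1/v-2/(2v+1)\). The deficit from \(2\) is of order \(1/v\), not \(ve^{-2v}\): at \(v=3\), \(2-Q\approx0.28\), while \(ve^{-2v}\approx0.007\).

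The repair is exact and elementary. The bound \(\log(1+q)\le q\) gives \(Q\ge4v/(2v+1)\), so \(G\ge 2-1/v-2/(2v+1)\), which exceeds \(3/2\) exactly when \(v>(7+\sqrt{65})/4\approx3.77\). Your choice \(v_1\approx4\) therefore survives. A cutoff read off from your expansion, however, would fail against the crude bound \(\tfrac32(1-s)^2<\tfrac32\), because that expansion suggests any \(v_1\) just above \(2\) suffices. Your conclusion that \(\kappa=2\) fails at infinity remains true.

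In Step~1, the correct expansion is \(vF(v)=\log2-(\tfrac12-\tfrac{\log2}{3})v^2+O(v^4)\), so \(G\sim(1/\log2-2/3)v\). The numerator of \(G\) is an \(O(v^4)\) difference of \(O(v^2)\) terms, so the remainder bookkeeping there is fourth order rather than crude.

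With Step~2 corrected, your route is a valid but computer-assisted proof. The margins on the compact ranges are comfortable: about \(0.32\) at \(v=2\) for \(\kappa=2\), and about \(0.19\) at \(v=4\) for \(\kappa=3/2\). A certified grid would therefore close Step~3. It is better to enclose \(Q=L_v\tanh v/[\sinh^2v\,h(\tanh v)]\), treating \(h(\tanh v)\) as decreasing, than to pair the two cancelling terms of your numerator.

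The paper needs no grid. It uses the same bound \(Q\ge4v/(2v+1)\) for \(v\ge1\). For \(v\le1\) it uses the comparison \(Q>\coth v\), which holds because \(\cosh v\,h(\tanh v)\) decreases; then \(\coth v-1/v\ge5v/18\) sidesteps the Taylor cancellation.

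The idea your plan lacks is a rational majorant for the angle defect. The bracket \(\arcsin z\ge 3z/(2+\sqrt{1-z^2})\) gives \(1-s\le 2(\cosh v-1)/(2\cosh v+1)\), where \(s=\arctan(\sinh v)/\sinh v\). Combined with \(\cosh v\le 1+kv^2\), using \(k=3/5,\,3/4,\,2\) on \((0,1],\,[1,2],\,[2,4]\) respectively, each range reduces to a short polynomial inequality. The range \(v\ge4\) is exactly your corrected Step~2.

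Your approach needs no clever inequalities, but it requires rigorous enclosures of \(\exp\), \(\log\), and \(\arctan\) at many points. The paper's gives a short, fully analytic certificate.
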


\begin{proof}
Write
\(r_v=\tanh v\), \(\tau_v=\sech v\),
\(\theta_v=\arcsin r_v\), and \(s(v)=\theta_v/\sinh v\).
Then \(0<\theta_v<v<\sinh v\), and
\[
 s'=\frac{\tau_v^2-s}{r_v}<0,\qquad
 c'=2s-s^2,\qquad c''=2(1-s)s'<0.
\]
Here \(0<s<1\) and \(\theta_v>r_v\tau_v\). Thus \(c\) increases,
\(1-c'=(1-s)^2\), and \(c(v)/v\) decreases by concavity and \(c(0)=0\).

We also use
\begin{equation}
 \frac{3z}{2+\sqrt{1-z^2}}\le\arcsin z
 \le\frac{\pi z}{2+\sqrt{1-z^2}}
 \qquad(0\le z\le1).
 \label{eq:middle-angle-brackets}
\end{equation}
Set \(z=\sin t\). The quotient \(t(2+\cos t)/\sin t\)
increases from \(3\) to \(\pi\) on
\((0,\pi/2)\): the numerator of its derivative vanishes at zero
and has derivative \(2\sin t(t-\sin t)>0\).

With \(q_v=e^{-2v}\), the entropy formula and
\(\log(1+q_v)\le q_v\) give
\[
 Q(v)=\frac{4q_v[v+\log(1+q_v)]}
 {(1-q_v)[(1+q_v)\log(1+q_v)+2vq_v]}
 \ge\frac{4v}{2v+1}.
\]
Also \(Q(v)>\coth v\): for
\begin{align*}
 \mathcal H(v)&=\cosh v\,h(\tanh v),\\
 D(v)&=v\coth v-\log(2\cosh v),
\end{align*}
we have \(D'=(\tanh v-v)/\sinh^2v<0\),
\(\lim_{v\to\infty}D(v)=0\), and
\(\mathcal H'=-\sinh v\,D<0\). Now use \(F=\mathcal H/\sinh v\).

The power series show that \((\cosh v-1)/v^2\) and \(\sinh v/v\)
increase, and geometric tail bounds give
\begin{align*}
 \cosh1-1&<\frac{1/2}{1-1/12}<\frac35,\\
 \sinh1-1&<\frac{1/6}{1-1/20}<\frac15,\\
 \cosh2-1&<\frac2{1-1/3}=3.
\end{align*}
Also \(\cosh4<33\), using \(e^4<55\) from
Section~\ref{app:fixed-arithmetic}.
For \(0<v\le1\), \(\cosh v\le1+3v^2/5\) and
\eqref{eq:middle-angle-brackets} yield
\[
 1-s\le\frac{2v^2}{5+2v^2},\qquad
 Q(v)-1/v>\coth v-1/v>\frac{5v}{18}>\frac v4.
\]
The second inequality uses
\(v\cosh v-\sinh v=\int_0^v t\sinh t\,dt\ge v^3/3\) and
\(\sinh v<6v/5\). Finally
\((5+2v^2)^2-32v^3\ge25-12v^2>0\), so
\(v/4>2[2v^2/(5+2v^2)]^2\).

For \(1\le v\le2\), use \(\cosh v\le1+3v^2/4\), giving
\(1-s\le v^2/(2+v^2)\). Multiplying
\[
 \frac{4v}{2v+1}-\frac1v-2\frac{v^4}{(2+v^2)^2},
\]
by the positive factor \(v(2v+1)(2+v^2)^2\) gives
\[
 v^3(15v-4v^2-8)+4(3v+1)(v-1)>0.
\]
Here \(15v-4v^2-8\ge3\) on \([1,2]\). This proves the stronger bound.

For \(2\le v\le4\), the bound \(\cosh v\le1+2v^2\) gives
\(1-s\le4v^2/(3+4v^2)\). Multiplying
\[
 \frac{4v}{2v+1}-\frac1v-\frac32
                   \frac{16v^4}{(3+4v^2)^2},
\]
by the positive factor \(v(2v+1)(3+4v^2)^2\) gives
\[
 16v^4(v-2)(v-3/2)+16v^3(2v-3)+3(4v^2-6v-3)>0.
\]
Each term is nonnegative, and the last quadratic is increasing
from one. For \(v\ge4\), simply use
\(Q(v)-1/v\ge2-2/(2v+1)-1/v>3/2\) and \((1-s)^2<1\).
\end{proof}

%% file: elementary-inputs.tex
\subsection{Propagation of the Local Cutoff}
\label{app:local-propagation}

For \(0\le r,a\le1\), write
\[
 M_{\rm loc}(r,a)=(1-r^2)h(ar)-(1-ar^2)h(r).
\]
Yu's threshold-propagation lemma~\cite[Lemma~4.9]{Yu}, in sign means,
states that \(M_{\rm loc}(r_1,a)\ge0\), with \(0<r_1<1\), implies
\(M_{\rm loc}(r,a)\ge0\) for \(0\le r\le r_1\).
Since \(M_{\rm loc}(r,\cdot)\) is concave and
\(M_{\rm loc}(r,1)=0\), a check at \((r_1,a_0)\), with
\(a_0\in[0,1]\), therefore covers
all \(0\le r\le r_1\) and \(a_0\le a\le1\).

%% file: centered-support.tex
\section{An Entropy Tangent for the Angular Error}
\label{app:centered-support}

The angular comparison is the common starting point for both growth
arguments in Section~\ref{sec:pivotal-middle}. We prove its pointwise
form \eqref{eq:centered-support} by viewing the angular error as part of
an entropy variable and taking a supporting line at the dictator value.

To prove \eqref{eq:centered-support}, fix
\(5/4\le\lambda\le\pi/2\) and set
\[
 A_{\rm sc}(x)=x\arcsin x,\qquad
 \Psi(x)=\phi(x)+(\arcsin x-\lambda x)^2/2.
\]
For \(0<t<\pi/2\), put \(x=\sin t\), \(c=\cos t\), and \(g=\atanh x\).
Differentiation with respect to \(t\) gives
\[
 K_\lambda:=\ddot A_{\rm sc}\dot\Psi-
                \dot A_{\rm sc}\ddot\Psi
 =g(1+c^2)-x(t^2+2)+\lambda^2(x-c^3t).
\]
Since \(x-c^3t>0\), it suffices to prove \(K_{5/4}>0\).
The elementary bounds
\begin{align*}
 \arctan y&\le\frac{y(15+4y^2)}{15+9y^2},\\
 \atanh y&\ge\frac{y(15-4y^2)}{15-9y^2}\qquad(0\le y<1)
\end{align*}
follow by differentiation and equality at zero. Substituting
\(y=\tan(t/2)\) gives
\[
 \frac tx\le\frac{19+11c}{3(1+c)(4+c)},\qquad
 \frac gx\ge\frac{11+19c}{3(1+c)(1+4c)}.
\]
These imply
\[
 \frac{16K_{5/4}}x\ge
 \frac{(1-c)Q(c)}{9(1+c)(4+c)^2(1+4c)},
\]
where
\begin{align*}
 Q(c)={}&375(2-6c-7c^2)^2+164(1-c)^5\\
 &+740c(1-c)^4+c^2(1-c)^3\\
 &+25c^3(1-c)^2+5925c^4(1-c)+8625c^5.
\end{align*}
Thus \(K_\lambda>0\) for \(0<t<\pi/2\). Moreover,
\[
 \left(\frac{\dot\Psi}{\dot A_{\rm sc}}\right)'
       =-\frac{K_\lambda}{\dot A_{\rm sc}^{\,2}}<0.
\]
The ratio tends to \(\pi/2-\lambda\ge0\) as \(t\uparrow\pi/2\),
so \(\dot\Psi>0\) and \(d^2A_{\rm sc}/d\Psi^2>0\).

Now let \(r=\tanh\ell\), \(\ell\ge31/20\), and
\(\lambda=\arcsin(r)/r\). Since \(e^{31/10}>22\), we have
\(r^2>5/6\), and the alternating sine bound gives
\[
 \frac{\sin(5r/4)}r
 \le\frac54-\frac{125}{384}r^2+\frac{3125}{122880}r^4<1.
\]
Thus \(5/4<\lambda\le\pi/2\), the upper bound following from
the endpoint chord of \(\arcsin\). At \(x=r\),
\(\Psi(r)=\phi(r)\) and
\[
 \frac{dA_{\rm sc}}{d\Psi}(r)
   =\frac{\arcsin r+r/\sqrt{1-r^2}}{\atanh r}
   =\alpha.
\]
The supporting line is therefore
\(A_{\rm sc}(x)-A_{\rm sc}(r)\ge
\alpha[\phi(x)-\phi(r)+(\arcsin x-\lambda x)^2/2]\),
which is \eqref{eq:centered-support}. Evenness and continuity
cover negative \(x\) and the endpoints.

%% file: pivotal-channel-checks.tex
\section{Channel-Parameter Bounds in the Intermediate Regime}
\label{app:middle-channel}

The intermediate-correlation proof reduces to three inequalities that
depend only on the channel. We establish the profile comparison needed
for the log-sum step, the angular-energy reserve, and the bound that
absorbs the output bias. These are precisely the inputs used to close
Proposition~\ref{prop:middle-growth}.

\begin{appendixlemma}[The profile, energy, and mean reserves]
\label{lem:middle-channel-checks}
Let \(31/20\le\ell\le7/2\), with the channel parameters and
profiles defined in Sections~\ref{sec:edges}
and~\ref{sec:pivotal-middle}. Then
\begin{equation}
 \log\frac{vF(v)}{\ell F(\ell)}
 \ge\beta[H(\ell)-H(v)]\qquad(0<v\le\ell),
 \label{eq:middle-profile-check}
\end{equation}
\begin{equation}
 \theta^2-rH(\ell)-\frac\eta2\ge0,
 \label{eq:middle-energy-check}
\end{equation}
\begin{equation}
 \gamma-\frac{2\theta^2}{\sinh^2\ell}-\frac\lambda\alpha
 \ge r\ell.
 \label{eq:middle-mean-check}
\end{equation}
\end{appendixlemma}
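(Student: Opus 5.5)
The three inequalities depend only on the single parameter \(\ell\in[31/20,7/2]\) (with \(r=\tanh\ell\), \(\theta=\arcsin r\), \(\lambda=\theta/r\), \(\alpha=(\theta+\sinh\ell)/\ell\), \(\gamma=\lambda\alpha\), \(\eta=\theta^2\gamma/(\gamma+1)\), \(\beta=2r/\eta\)). So I would treat them as explicit one-variable inequalities on a compact interval and prove them by monotonicity plus interval subdivision. For each quantity I would first record monotone bounds: \(r\), \(\theta\), \(\sinh\ell\), \(\alpha\) and \(\gamma\) all increase in \(\ell\), while \(\lambda\) increases toward \(\pi/2\). This lets me bound every term on a subinterval \([\ell_0,\ell_1]\) by its value at an endpoint. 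The elementary numerics needed are bounds on \(e^{\ell}\) at finitely many rational points and the arcsin brackets \eqref{eq:middle-angle-brackets}. I would take these in the fixed-arithmetic style already used in Appendix~\ref{app:centered-support} and Section~\ref{app:profile-angle-rates}.

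For \eqref{eq:middle-energy-check}, note that \(rH(\ell)=r\ell-rc(\ell)=r\ell-\theta^2\), so the claim reads \(2\theta^2-\eta/2\ge r\ell\), i.e. \(\theta^2(2-\tfrac{\gamma}{2(\gamma+1)})\ge r\ell\). The left side exceeds \(\tfrac32\theta^2\), and \(\theta\) is near \(\pi/2\) throughout the range. Writing \(1.5\theta^2\approx3.7\) versus \(r\ell\le3.5\) is not quite enough at \(\ell=7/2\), so I would keep the factor \(2-\gamma/(2(\gamma+1))\) with \(\gamma\ge\) its value at \(31/20\), and check on a few subintervals with endpoint bounds: lower \(\theta,\gamma\) from the left endpoint and upper \(r\ell\) from the right endpoint.

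For \eqref{eq:middle-mean-check}, the same substitution gives a single scalar inequality. Here \(\gamma\) grows roughly like \(\tfrac{\pi}{2}\sinh\ell/\ell\), which dominates \(r\ell\) on the whole range, while \(2\theta^2/\sinh^2\ell\) and \(\lambda/\alpha\) are small. Again a coarse subdivision with monotone endpoint bounds should suffice; the tightest point is likely \(\ell=31/20\).

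The main obstacle is the profile inequality \eqref{eq:middle-profile-check}, since it is uniform in \(v\in(0,\ell]\). I would set \(G(v)=\log(vF(v))+\beta H(v)\) and show that \(G\) is nonincreasing on \((0,\ell]\); then \(G(v)\ge G(\ell)\) is exactly the claim. Now
\[
G'(v)=\frac1v-Q(v)+\beta\bigl(1-c'(v)\bigr),
\]
so it suffices that \(Q(v)-1/v\ge\beta(1-c'(v))\). Lemma~\ref{lem:profile-angle-rates} gives this whenever \(\beta\le3/2\), and with constant \(2\) for \(v\le2\). The key step is therefore to show \(\beta=2r(\gamma+1)/(\theta^2\gamma)\le3/2\) on \([31/20,7/2]\). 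Since \(\theta^2>2.2\) there and \(\gamma\) is moderately large, \(\beta\approx2(1+1/\gamma)/\theta^2\) should sit around \(1\), comfortably below \(3/2\). If the crude bound fails near \(31/20\), I would split at \(v=2\): use the factor \(2\) for \(v\le2\), and for \(2\le v\le\ell\) compare directly using \(Q(v)-1/v\ge2-2/(2v+1)-1/v\) together with \(1-c'=(1-s)^2\) and the explicit bound \(1-s\le4v^2/(3+4v^2)\)-type estimates from that lemma.
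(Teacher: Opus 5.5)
\textbf{Gap in the profile step.} Your reduction of \eqref{eq:middle-profile-check} to the pointwise rate inequality \(Q(v)-1/v\ge\beta(1-c'(v))\) is correct. The step you call key is false, however: \(\beta\le3/2\) does not hold on the whole range. The estimate \(\theta^2>2.2\) is valid only for \(\ell\) above roughly \(3.13\). At \(\ell=31/20\) one has \(r\approx0.914\), \(\theta\approx1.153\), \(\theta^2\approx1.33\) and \(\gamma\approx2.77\), so
\[
\beta=\frac{2r(\gamma+1)}{\theta^2\gamma}\approx1.87,
\]
and \(\beta>3/2\) persists up to about \(\ell\approx1.97\).

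Your fallback (split at \(v=2\), use coefficient \(2\) below, ``compare directly'' above) does not close this as written. On \(2\le v\le\ell\) a uniform bound \(\beta<2\) is not enough: at \(v=\ell=7/2\), \(Q(v)-1/v\approx1.47\) while \(2(1-c'(v))\approx1.65\). What is missing is how \(\beta\) depends on \(\ell\). Writing \(\beta=\frac{2}{c(\ell)}\bigl(1+\frac{1}{\gamma(\ell)}\bigr)\) shows that \(\beta\) decreases in \(\ell\), since \(c\) and \(\gamma\) both increase. It then suffices to check \(\beta(31/20)<2\) and \(\beta(2)<3/2\); the true value \(\beta(2)\approx1.48\) makes the second check tight. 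You should split on \(\ell\), not on \(v\). If \(\ell\le2\), every \(v\le\ell\) lies in \((0,2]\) and the coefficient-\(2\) rate bound of Lemma~\ref{lem:profile-angle-rates} applies. If \(\ell\ge2\), the coefficient-\(3/2\) bound applies for all \(v>0\). Integration then gives the claim.

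\textbf{Energy and mean steps.} In the energy check you have the monotonicity backwards. The factor \(\frac32+\frac{1}{2(\gamma+1)}\) decreases in \(\gamma\), so on each subinterval you need an upper bound on \(\gamma\) from the right endpoint, not a lower bound from the left. Your numbers at \(\ell=7/2\) are also off: \(\frac32\theta^2\approx3.42\) is actually below \(r\ell\approx3.49\), so the \(\gamma\)-term is genuinely needed. Once the direction is fixed, subdivision would work, but it is unnecessary. Dividing by \(r\ell=\theta^2\ell/c(\ell)\) gives
\[
\frac{c(\ell)}{\ell}\Bigl[\frac32+\frac{1}{2(\gamma+1)}\Bigr]\ge1,
\]
a product of two decreasing factors (\(c(v)/v\) decreases by concavity of \(c\)). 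A single check at \(\ell=7/2\) therefore suffices, using \(\gamma<9\) and \(c>\theta^2>(\pi/2-1/16)^2\); the margin there is under \(2\%\).

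For \eqref{eq:middle-mean-check}, your subdivision plan with monotone endpoint bounds is a legitimate alternative. The margin is about \(0.25\) at \(\ell=31/20\) and larger thereafter, and \(2\theta^2/\sinh^2\ell=2s(\ell)^2\) decreases. The paper instead proves \(\gamma-r\ell>6/5\) for all \(\ell\ge3/2\), using convexity of \(Y(\ell)=\ell(\gamma-r\ell-6/5)\) (namely \(Y''>1/2\) with endpoint data at \(3/2\)). It then bounds \(2s^2<3/5\) and \(\lambda/\alpha<3/5\) separately. This buys a subdivision-free argument at the cost of a more delicate analytic setup.
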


\begin{proof}
By Lemma~\ref{lem:profile-angle-rates}, \(c\) increases, while
\(c(v)/v\) and \(s(v)=\arcsin(\tanh v)/\sinh v\) decrease.
Also \(\lambda\) increases, and \(\alpha>2\) increases because it
averages the increasing function \(\sech t+\cosh t\) on \([0,v]\).
Hence \(\gamma\) increases. We use \eqref{eq:middle-angle-brackets}
and the elementary bounds
\begin{equation}
 z-\frac{z^3}{3}\le\arctan z
 \le z-\frac{z^3}{3(1+z^2)}\qquad(z\ge0).
 \label{eq:middle-atan-brackets}
\end{equation}
Fixed constants below follow from
Section~\ref{app:fixed-arithmetic} and finite exponential sums.

\paragraph{Profile bound.}
The identity
\[
 \beta(\ell)=\frac{2}{c(\ell)}\left(1+\frac1{\gamma(\ell)}\right)
\]
shows that \(\beta\) decreases. At \(\ell=31/20\),
Appendix~\ref{app:centered-support} gives \(\lambda>5/4\).
Together with \(r>19/21\) and \(\alpha>2\), this yields
\(c(31/20)>475/336>7/5\), \(\gamma(31/20)>5/2\), and
\(\beta(31/20)<2\).

At \(\ell=2\), \(155/21<e^2<37/5\) gives
\(r<27/28\), \(\sinh2>18/5\), and, by
\eqref{eq:middle-atan-brackets},
\(\theta=\pi/2-2\arctan(e^{-2})>13/10\).
Thus \(c(2)>7/4\), \(\lambda(2)>4/3\),
\(\alpha(2)>49/20\), and \(\gamma(2)>13/4\). Consequently
\(\beta(2)<136/91<3/2\).
Lemma~\ref{lem:profile-angle-rates} now gives
\(Q(v)-1/v\ge\beta(\ell)[1-c'(v)]\): use its coefficient two
when \(\ell\le2\), and three halves when \(\ell\ge2\).
Integration proves \eqref{eq:middle-profile-check}.

\paragraph{Energy bound.}
Using \(rH(\ell)=r\ell-\theta^2\),
\eqref{eq:middle-energy-check} becomes
\[
 \frac{c(\ell)}{\ell}
       \left[\frac32+\frac1{2(\gamma+1)}\right]\ge1.
\]
Both factors decrease. At \(\ell=7/2\), \(32<e^{7/2}<35\) gives
\begin{align*}
 \theta&>\frac\pi2-\frac1{16},& \lambda&<\frac85,\\
 \alpha&<\frac{8/5+35/2}{7/2}=\frac{191}{35},& \gamma&<9.
\end{align*}
Since \(c>\theta^2\), the product exceeds
\[
 \frac{31}{20}\frac{(\pi/2-1/16)^2}{7/2}>1,
\]
using \(\pi>157/50\).

\paragraph{Mean bound.}
We prove \eqref{eq:middle-mean-check} for \(\ell\ge\ell_0=3/2\).
Put \(\tau=\sech\ell\) and \(s=s(\ell)\). The exponential bounds
and \eqref{eq:middle-atan-brackets} give
\begin{equation}
 \begin{split}
 \frac{19}{21}<\tanh\ell_0<\frac{77}{85},\qquad
 \frac{75}{32}<\cosh\ell_0<\frac52,\\
 \frac{19}{9}<\sinh\ell_0<\frac{77}{36},\qquad
 \frac98<\theta(\ell_0)<\frac87.
 \end{split}
 \label{eq:middle-mean-endpoints}
\end{equation}
Hence \(81/154<s(\ell_0)<72/133\), and the decrease of \(s\)
gives \(2s^2<3/5\) throughout this range. Define
\[
 Y(\ell)=c(\ell)+\theta\cosh\ell-r\ell^2-\frac65\ell.
\]
Then \(Y/\ell=\gamma-r\ell-6/5\), and, with
\(T=\theta\cosh\ell-r\), differentiation gives
\[
 Y''=T\left[1-\frac{2(1-s)}{\sinh^2\ell}\right]
                       +2\tau^2\ell(r\ell-2).
\]
Here \(r>9/10\), \(\cosh\ell>7/3\),
\(\sinh^2\ell>40/9\), and \(\lambda>6/5\) by
\eqref{eq:middle-angle-brackets}. Thus \(T>81/50\), and its
bracket exceeds \(11/20\). Also \(\ell\tau^2\) decreases from a
value below \(3/10\), while \(r\ell-2>-13/20\). Hence
\[
 Y''>\frac{81}{50}\frac{11}{20}-\frac{39}{100}>\frac12.
\]
The same endpoint bounds give \(Y(\ell_0)>3/16\) and
\(Y'(\ell_0)>-1/5\). Taylor's inequality, with
\(t=\ell-\ell_0\ge0\), yields
\[
 Y(\ell)>\frac3{16}-\frac t5+\frac{t^2}{4}
 =\frac14\left(t-\frac25\right)^2+\frac{59}{400}>0.
\]
Thus \(\gamma-r\ell>6/5\).

Finally, \eqref{eq:middle-angle-brackets} gives
\[
 \frac\lambda\alpha
 =\frac{\ell\theta}{r(\theta+\sinh\ell)}
 \le\frac{\pi\ell}{(\pi+1)r+2\sinh\ell}.
\]
For \(Z(\ell)=(\pi+1)r+2\sinh\ell-(5\pi/3)\ell\),
\eqref{eq:middle-mean-endpoints} and \(\pi<22/7\) give
\begin{align*}
 Z''&=2\sinh\ell-2(\pi+1)r\tau^2>4-\frac{522}{343}>0,\\
 Z'(\ell_0)&>\frac{16}{25}+\frac{14}{3}-\frac{110}{21}>0,\\
 Z(\ell_0)&>\frac{19}{21}+\frac{38}{9}
                 -\frac{67}{42}\frac{22}{7}>0.
\end{align*}
Thus \(Z>0\) and \(\lambda/\alpha<3/5\). Consequently
\[
 \gamma-r\ell-2s^2-\frac\lambda\alpha
 >\frac65-\frac35-\frac35=0.
\]
This proves \eqref{eq:middle-mean-check}.
\end{proof}

%% file: low.tex
\section{The Small-Correlation Regime}
\label{app:low}

We prove CK for \(0\le r=\rho\le r_1=\rho_*:=\tanh(31/20)\)
using the local/Fourier-gap approach~\cite{Sam,YuPhi,Yu}.
Proposition~\ref{prop:local} handles large singleton coefficients;
the remaining functions are controlled by a singleton-energy bound
and quadratic entropy minorants.
Entropy contraction then extends the bound to the lowest correlations.

Write \(W_1(v)=\sum_i\widehat v(\{i\})^2\),
\(a=\max_i|\widehat f(\{i\})|\), and set
\[
 a_0=r_0=\frac58,\qquad
 M=1-a_0+a_0^2=\left(\frac78\right)^2.
\]
The case \(r=0\) is immediate.

\input{fourier-cap}

\subsection{Mean-Sensitive Energy Bound}

The preceding estimate gives
\begin{equation}
 a\le a_0\quad\Longrightarrow\quad W_1(f)\le M.
 \label{eq:fourier-cap}
\end{equation}
Section~\ref{app:low-local-check} covers \(a\ge a_0\).
For \(a\le a_0\), form the balanced lift
\[
 G(1,x)=f(x),\qquad G(-1,x)=-f(-x).
\]
Its singleton coefficients are \(m=\E f\) and those of \(f\), so
\(|m|\le a_0\) implies
\begin{equation}
                         m^2+W_1(f)=W_1(G)\le M.
 \label{eq:lift}
\end{equation}
For \(|m|\ge a_0\), entropy contraction
\eqref{eq:scalar-S6a} gives, when \(r>0\),
\[
 I_f(r)\le r^2h(m)<r^2/2\le\phi(r),
\]
since \(h(m)\le L-a_0^2/2-a_0^4/12
<7/10-a_0^2/2-a_0^4/12<1/2\). Hence assume \(|m|\le a_0\).

\subsection{Tangent Quadratic Minorants}

For \(0<t<1\), define
\begin{align*}
 A_t&=\frac{\log(2/(1+t))}{(1-t)^2},\\
 z_t&=2t-\frac{g(t)}{A_t},\\
 Q_t(s)&=A_t(1-s)(1+s-z_t).
\end{align*}
Then \(Q_t(t)=h(t)\), \(Q_t'(t)=h'(t)\), and \(Q_t(1)=h(1)=0\).
Moreover,
\begin{equation}
                         Q_t(s)\le h(s)\qquad(0\le s\le1).
 \label{eq:entropy-tangent-family}
\end{equation}
Indeed, \(D_t=h-Q_t\) satisfies \(D_t(t)=D_t'(t)=D_t(1)=0\),
and \(D_t''(s)=2A_t-1/(1-s^2)\) strictly decreases.
Necessarily \(D_t''(t)>0\); otherwise \(D_t(1)<0\).
Thus \(D_t\) decreases to zero on \([0,t]\), then rises and falls
to zero on \([t,1]\).

Take \(t_0=2/5\), \(t_1=2/3\), \(A_j=A_{t_j}\), \(z_j=z_{t_j}\), and
\begin{align*}
 \beta&=\frac{1/A_0-1/A_1}{z_1-z_0},\\
 \alpha&=\frac1{A_0}+\beta z_0,\\
 c_{\rm ent}(r)&=\frac1{\alpha-\beta r^2}.
\end{align*}
Section~\ref{app:low-arithmetic} verifies \(z_0<r_0^2\le r_1^2<z_1\).
For \(r\in[r_0,r_1]\), write
\(r^2=\mu z_0+(1-\mu)z_1\), with \(0\le\mu\le1\).
Combining \eqref{eq:entropy-tangent-family} with weights
\(\mu c_{\rm ent}/A_0\) and \((1-\mu)c_{\rm ent}/A_1\), which sum
to one, yields
\begin{equation}
 h(s)\ge c_{\rm ent}(r)(1-s)(1+s-r^2).
 \label{eq:minorant}
\end{equation}

\subsection{From Entropy to Fourier Energy}

Writing \(W_k=\sum_{|S|=k}\widehat f(S)^2\), we have
\[
 \E|T_r f|\ge\E[fT_r f]=\sum_{k=0}^n r^kW_k.
\]
Apply \eqref{eq:minorant} to \(s=|T_r f|\).
Since \(h\) is even, Parseval and
\(r^{k+2}-r^{2k}\ge0\) for \(k\ge2\) give
\[
 \E h(T_r f)\ge
 c_{\rm ent}(r)\bigl[1-r^2-(1-r^2)m^2-r^2(1-r)W_1\bigr].
\]
Insert \eqref{eq:lift} and \(\phi(m)\ge m^2/2\):
\begin{align}
 \E h(T_r f)+\phi(m)\ge{}&
 c_{\rm ent}(r)(1-r)(1+r-Mr^2)\notag\\
 &+\bigl[1/2-c_{\rm ent}(r)(1-2r^2+r^3)\bigr]m^2.
 \label{eq:low-credit}
\end{align}
The mean coefficient has the sign of
\(\alpha-2+(4-\beta)r^2-2r^3\), which increases on \([r_0,1]\)
because \(\beta<1\). At \(r_0\), the bounds
\(\alpha>53/40\), \(\beta<17/20\) give a value exceeding
\(43/640\), so this term is nonnegative.

\subsection{Positivity of the Remaining Gap}

It remains to prove \(B_0(r)\ge0\) on \([r_0,r_1]\), where
\[
 B_0(r)=(1-r)(1+r-Mr^2)-(\alpha-\beta r^2)h(r).
\]
Differentiation gives
\begin{equation}
 B_0''''(r)=
 \frac{2[\alpha(1+3r^2)-\beta(r^4-3r^2+6)]}{(1-r^2)^3}.
 \label{eq:low-curvature-shape}
\end{equation}
Its numerator increases with \(r^2\). Together with
\(B_0'''(r_0)<0\), this implies that \(B_0'''\) changes sign at most
once, from negative to positive; hence \(B_0''\) has no interior maximum.
The bounds in Section~\ref{app:low-arithmetic} therefore give
\(B_0''<1/3\) on \([r_0,4/5]\) and \(B_0''<0\) on \([4/5,r_1]\).
Since \(B_0'(r_0)<-1/16\),
\[
 B_0'(r)<-\frac1{16}+\frac{4/5-5/8}{3}=-\frac1{240}
 \quad(r_0\le r\le4/5),
\]
and \(B_0'\) decreases thereafter. Thus
\(B_0(r)\ge B_0(r_1)>0\), proving CK on \([r_0,r_1]\)
by \eqref{eq:low-credit}.

For the remaining case \(a,|m|\le a_0\), the fixed checks also give
the starting slack
\[
 c_{\rm ent}(r_0)(1-r_0)(1+r_0-Mr_0^2)+r_0^2/2-L>0.
\]
Thus \(I_f(r_0)<r_0^2/2\). Applying entropy contraction to
\(T_r f=T_{r/r_0}(T_{r_0}f)\) covers the rest:
\[
 I_f(r)\le(r/r_0)^2I_f(r_0)\le r^2/2\le\phi(r)
                   \qquad(0\le r\le r_0).
\]

\subsection{Fixed Comparisons}
\label{app:low-arithmetic}

For \(1\le y\le2\), put \(z=(y-1)/(y+1)\) and
\begin{align*}
 \ell_-(y)&=2\sum_{j=0}^{11}\frac{z^{2j+1}}{2j+1},\\
 \ell_+(y)&=\ell_-(y)+\frac{2z^{25}}{25(1-z^2)}.
\end{align*}
The logarithm series gives
\[
 \ell_-(y)\le\log y\le\ell_+(y).
\]
For other positive rational arguments, extract a power of two.
Applied to the defining logarithms of \(A_j,z_j,\alpha,\beta\), these give
\[
 \alpha>\frac{53}{40},\qquad 0<\beta<\frac{17}{20},\qquad
 z_0<r_0^2,\quad r_1^2<z_1.
\]
For the endpoint checks, use
\begin{align*}
 r_1&=\frac{e^{31/10}-1}{e^{31/10}+1},&
 g(r_1)&=\frac{31}{20},\\
 h(r_1)&=\log(1+e^{-31/10})+\frac{31/10}{1+e^{31/10}}.
\end{align*}
Bound the exponential by Section~\ref{app:fixed-arithmetic}.
Substituting these rational enclosures and \(\ell_\pm\) into
\(B_0\), its first three derivatives, and \(M_{\rm loc}\), with
outward bounds throughout, certifies the following margins.
\begin{center}\small
\begin{tabular}{@{}p{.71\linewidth}r@{}}
\toprule
Fixed expression & Bound\\ \midrule
\(M_{\rm loc}(r_1,5/8)\) & \(>1/1400\)\\
\(B_0'(5/8)\) & \(<-17/250\)\\
\(B_0''(5/8)\) & \(<7/30\)\\
\(B_0'''(5/8)\) & \(<-1\)\\
\(B_0''(4/5)\) & \(<-1/9\)\\
\(B_0''(r_1)\) & \(<-1/10\)\\
\(B_0(r_1)\) & \(>1/4000\)\\
Starting contraction slack at \(r_0=5/8\) & \(>1/500\)\\
\bottomrule
\end{tabular}
\end{center}

%% file: fourier-cap.tex
\subsection{Singleton-Energy Bound}
\label{app:fourier-cap-proof}

For Boolean \(f\), put
\[
 a_i=\widehat f(\{i\}),\qquad a=\max_i|a_i|,\qquad
 W=\sum_i a_i^2=W_1(f).
\]
We prove
\begin{equation}
                 a\le\frac58
                 \quad\Longrightarrow\quad W\le 1-a_0+a_0^2=M.
 \label{eq:elementary-fourier-cap}
\end{equation}

Orient a coordinate with coefficient \(a\ge0\) and let
\(s=(f_++f_-)/2\), \(d=(f_+-f_-)/2\).
Since \(s^2+d^2=1\), \(d\in\{-1,0,1\}\), and \(\E d=a\), Parseval gives
\begin{equation}
 W=a^2+W_1(s)\le a^2+\E s^2\le a^2+1-a.
 \label{eq:elementary-singleton-section}
\end{equation}
For \(3/8\le a\le5/8\), the last expression is at most
its common endpoint value \(M\).

\subsubsection{A Sign-Sum Bound}

For independent uniform signs \(X_i\), let \(Z=\sum_i b_iX_i\). We show
\begin{equation}
 \sum_i b_i^2=1,\qquad \max_i b_i^2\le\frac15
 \quad\Longrightarrow\quad \E|Z|<\frac78.
 \label{eq:elementary-sign-sum}
\end{equation}
The Hermite interpolant to \(\sqrt y\) at \(9/16,4,9\), matching
values and first derivatives, is
\begin{equation}
 \begin{split}
 D_{\rm mom}p(y)={}&637696y^5-17928800y^4\\
                  &+191446315y^3-1008169920y^2\\
                  &+3660186285y+1260006516,
 \end{split}
 \label{eq:moment-majorant}
\end{equation}
where \(D_{\rm mom}=4042912500\).
The interpolation remainder gives, for \(y>0\) away from the contacts
and some \(\xi>0\),
\[
 \sqrt y-p(y)=\frac{(\sqrt{\,\cdot\,})^{(6)}(\xi)}{6!}
                (y-9/16)^2(y-4)^2(y-9)^2\le0,
\]
since \((\sqrt{\,\cdot\,})^{(6)}(\xi)=-945/(64\xi^{11/2})<0\).
Continuity includes the contacts and zero, so \(|Z|\le p(Z^2)\).

Write \(\sigma_j=\sum_i b_i^j\) and \(t=\sigma_4\).
The coefficient bound and Jensen's inequality with weights \(b_i^2\) give
\begin{align*}
 0\le t&\le\frac15,& \sigma_6&\le\frac t5,\\
 \sigma_{10}&\le\frac{\sigma_8}{5},& \sigma_8&\ge t^3.
\end{align*}
Expanding the moments in \(\E e^{xZ}=\prod_i\cosh(b_ix)\) yields
\begin{align*}
 D_{\rm mom}\E p(Z^2)={}&3487476486-214438410t\\
 &+1507452800t^2\\
 &+(1459014320-4285317120t)\sigma_6\\
 &-2928765440\sigma_8+5060755456\sigma_{10}.
\end{align*}
The \(\sigma_6\) coefficient is positive on \([0,1/5]\).
After substituting \(\sigma_{10}\le\sigma_8/5\), the
\(\sigma_8\) coefficient is \(-9583071744/5\).
Thus \(\sigma_6\le t/5\) and \(\sigma_8\ge t^3\) imply
\(D_{\rm mom}\E p(Z^2)\le T_{\rm mom}(t)\), where
\begin{align*}
 T_{\rm mom}(t)={}&3487476486+77364454t\\
 &+650389376t^2-\frac{9583071744}{5}t^3.
\end{align*}
The derivative \(T_{\rm mom}'\) is concave and positive at both endpoints:
\begin{align*}
 T_{\rm mom}'(0)&=77364454>0,\\
 T_{\rm mom}'(1/5)&=\frac{13440810318}{125}>0.
\end{align*}
Hence \(T_{\rm mom}(t)\le T_{\rm mom}(1/5)\), and
\[
 7D_{\rm mom}-8T_{\rm mom}(1/5)=\frac{119582002252}{625}>0.
\]
This proves \eqref{eq:elementary-sign-sum}.

If \(a\le3/8\) and \(W\ge M\), take \(b_i=a_i/\sqrt W\).
Then \(\sum_i b_i^2=1\), \(\max_i b_i^2\le9/49<1/5\), and
\[
 \sqrt W=\E\!\left[f\sum_i b_iX_i\right]
       \le\E\left|\sum_i b_iX_i\right|<\frac78.
\]
This contradicts \(W\ge M\). Together with
\eqref{eq:elementary-singleton-section}, this proves
\eqref{eq:elementary-fourier-cap}.

\subsection{The Local Cutoff}
\label{app:low-local-check}

For the local margin \(M_{\rm loc}\) of
Section~\ref{app:local-propagation}, Section~\ref{app:low-arithmetic} gives
\[
 M_{\rm loc}(r_1,5/8)
  =(1-r_1^2)h(5r_1/8)-(1-5r_1^2/8)h(r_1)>0.
\]
Concavity in \(a\), equality at \(a=1\), and the propagation result
in Section~\ref{app:local-propagation} extend this to
\(a\ge5/8\), \(r\le r_1\). Proposition~\ref{prop:local} then applies.

%% file: endpoints.tex
\section{Endpoint Reduction for the Height Parameter}
\label{app:endpoints}

The tail argument requires a scalar comparison at every active coordinate
height. The logarithmic-Sobolev estimate covers heights below \(\ell/2\);
above that point, the Parseval yield has no interior maximum. We reduce
the remaining comparison to the join and the local cutoff, which are
verified in Appendix~\ref{app:verification}.

\subsection{Shape Properties and Endpoint Reduction}

Recall \(c(v)=\arcsin^2(\tanh v)/\tanh v\) and
\(F(v)=h(\tanh v)/\tanh v\). Set
\begin{gather*}
 J=1/F,\qquad \Lambda(v)=\log(2\cosh v),\\
 {\mathcal H}(v)=\cosh v\,h(\tanh v).
\end{gather*}
The proof of Lemma~\ref{lem:profile-angle-rates} gives
\(c''<0\), \(0<c'<1\), and \({\mathcal H}'<0\) on \(v>0\).
Since \(J'=\Lambda/{\mathcal H}^2\), also \(J''>0\).

For \(\ell=\logodds\ge7/2\), \(r=\rho\), and
\(p=(\lambda+1/\alpha)^2\), the Parseval yield is
\[
 Y_P(v)=\frac{r c(v)+p[d_2/2+(d_1-d_2/2)F(\ell)/F(v)]}{v}.
\]
Section~\ref{app:original-yield} proves that it has no interior
maximum. The LSI comparison covers \(v\le z=\ell/2\).
Set
\[
 a_{\rm cut}=\frac1{1+(8/3)e^{-\ell}},\qquad
 F(V)=\frac{F(\ell)}{a_{\rm cut}}.
\]
The decrease of \(F\) gives a unique \(0<V<\ell\).
Section~\ref{app:manual-local} shows that a counterexample must have
\(a_i<a_{\rm cut}\), and hence \(\ell_i<V\).
Thus only
\(Y_P(z),Y_P(V)<B/\ell\) remain when \(V>z\);
if \(V\le z\), the LSI comparison suffices.

\subsection{Entropy-Profile Decay}

The cutoff proof uses
\begin{equation}
 Q_F(v):=-F'(v)/F(v)>8/5\qquad(v>0).
 \label{eq:analytic-profile-speed}
\end{equation}
Substituting \(q=e^{-2v}\) and clearing positive denominators
reduces this to
\[
 vq(1+4q)+(5q-2+2q^2)\log(1+q)>0.
\]
Only a negative logarithmic coefficient needs consideration.
Using \(\log(1+q)<q\), it suffices that
\(v-2+(4v+5)e^{-2v}>0\). This is immediate for \(v\ge2\).
On \([0,2]\), \((2-v)e^{2v}/(4v+5)\) is maximized at
\((3+\sqrt{65})/8\in(11/8,7/5)\), giving the upper bound
\(5e^{14/5}/84<1\) (Section~\ref{app:fixed-arithmetic}).

For this cutoff, integration gives
\[
 \frac85(\ell-V)<\int_V^\ell Q_F(v)\,dv
 =\log\bigl(1+(8/3)e^{-\ell}\bigr)<\frac83e^{-\ell}.
\]
Hence \(0<\ell-V<(5/3)e^{-\ell}\).

\input{original-yield}

%% file: original-yield.tex
\subsection{Endpoint Principle for the Original Yield}
\label{app:original-yield}

We prove the curvature bound
\begin{equation}
 \begin{split}
 &\frac{-c''(v)[vJ'(v)-J(v)]}{J''(v)}-[c(v)-vc'(v)]<\frac5{32},\\
 &\hspace{12em}v>0.
 \end{split}
 \label{eq:original-shape}
\end{equation}
It implies that \(Y_P(v)=[r c(v)+A_0+A_1J(v)]/v\), with
\(A_0,A_1>0\), has no interior maximum if \(A_0/r\ge5/32\):
its derivative has the sign of
\[
 A_1-\frac{A_0+r[c(v)-vc'(v)]}{vJ'(v)-J(v)},
\]
and the fraction strictly decreases by \eqref{eq:original-shape}.
For the Parseval yield, \(A_0=pd_2/2\) and
\(A_1=p(d_1-d_2/2)F(\ell)\); Appendix~\ref{app:canonical-prerequisites}
verifies \(A_0/r>3/16>5/32\) for \(\ell\ge7/2\).

\paragraph{Entropy curvature.}
Write \(r_v=\tanh v\), \(\delta_v=1-r_v^2\), and
\(D=vJ'-J>0\). Chebyshev's integral inequality gives
\[
 \frac{h(r_v)}{\delta_v}
 =\int_0^1\frac{ds}{(1+s)(1-r_v^2s^2)}
 \le L\int_0^1\frac{ds}{1-r_v^2s^2}=\frac{Lv}{r_v}.
\]
For \(0<v\le1\), this is at most \(L\coth1<1\), since
\(e^2>7\) and \(L<3/4\). Thus
\({\mathcal H}''=(h(r_v)-\delta_v)/\sqrt{\delta_v}<0\).
Consequently \({\mathcal H}^{-2}\) is increasing and convex, and
\(J'''=(\Lambda{\mathcal H}^{-2})''>0\). Hence
\begin{equation}
 D=\int_0^v tJ''(t)\,dt<\frac{v^2}{2}J''(v)
 \qquad(0<v\le1).
 \label{eq:original-profile-curvature}
\end{equation}
We extend this to
\begin{equation}
 \frac{D}{J''}<\frac v2\qquad(v>0).
 \label{eq:profile-ratio-linear}
\end{equation}
Only \(v\ge1\) remains. Put
\(q=e^{-2v}\), \(T=\log(1+q)/q\), and \(K=2v+(1+q)T\).
Then
\begin{gather*}
 J=\frac{1-q}{qK},\qquad K'=2T,\\
 qD=1-\frac{1-q+2T}{K}+\frac{(1-q^2)T^2}{K^2},\\
 0<T'=2[T-1/(1+q)]<q.
\end{gather*}
The last bound is the trapezoid estimate for \(1/(1+q)\).
Using \(q<1/7\), \(1-q/2\le T<1\), \(K>2\), and
\[
 qK<e^{-2v}(2v+3/2)\le(7/2)e^{-2}<1/2,
\]
and dropping the positive derivative of \((1-q^2)T^2\), we obtain
\begin{align*}
 (qD)'\ge\frac1{K^2}\biggl[&2T(1-q+2T)-K(2q+2T')\\
 &-\frac{4(1-q^2)T^3}{K}\biggr]>0:
\end{align*}
the three terms are respectively greater than \(4\), less than
\(2\), and less than \(2\).
Since \((qD)'=q(vJ''-2D)\), this proves
\eqref{eq:profile-ratio-linear}.

\paragraph{Angular curvature.}
Put \(w=\arcsin(\tanh v)/\sinh v\), \(\tau_v=\sech v\),
and \(k=-c''>0\). The angle bound
\eqref{eq:middle-angle-brackets} gives
\(w\ge3\tau_v/(2+\tau_v)>\tau_v^2\).
Using \(w'=(\tau_v^2-w)/r_v<0\), differentiation yields
\[
 vw''-w'\ge
 \frac{\tau_v(1-\tau_v)}{r_v^2(2+\tau_v)}
 \bigl[(3+\tau_v)r_v+v\{4-(1+\tau_v)^3\}\bigr]>0.
\]
Indeed, \(r_v\ge v\tau_v\) bounds the bracket below by
\(v(3-2\tau_v^2-\tau_v^3)>0\).
Thus \(f=1-w\) is increasing and concave in \(v^2\), with
\(f(0)=0\). For \(0<t\le v\),
\(f(t)\ge(t/v)^2f(v)\) and \(f'(t)\ge(t/v)f'(v)\).
Since \(k=2ff'\),
\begin{equation}
 c(v)-vc'(v)=\int_0^v tk(t)\,dt\ge\frac{v^2}{5}k(v).
 \label{eq:angle-curvature-scaling}
\end{equation}
Maximizing the quadratic in
\(k=2(1-w)(w-\delta_v)/r_v\) gives
\(k\le r_v^3/2<1/2\). Therefore
\[
 \frac{kD}{J''}-(c-vc')
 <k(v)\left(\frac v2-\frac{v^2}5\right)\le\frac5{32},
\]
which proves \eqref{eq:original-shape}.

%% file: verification.tex
\section{Canonical High-Correlation Comparisons}
\label{app:verification}\label{app:tail}

This appendix establishes the high-correlation case of
Lemma~\ref{lem:coverage} by verifying the baseline conditions
\eqref{eq:canonical-baseline-conditions} and the yield comparison
\eqref{eq:scalar-test} for \(\ell\ge7/2\).
After excluding large influences by the local criterion, we check small
heights and the two endpoints of the remaining height interval.
Appendix~\ref{app:endpoints} extends those endpoint checks to the
whole interval.

\subsection{Reduction to Three Comparisons}
\label{app:scalar-map}

Write \(r=\rho=\tanh\ell\), \(x=e^{-\ell}\), \(L=\log2\),
\(\theta=\arcsin r\), and \(P=\pi^2/4\). Recall
\begin{equation}
 \begin{split}
 \lambda&=\theta/r,\qquad
 \alpha=(\theta+\sinh\ell)/\ell,\\
 \gamma&=\lambda\alpha,\qquad d=\ell/2+5/4.
 \end{split}
 \label{eq:common-policy}
\end{equation}
As before, \(\weight(s)=\gamma s/(\gamma+s)\),
\(d_1=[\weight(d)-\weight(1)]r^2\),
\(d_2=[\weight(d)-\weight(2)]r^4\), and \(d_b=d_1-d_2/2\).
The retained-variance baseline is
\begin{align*}
 p&=(\lambda+1/\alpha)^2,& b&=\lambda^2(2+1/\gamma),\\
 C_d&=p\weight(d)-b,& B&=2\theta^2+C_d\phi(r)/L.
\end{align*}
The local criterion covers \(a\ge a_{\rm cut}=1/[1+(8/3)x]\).
A possible violation therefore has all active heights \(v<V\), where
\(F(V)=F(\ell)/a_{\rm cut}\).
Below we verify the baseline conditions, the LSI comparison through
\(z=\ell/2\), and the Parseval comparisons at \(z\) and \(V\).
The endpoint principle of
Appendix~\ref{app:endpoints} fills \([z,V]\) when \(V>z\);
otherwise the LSI comparison suffices.

\input{manual-local}
\input{canonical-tail-checks}

\subsection{Fixed Constants}
\label{app:fixed-arithmetic}

For \(0<y<14\), the fixed arithmetic uses
\begin{align*}
 E_{12}(y)&:=\sum_{j=0}^{12}\frac{y^j}{j!},\\
 E_{12}(y)&<e^y<E_{12}(y)+\frac{y^{13}}{13!(1-y/14)},
\end{align*}
and, for \(|y|<1\),
\begin{align*}
 \log\frac{1+y}{1-y}
 &=2\sum_{j=0}^{N-1}\frac{y^{2j+1}}{2j+1}+R_N(y),\\
 |R_N(y)|&\le\frac{2|y|^{2N+1}}{(2N+1)(1-y^2)}.
\end{align*}
Use \(y=(q-1)/(q+1)\) for \(\log q\); the low-correlation
arithmetic takes \(N=12\). Five terms at \(y=1/3\), and Machin's identity
\(\pi=16\arctan(1/5)-4\arctan(1/239)\), with four/five and
two/three alternating terms respectively, give
\[
                 3.14159<\pi<3.1416,\qquad .69314<\log2<.69315.
\]
In particular,
\(25/8<\pi<22/7\), \(9/4<P<5/2\), and
\(2/3<L<3/4\). The positive exponential sums give
\(e^3>20\), \(e^{7/2}>33\), and
\(3/5<e^{-1/2}<5/8\).
Square-root bounds follow by squaring positive rational endpoints.

%% file: manual-local.tex
\subsection{Local Cutoff}
\label{app:manual-local}\label{app:analytic-local}

For \(\ell\ge7/2\), put \(x=e^{-\ell}<1/32\) and
\begin{equation}
                 a_{\rm cut}=\frac{1}{1+(8/3)x}.
 \label{eq:rational-local-cutoff}
\end{equation}
Write \(h_b(t)=-t\log t-(1-t)\log(1-t)\) and set
\begin{align*}
 \kappa&=\frac4{3+8x},& \eta&=\frac{3x}{4(1+x^2)},\\
 t&=\frac{x^2}{1+x^2},& q&=\kappa x(1+\eta),\\
 A&=\frac1{2x^2}+\frac3{4x}+1+\frac{x^2}{2}.
\end{align*}
Thus \(t=(1-r)/2\), \(q=(1-r a_{\rm cut})/2\). The margin
\((1-r^2)h(r a_{\rm cut})-(1-a_{\rm cut}r^2)h(r)\), divided by the positive factor
\(4\kappa x^3/(1+x^2)^2\), is
\[
 \Lambda=\frac{h_b(q)}{\kappa x}-Ah_b(t).
\]
The reciprocal midpoint and trapezoid bounds
\[
 \frac{2s}{2+s}\le\log(1+s)\le s-\frac{s^2}{2(1+s)}
 \qquad(s\ge0)
\]
give \(h_b(q)\ge q\log(1/q)+q-q^2/(2-q)\) and
\((1+\eta)\log(1+\eta)\le\eta+\eta^2/2\). Hence
\begin{align*}
 \frac{h_b(q)}{\kappa x}\ge{}&
 (1+\eta)(\ell-\log\kappa)+1\\
 &-\frac{\eta^2}{2}-\frac{\kappa x(1+\eta)^2}{2-q}.
\end{align*}
Since \(\eta\le3x/4\), the last two errors satisfy
\begin{align*}
 \frac{\eta^2}{2}+\frac{\kappa x(1+\eta)^2}{2-q}
 &\le\frac9{32}x^2+\frac{x(4+3x)^2}{12(2+4x-x^2)}\\
 &\le\frac23x+\frac15x^2.
\end{align*}
Indeed, the rational term is at most \(2x/3-x^2/12\), because
\[
 (8-x)(2+4x-x^2)-(4+3x)^2=x(6-21x+x^2)>0
\]
for \(x<1/4\), and \(9/32-1/12<1/5\).
Using \(h_b(t)=\log(1+x^2)+2\ell x^2/(1+x^2)\), the upper
logarithm bound gives
\begin{align*}
 Ah_b(t)&\le\frac12+\frac34x+\frac34x^2
                 +\frac{2A\ell x^2}{1+x^2}.
\end{align*}
The discarded remainder is
\(x^3[2x(1+x^2)-3]/[8(1+x^2)]<0\).
Since \(1+\eta-2Ax^2/(1+x^2)=-x[3/(4(1+x^2))+x]\), subtraction gives
\begin{equation}
 \begin{split}
 \Lambda\ge{}&\frac12-(1+\eta)\log\kappa
 -\frac{17}{12}x-\frac{19}{20}x^2\\
 &-\ell x\left[\frac{3}{4(1+x^2)}+x\right].
 \end{split}
 \label{eq:manual-local-cancellation}
\end{equation}

The same logarithm bound gives \(\log(4/3)<7/24\), while
\(\ell e^{-\ell}\) decreases. Thus
\begin{align*}
 (1+\eta)\log\kappa
 &<\left(1+\frac3{128}\right)\frac7{24}<\frac13,\\
 \frac{17}{12}x+\frac{19}{20}x^2
 &<\frac{17}{384}+\frac{19}{20480}<\frac1{20},\\
 \ell x\left[\frac3{4(1+x^2)}+x\right]
 &<\frac7{64}\left(\frac34+\frac1{32}\right)<\frac3{32}.
\end{align*}
Therefore \(\Lambda>1/2-1/3-1/20-3/32=11/480>0\).
The local margin is concave in \(a\) and vanishes at \(a=1\),
so \eqref{eq:local} holds throughout \([a_{\rm cut},1]\).

%% file: canonical-tail-checks.tex
\subsection{Canonical Penalty and Preliminary Bounds}
\label{app:canonical-prerequisites}

The fixed bounds in Section~\ref{app:fixed-arithmetic} give
\(x<1/33\), \(r>99/100\), \(\theta>3/2\), and
\(\lambda^2<P<5/2\), using \(\arcsin r<\pi r/2\).
Since \(\theta+\sinh\ell>1/(2x)\),
\begin{equation}
 \gamma>\frac{3e^\ell}{4\ell}>\ell+\frac72=2d+1.
 \label{eq:canonical-penalty-lower}
\end{equation}
The second inequality follows because \(3e^\ell-4\ell^2-14\ell\)
and its first two derivatives are positive on this half-line.

The exact identities
\begin{equation}
 \begin{split}
 p\weight(1)-b&=-\lambda^2,\\
 p\weight(2)-b&=-\frac{\lambda^2}{\gamma+2},\\
 C_d&=\lambda^2\left[d-2-\frac{(d-1)^2}{\gamma+d}\right].
 \end{split}
 \label{eq:canonical-multipliers}
\end{equation}
give \(C_d>0\), since \(\gamma(d-2)>1\), and
\(C_d<(5/2)(d-2)\), \(b<(5/2)(2+1/7)<6\).
Since \(0<1-1/(2L)<1/3\), the mean condition follows from
\begin{align*}
 \gamma-b-C_d\left(1-\frac1{2L}\right)
 &>\ell+\frac72-6-\frac56\left(\frac\ell2-\frac34\right)\\
 &=\frac16+\frac7{12}\left(\ell-\frac72\right)>0.
\end{align*}
The endpoint-principle coefficient satisfies
\begin{align*}
 \frac{p d_2}{2r}
 &=\frac p2(d-2)\frac\gamma{\gamma+d}
                   \frac\gamma{\gamma+2}r^3\\
 &>\frac3{16}>\frac5{32},
\end{align*}
using \(p/2>1\), \(d-2\ge1\), both fractions \(>1/2\), and
\(r^3>3/4\). Also \(0<d_2<d_1\), hence \(d_b>0\).

\paragraph{Entropy estimate.}
Put \(E=h(r)-L(1-r^2)\ge0\), using \(\phi(r)\le Lr^2\).
For every \(\ell\ge3/2\),
\begin{equation}
                     E\le x^2(2\ell-5/3).
 \label{eq:shared-entropy-error}
\end{equation}
Indeed, the entropy formula, \(q=x^2\), and
\(\log(1+q)/q\le(2+q)/[2(1+q)]\) give
\[
 2\ell-\frac E q-\frac53
 \ge\frac{4(L-2/3)-11q/6+5q^2/6}{(1+q)^2}>0.
\]
Its numerator exceeds \(8/81-11/120>0\), since
\(L-2/3>2/81\) and \(q<1/20\).

\subsection{Small Heights: Paired Spectral Costs}
\label{app:manual-small}

Let \(z=\ell/2\). The line-angle LSI comparison is
\begin{align*}
 M_{\rm small}&=\frac B\ell-r
       -\frac{pK_{\rm LS}r^2F(\ell)}{zF(z)},\\
 K_{\rm LS}&=\frac{\gamma^2 e^{2d-3}}{2(\gamma+d)(\gamma+1)}.
\end{align*}
Set \(a_d=d-1\), \(k=e^{-1/2}\),
\(T=e^\ell F(\ell)/F(z)\), and \(f_d=C_d/\lambda^2\).
By \eqref{eq:canonical-multipliers} and \(\phi(r)/L=r^2-E/L\),
\begin{equation}
 \begin{split}
 \frac{\ell M_{\rm small}}{\theta^2}
 ={}&d-\frac\ell{c(\ell)}-kT\\
 &-\frac{a_d(a_d-kT)}{\gamma+d}-\frac{f_dE}{Lr^2}.
 \end{split}
 \label{eq:canonical-small-pairing}
\end{equation}
The bounds \(t/(1+t)\le\log(1+t)\) and
\((1+t)\log(1+t)\le t+t^2/2\) give
\begin{align*}
 \frac{2\ell+1}{(\ell+1+x/2)(1+x)}
 &\le T\le\frac{2\ell+1+x^2/2}{(\ell+1)(1+x)}\\
 &\le2-\frac1{\ell+1}.
\end{align*}
The lower bound exceeds \(5/3\), because
\[
 \ell-2-5\ell x-\frac{15}2x-\frac52x^2
 \ge\frac32-\frac{25}{32}-\frac5{2048}>0.
\]
Here the numerator increases with \(\ell\) and decreases with \(x\).
Using \(3/5<k<5/8\) and \(c(\ell)>\theta^2>9/4\),
\begin{align*}
 1&<kT<\frac58\left(2-\frac1{\ell+1}\right),\\
 d-\frac\ell{c(\ell)}-kT
 &>\frac\ell{18}+\frac5{8(\ell+1)}\ge\frac13,
\end{align*}
the last difference from \(1/3\) being
\((2\ell-3)(2\ell-7)/[72(\ell+1)]\).
Also \(\gamma+d>4(d-1)(d-2)\): with \(s=\ell-7/2\ge0\),
the cubic Taylor lower bound gives
\begin{gather*}
 3e^\ell-(4\ell^3-6\ell^2-8\ell)\\
 >29+2s+\frac{27}2s^2+\frac{25}2s^3>0.
\end{gather*}
Together with \eqref{eq:canonical-penalty-lower}, this yields
\[
 \frac{a_d(a_d-kT)}{\gamma+d}
 <\frac{a_d(a_d-1)}{\gamma+d}<\frac14.
\]
Finally, \eqref{eq:shared-entropy-error} gives
\[
 \frac{f_dE}{Lr^2}<2(d-2)(2\ell-5/3)e^{-2\ell}<\frac1{96},
\]
using \(f_d<d-2\), \(Lr^2>1/2\), and monotonicity of the
product, whose value at \(7/2\) is below \(2(16/3)/32^2=1/96\).
Thus
\[
 M_{\rm small}>\frac{\theta^2}{\ell}
                      \left(\frac13-\frac14-\frac1{96}\right)
                =\frac{7\theta^2}{96\ell}>0.
\]
Since \(vF(v)\) decreases and \(c(v)\le v\), this covers
\(0<v\le z\) and gives \(B/\ell>r\).

\subsection{Half-Height Join and Angular Saving}
\label{app:joining-height}\label{app:manual-join-direct}

Write \(a_z=F(\ell)/F(z)\). The original Parseval margin at the
join has the exact form
\begin{align*}
 M_{\rm join}
 &:={Bz}/{\ell}-r c(z)-p[d_2/2+d_ba_z]\\
 &=r[c(\ell)-c(z)]-\frac{\lambda^2r^4}{2(\gamma+2)}
   +\frac{C_d}{2}\left[1-r^4-\frac{h(r)}L\right]\\
 &\quad-p d_ba_z.
\end{align*}
The multiplier identities also give
\begin{equation}
 \begin{split}
 p d_b={}&C_d(r^2-r^4/2)
       +\lambda^2\left[r^2-\frac{r^4}{2(\gamma+2)}\right]\\
       &\le C_d/2+P\le Pd/2.
 \end{split}
 \label{eq:canonical-db-pairing}
\end{equation}

To prove that \(e^v[P-c(v)]\) increases, write
\(t=\arcsin(\tanh v)\). Then
\begin{align*}
 &(t+\cos t)^2-c(v)-c'(v)\\
 &\quad=(1-\sin t)(t/\sin t-\cos t)^2
 +\sin t\cos^2t>0,
\end{align*}
and \(t+\cos t<\pi/2\), so \(c+c'<P\).
At \(v_0=\log(10/3)\), \(4/\pi>5/4\) and
\(\arctan(3/10)\ge3/10-(3/10)^3/3\) give
\[
 2t/\pi<16/25,\qquad
 c(v_0)/P<(16/25)^2(109/91)<1/2.
\]
Since \(z\ge7/4>v_0\), \(P-c(z)>(5P/3)\sqrt x\).
Combining this with
\(c(\ell)\ge\theta^2\ge P-2\pi x\) gives
\[
                 c(\ell)-c(z)>\frac{5P}{3}\sqrt x-2\pi x.
\]
Using \(h(r)<(2\ell+1)x^2\), \(a_z=xT<2x\),
\eqref{eq:canonical-db-pairing}, and
\eqref{eq:canonical-penalty-lower}, and dropping
\(C_d(1-r^4)/2>0\), gives
\begin{align*}
 \frac{M_{\rm join}}{P\sqrt x}
 >{}&\frac{5r}{3}
 -\left(\frac76\ell+\frac54+\frac8\pi\right)e^{-\ell/2}\\
 &-\frac{(d-2)(2\ell+1)}{2L}e^{-3\ell/2}.
\end{align*}
The first term increases and both losses decrease. At \(7/2\),
\(r>99/100\), \(\sqrt x<7/40\), \(8/\pi<8/3\), and
\(2L>4/3\) bound the losses by \(8(7/40)\) and
\(6(7/40)^3<1/25\). Hence
\[
             \frac{M_{\rm join}}{P\sqrt x}
                  >\frac{33}{20}-\frac75-\frac1{25}
                   =\frac{21}{100}>0.
\]

\input{manual-cutoff-common}

%% file: manual-cutoff-common.tex
\subsection{Cutoff Bound with Exact Low-Degree Multipliers}
\label{app:manual-cutoff-common}\label{app:pivotal-tail-extension}

Let \(F(V)=F(\ell)/a_{\rm cut}\). By
\eqref{eq:analytic-profile-speed},
\[
                  0<\ell-V<\frac53x.
\]
The Parseval margin is
\[
 M_{\rm cut}=\frac B\ell V-r c(V)
                  -p[d_2/2+d_ba_{\rm cut}].
\]
The identity
\[
 B-pd_1-r c(\ell)=C_d[\phi(r)/L-r^2]=-C_dE/L
\]
and \(c'>0\) give
\[
 M_{\rm cut}\ge-C_dE/L+p d_b(1-a_{\rm cut})
                              -\frac B\ell(\ell-V).
\]
Using \(B\le2\lambda^2r^2+C_d\), the equality in
\eqref{eq:canonical-db-pairing}, and \eqref{eq:shared-entropy-error},
\begin{align*}
 \frac{M_{\rm cut}}x>{}&C_d\left[
 \frac{8r^2(1-r^2/2)}{3+8x}-\frac{(2\ell-5/3)x}{L}
                                      -\frac5{3\ell}\right]\\
 &+\lambda^2r^2\left[
 \frac8{3+8x}\left(1-\frac{r^2}{2(\gamma+2)}\right)
                                      -\frac{10}{3\ell}\right].
\end{align*}
Since \(r^2>3/4\), \(x<1/32\), and
\((2\ell-5/3)e^{-\ell}\) decreases,
\begin{align*}
 \frac{8r^2(1-r^2/2)}{3+8x}&>\frac{32}{13}\frac{15}{32}>1,\\
 \frac{(2\ell-5/3)x}{L}&<\frac{16}{3}\frac1{32}\frac32=\frac14,\\
 \frac5{3\ell}&\le\frac{10}{21}<\frac12.
\end{align*}
Thus the first bracket exceeds \(1/4\); the second exceeds
\(2(3/4)-1=1/2\), using \(\gamma>7\) and
\(10/(3\ell)\le20/21<1\). Therefore
\[
              M_{\rm cut}>x\left(\frac{C_d}{4}
                                  +\frac{\lambda^2r^2}{2}\right)>0.
\]
Together with the local cutoff and the endpoint principle, this
completes the height comparison for every \(\ell\ge7/2\).